\DeclareMathOperator*{\nn}{\nonumber}
\newcommand{\RNum}[1]{\uppercase\expandafter{\romannumeral #1\relax}}
\newtheorem{lemma}{Lemma}
\newtheorem{theorem}{Theorem}
\newtheorem{corollary}{Corollary}
\theoremstyle{definition}
\newtheorem{remark}{Remark}
\newtheorem{definition}{Definition}
\def\blfootnote{\gdef\@thefnmark{}\@footnotetext}
\begin{document}
\title{A Single-Letter Upper Bound on the Feedback Capacity of Unifilar Finite-State Channels}
\author{Oron Sabag, Haim H. Permuter and Henry D. Pfister}


\maketitle
\begin{abstract}
An upper bound on the feedback capacity of unifilar finite-state channels (FSCs) is derived. A new technique, called the $Q$-contexts, is based on a construction of a directed graph that is used to quantize recursively the receiver's output sequences to a finite set of \textit{contexts}. For any choice of $Q$-graph, the feedback capacity is bounded by a single-letter expression, $C_\text{fb}\leq \sup I(X,S;Y|Q)$, where the supremum is over $P_{X|S,Q}$ and the distribution of $(S,Q)$ is their stationary distribution. It is shown that the bound is tight for all unifilar FSCs where feedback capacity is known: channels where the state is a function of the outputs, the trapdoor channel, Ising channels, the no-consecutive-ones input-constrained erasure channel and for the memoryless channel. Its efficiency is also demonstrated by deriving a new capacity result for the dicode erasure channel (DEC); the upper bound is obtained directly from the above expression and its tightness is concluded with a general sufficient condition on the optimality of the upper bound. This sufficient condition is based on a fixed point principle of the BCJR equation and, indeed, formulated as a simple lower bound on feedback capacity of unifilar FSCs for arbitrary $Q$-graphs. This upper bound indicates that a single-letter expression might exist for the capacity of finite-state channels with or without feedback based on a construction of auxiliary random variable with specified structure, such as $Q$-graph, and not with i.i.d distribution. The upper bound also serves as a non-trivial bound on the capacity of channels without feedback, a problem that is still open.
\end{abstract}
\begin{IEEEkeywords}
Converse, dicode erasure channel, feedback capacity, finite state channels, trapdoor channel, unifilar channels, upper bound.
\end{IEEEkeywords}
\section{Introduction}\label{sec:intro}
\blfootnote{The work of O. Sabag and H. H. Permuter was partially supported by the European Research Council (ERC) starting grant and the Joint UGC-ISF research grant. This paper will be presented at the 2016 IEEE International Symposium on Information Theory, Barcelona, Spain. O. Sabag and H. H. Permuter are with the department of Electrical and Computer Engineering, Ben-Gurion University of the Negev, Beer-Sheva, Israel (oronsa@post.bgu.ac.il, haimp@bgu.ac.il). H. D. Pfister is with the department of Electrical and Computer Engineering, Duke University, Durham, USA (henry.pfister@duke.edu).}
A finite-state channel (FSC) is a mathematical model for channels with memory that has been applied to wireless communications and magnetic recording. In this model, the channel memory is encapsulated in a state which takes values from a finite set. A FSC is described by a state-dependent channel and a transition probability of the channel state conditioned on the input, the output and the previous channel state. In this paper, we focus on unifilar FSCs with feedback, as described in Fig. \ref{fig:FSC}, where the new channel state is a time-invariant function of the previous state, the current input and the current output.

The feedback capacity of FSCs has been investigated in \cite{PermuterWeissmanGoldsmith09,DaboraGoldIndecomposable,TatikondaMitter_IT09} and still has no closed form expression. For the special case of unifilar FSCs, it was shown in \cite{PermuterCuffVanRoyWeissman08} that the feedback capacity is:
\begin{align}\label{eq:intro_cap}
C_{\text{fb}} &= \lim_{N\rightarrow \infty} \sup_{\{p(x_t|s_{t-1},y^{t-1})\}_{t=1}^N} \frac{1}{N} \sum_{i=1}^N   I(X_i,S_{i-1};Y_i|Y^{i-1}).
\end{align}
As can be seen from the capacity formula, this capacity expression is very hard to compute in a straightforward manner. However, it was shown in \cite{PermuterCuffVanRoyWeissman08,TatikondaMitter_IT09} that the capacity can be formulated as a dynamic programming (DP) optimization problem; this has benefits such as efficient algorithms for estimating the capacity and analytical tools for calculating capacity.

The relationship between the feedback capacity of FSCs and DP first appeared in Tatikonda's thesis \cite{Tatikonda00}. The need for this formulation arises from difficulties in the computability of the capacity expression as can be seen in \eqref{eq:intro_cap}. In \cite{Chen05}, a DP formulation of a sub-family of unifilar FSCs was given, where the state can be computed at the decoder. It was shown that the DP can be analytically solved under mild conditions on the channel, resulting in a computable capacity expression. DP formulations of feedback capacities appeared also for channels where the state is determined by the inputs \cite{Yang05}, Markov channels \cite{TatikondaMitter_IT09} and Gaussian channels with stationary noise \cite{YangKavcicTatikondaGaussian}.

A typical approach for solving DP problems is the well-known Bellman equation. Loosely speaking, one should find a constant and a function which satisfy some fixed point equation; the constant is then the optimal reward (equivalent to the feedback capacity). This approach led to explicit capacity expressions for the trapdoor channel \cite{PermuterCuffVanRoyWeissman08}, the Ising channel \cite{Ising_channel,Ising_artyom}, the input-constrained erasure channel \cite{Sabag_BEC} and the input-constrained binary symmetric channel \cite{sabag_allerton_15}. The difficulty in the Bellman equation based approach lies in finding the function that satisfies this equation.

\begin{figure}[t]
\centering
    \psfrag{E}[][][.95]{Encoder}
    \psfrag{D}[][][.95]{Decoder}
    \psfrag{C}[b][][.9]{$p(y_t|x_t,s_{t-1})$}
    \psfrag{F}[t][][.9]{$s_t\mspace{-5mu}=\mspace{-5mu}f(y_t,x_t,s_{t-1})$}
    \psfrag{V}[][][.78]{Unit-Delay}
    \psfrag{M}[][][1]{$m$}
    \psfrag{Y}[][][1]{$y_t$}
    \psfrag{O}[][][1]{$\hat{m}$}
    \psfrag{Z}[][][1]{$y_{t-1}$}
    \psfrag{X}[][][1]{$x_t$}
    \includegraphics[scale = 0.5]{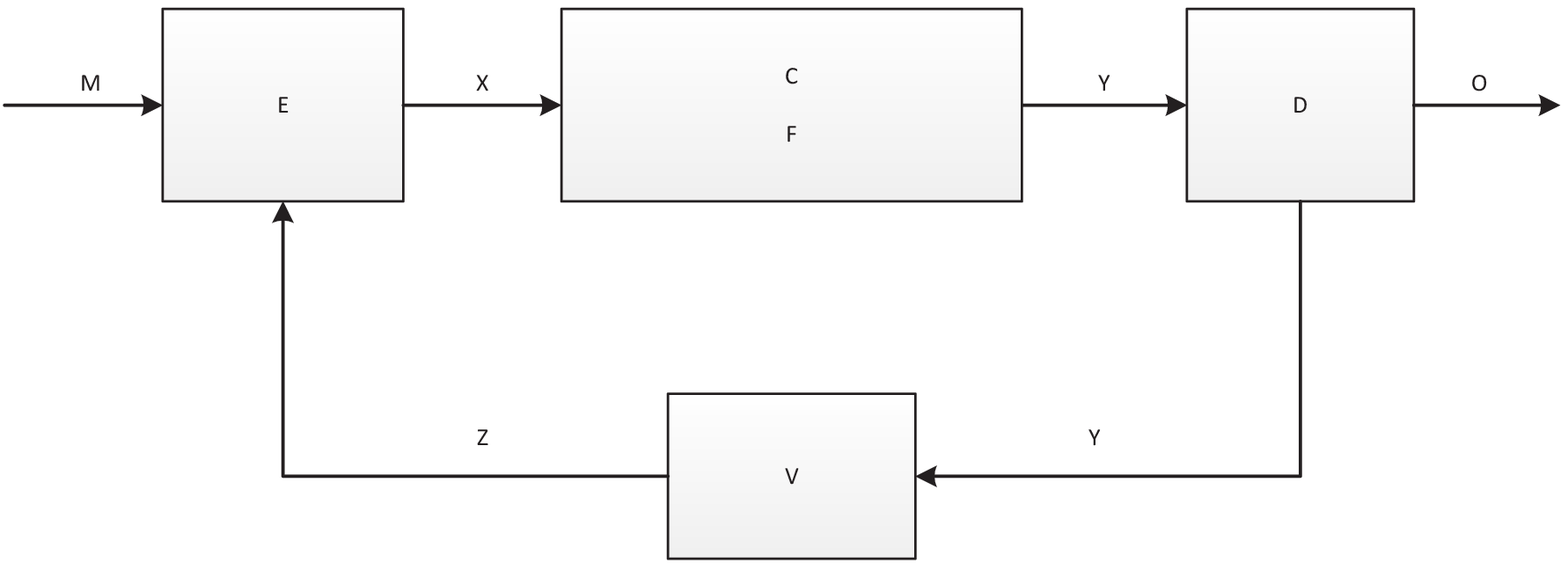}
    \caption{Unifilar FSC with feedback.}
    \label{fig:FSC}
\end{figure}

Nevertheless, computer-based simulations of DP provide bold insights into the feedback capacity expression. Specifically, implementation of the value iteration algorithm together with DP simulation give an analytic estimation of the capacity in quite a straightforward manner. This estimated value is, indeed, a lower bound on the feedback capacity, so it then remains to provide the corresponding upper bound. Therefore, in order to bypass the Bellman equation approach, one should find an alternative tool for calculating upper bounds on the feedback capacity. Our main result is a derivation of an upper bound on the feedback capacity of unifilar FSCs.

The derivation of the upper bound is initiated with an almost trivial inequality which is given for some deterministic mapping $\Phi_{i-1}$:
\begin{align*}
H(Y_i|Y^{i-1})&= H(Y_i|Y^{i-1},\Phi_{i-1}(Y^{i-1}))\\
              &\leq H(Y_i|\Phi_{i-1}(Y^{i-1})).
\end{align*}
The equality follows from the fact that $\Phi_{i-1}(\cdot)$ is a deterministic function of $Y^{i-1}$, and the inequality follows from the fact that conditioning reduces entropy. It happens that a naive choice of the mapping $\Phi_{i-1}$ might result in an equality in this upper bound; for example, if the mapping $\Phi_{i-1}$ returns a constant then the above inequality reveals the familiar converse for the capacity of memoryless channels. For non-trivial FSCs, this naive choice results an upper bound which is not tight. Throughout this paper, it will be shown that a structured mapping might improve the upper bound performance.


Our derivation is based on a new technique called the $Q$-contexts where the set of mappings $\{\Phi_i\}_{i>1}$ transforms the history of the output process into a Markov chain. Specifically, the $Q$-contexts is a mapping which is described by a directed graph where each node corresponds to a \textit{context}, and the outgoing edges per node are labelled with all possible channel outputs. Then, given an initial context (a node on the graph), a walk on the graph results in a unique mapping of output sequences onto nodes. Thus, the $Q$-graph describes a quantization of the output sequences into $Q$-context sequences.

Utilization of the $Q$-contexts technique for the capacity expression \eqref{eq:intro_cap} leads to our main result, a single-letter upper bound on the feedback capacity of unifilar FSCs:
\begin{align}\label{eq:intro_bound}
C_{\text{fb}}\leq \sup_{P_{X|S,Q}} I(X,S;Y|Q),
\end{align}
for all $Q$-graphs, where the distribution of $(S,Q)$ is determined by their stationary distribution. It is shown that the upper bound is tight for all unifilar FSCs where the feedback capacity is already known and, obviously, for the memoryless channel with feedback. Therefore, the derived upper bound also provides a unified expression for all feedback capacities known so far. This result provides hope that the feedback capacity of general FSCs might be characterized by a single-letter expression, which would be quite surprising.

Throughout the paper, we demonstrate that the bound is tight for a proper choice of the $Q$-graph for any channel where the state is computable at the decoder \cite{Chen05}, the trapdoor channel \cite{PermuterCuffVanRoyWeissman08}, the input-constrained binary erasure channel (BEC) \cite{Sabag_BEC} and Ising channels \cite{Ising_channel,Ising_artyom}. These derivations also serve as an easily implemented and alternative converse proof for these capacity results.

It is also demonstrated that the upper bound can be used to derive new capacity results, such as the dicode erasure channel (DEC). The DEC is a quantized version of the known dicode channel with additive white Gaussian noise (AWGN), which was studied in \cite{PfitserLDPC_memory_erasure,PfitserTurbo_MAP}. DP-based simulations for this channel show that the optimal policy only visits a small (i.e., finite) subset of the state space and the actions associated with those states are unconstrained. Since actions are unconstrained, the solution of the Bellman equation is very challenging, if not impossible. However, since the visited state space is finite, it is possible to extract a $Q$-graph from this simulation and to derive a simple upper bound on the capacity. Its tightness is then concluded with a sufficient condition that is derived for the upper bound \eqref{eq:intro_bound}.

The sufficient condition is based on an invariant-property of the BCJR equation for the channel state estimation, $p(s_t|y^t)$. Roughly speaking, the condition states that if a $Q$-graph and some input distribution $P_{X|S,Q}$ induce that the state estimation depends on the context, $\Phi_t(y^t)$, and not on the sequence $y^t$, then $I(X,S;Y|Q)$ is an achievable rate. This condition is easy to verify using the BCJR forward-recursive equation for unifilar FSCs and may be exploited in two ways: the first is a verification that a certain upper bound is tight, as is done for the DEC, while the second is a simple and calculable lower bound for an arbitrary $Q$-graph and input $P_{X|S,Q}$ that satisfy the condition.

The remainder of the paper is organized as follows. Section \ref{sec:prelimi} includes notation definitions and required preliminaries. Section \ref{sec:main} states our main result on the upper bound and the sufficient condition for the tightness of this bound. In Section \ref{sec:examples}, several examples of unifilar FSCs are studied and it is shown that the upper bound is tight. In Section \ref{sec:proof}, we provide a detailed proof of our main result and, finally, the paper is concluded in Section \ref{sec:conclusions}.

\section{Notation and Preliminaries}\label{sec:prelimi}
Random variables are denoted by upper-case letters, such as $X$, while realizations are denoted by lower-case letters, e.g., $x$. Calligraphic letters, e.g., $\mathcal{X}$, denote sets. We use $X^{n}$ to denote the $n$-tuple $(X_{1},\dots,X_{n})$ and $x^n$ to denote vectors of $n$ elements, i.e., $x^n = (x_1, x_2, ..., x_n)$. The binary entropy is denoted by $H_2(\alpha)=-\alpha\log_2\alpha-(1-\alpha)\log_2(1-\alpha)$, where $\alpha\in[0,1]$. Finally, $H_3(\alpha_1,\alpha_2)=-\alpha_1\log_2\alpha_1-\alpha_2\log_2\alpha_2-(1-\alpha_1-\alpha_2)\log_2(1-\alpha_1-\alpha_2)$ denotes the ternary entropy function for scalars $\alpha_1,\alpha_2\in[0,1]$ satisfying $\alpha_1+\alpha_2 \leq 1$. The quadrature entropy function, $H_4(\alpha_1,\alpha_2,\alpha_3)$, is defined in a similar manner.
\subsection{Unifilar state channels}\label{app:unifilar}
A \textit{finite state channel} is defined by the triplet $(\mathcal{X}\times\mathcal{S},p(y,s|x,s') ,\mathcal{Y}\times\mathcal{S})$ where $X$ is the channel input, $Y$ is the channel output, $S'$ is the channel state at the beginning of the transmission and $S$ is the channel state at the end of the transmission. The cardinalities of $X,Y,S$ are assumed to be finite. At each time $t$, the channel has the memoryless property
\begin{equation*}
  p(s_t,y_t|x^t,s^{t-1},y^{t-1})=p(s_t,y_t|x_t,s_{t-1}).
\end{equation*}
An FSC is called \textit{unifilar} if the new channel state, $s_t$, is a time-invariant function of the triplet $s_t=f(x_t,y_t,s_{t-1})$.

The input to the channel at time $t$, $x_t$, depends both on the message $m$ and on the output tuple $y^{t-1}$. A unifilar channel is \textit{strongly connected} if for all $s,s'\in\mathcal{S}$, there exist $T$ and $\{p(x_t|s_{t-1})\}_{t=1}^T$ such that $\sum_{t=1}^T p(S_t=s|S_0=s') > 0$. It is also assumed that the initial state, $s_0$, is available to both the encoder and the decoder.

The capacity of the unifilar FSC is given by the following theorem:
\begin{theorem}\label{theorem:capacity_unifilar}[Theorem $1$, \cite{PermuterCuffVanRoyWeissman08}]
The feedback capacity of a strongly connected unifilar state channel, where $s_0$ is available to both to the encoder and the decoder, can be expressed by
\begin{align*}
C_{\text{fb}} &= \lim_{N\rightarrow \infty} \sup_{\{p(x_t|s_{t-1},y^{t-1})\}_{t=1}^N} \frac{1}{N} \sum_{i=1}^N   I(X_i,S_{i-1};Y_i|Y^{i-1}).
\end{align*}
\end{theorem}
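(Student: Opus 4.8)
The plan is to establish the two standard halves of a capacity theorem---a converse (upper bound) and a direct part (achievability)---and in both directions the crux is an algebraic identity that replaces the entire past in the conditioning by the single pair $(X_i,S_{i-1})$. Two structural facts would be used throughout: (i) since $s_0$ is known and the channel is unifilar, $S_{i-1}$ is a \emph{deterministic} function of $(S_0,X^{i-1},Y^{i-1})$, hence of $(M,Y^{i-1})$; and (ii) the memoryless property $p(y_i|x^i,s^{i-1},y^{i-1})=p(y_i|x_i,s_{i-1})$ guarantees that once $(X_i,S_{i-1})$ is conditioned on, $Y_i$ is independent of everything in the past.

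For the converse, I would start from Fano's inequality: for any length-$N$ feedback code with vanishing error probability, $NR \le I(M;Y^N)+N\epsilon_N$. Expanding by the chain rule gives $I(M;Y^N)=\sum_{i=1}^N [H(Y_i|Y^{i-1})-H(Y_i|Y^{i-1},M)]$. Since $X_i$ is a function of $(M,Y^{i-1})$ and, by fact (i), so is $S_{i-1}$, I may append $(X_i,S_{i-1})$ to the conditioning in the second term without changing it; fact (ii) then collapses it to $H(Y_i|X_i,S_{i-1})=H(Y_i|Y^{i-1},X_i,S_{i-1})$. This yields the exact per-letter identity $I(M;Y_i|Y^{i-1})=I(X_i,S_{i-1};Y_i|Y^{i-1})$, so $NR\le \sum_i I(X_i,S_{i-1};Y_i|Y^{i-1})+N\epsilon_N$. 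Dividing by $N$, letting $\epsilon_N\to 0$, and noting that each summand depends on the code only through the induced conditionals $p(x_t|s_{t-1},y^{t-1})$ (obtained by marginalizing the message), I bound the right-hand side by the supremum over that class and take $N\to\infty$.

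For achievability I would invoke the general directed-information characterization of the feedback capacity of FSCs, $C_\text{fb}=\lim_{N}\tfrac1N\sup I(X^N\to Y^N)$, where $I(X^N\to Y^N)=\sum_i I(X^i;Y_i|Y^{i-1})$ and the supremum is over causal conditioning distributions $p(x^N\|y^{N-1})$; the strong-connectivity hypothesis is what guarantees this limit exists and is insensitive to the initial state. The same two facts simplify each directed-information term: writing $I(X^i;Y_i|Y^{i-1})=H(Y_i|Y^{i-1})-H(Y_i|Y^{i-1},X^i)$ and using that $S_{i-1}$ is a function of $(S_0,X^{i-1},Y^{i-1})$ together with (ii) gives $I(X^i;Y_i|Y^{i-1})=I(X_i,S_{i-1};Y_i|Y^{i-1})$. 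It then remains to argue that an optimizing causal-conditioning policy may be taken in the restricted form $p(x_t|s_{t-1},y^{t-1})$, which follows because the objective depends on the policy only through the joint law of $(S^N,X^N,Y^N)$, and that law is reproduced by the Markov policy obtained from marginalization.

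The main obstacle is the achievability direction: everything rests on the directed-information capacity theorem for FSCs and its accompanying random-coding argument (a joint-typicality/AEP analysis adapted to directed information), together with the limiting machinery---super-additivity of $N\sup I(X^N\to Y^N)$ and Fekete's lemma to guarantee the $\tfrac1N$ limit, and the connectivity hypotheses needed to wash out the dependence on $s_0$. The algebraic simplifications (i)--(ii) are routine once these ingredients are in place; the genuinely delicate point is justifying the reduction of the input strategy to the Markov form $p(x_t|s_{t-1},y^{t-1})$ with no loss of optimality, which requires the marginalization/induction argument showing the state--output marginal is preserved.
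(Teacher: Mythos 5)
The paper does not prove this statement: it is quoted verbatim as Theorem 1 of \cite{PermuterCuffVanRoyWeissman08} and used as a black box, so there is no in-paper proof to compare yours against. Your outline is consistent with how the result is established in that reference --- Fano's inequality plus the per-letter identity $I(M;Y_i\mid Y^{i-1})=I(X_i,S_{i-1};Y_i\mid Y^{i-1})$ (valid because $X_i$ and, by unifilarity with known $s_0$, $S_{i-1}$ are deterministic functions of $(M,Y^{i-1})$, and the channel is memoryless given $(X_i,S_{i-1})$) for the converse, and the directed-information characterization for achievability. Be aware, though, that as written it is a plan rather than a proof: the direct part defers entirely to the FSC random-coding theorem and to the super-additivity/strong-connectivity arguments that give existence of the limit and independence from $s_0$, and the converse still requires the recursive (BCJR-type) induction showing that $p(y^{i-1},s_{i-1},x_i)$ --- hence each summand --- is determined by the conditionals $\{p(x_t\mid s_{t-1},y^{t-1})\}$ alone; you correctly identify both as the delicate points, but neither is carried out.
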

\subsection{Directed graphs}
A directed graph is defined by three sets of nodes, edges and labels. Node $i$ is said to \textit{communicate} with node $j$ if there exists a path from $i$ to $j$. This definition leads to an equivalence relation: two nodes $i,j$ lie in the same \textit{communicating class} if $i$ communicates with $j$ and vice versa. A communicating class is said to be closed if there are no outgoing edges from this class. A graph is \textit{irreducible} if all nodes in the graph communicate.

For a closed communicating class, the \textit{period} of a node is defined as the $\gcd$ of all natural numbers, $n$, such that there is a loop to this node with length $n$. It can be shown that the period is a class property, that is, all nodes in a closed communicating class have equal periods. A closed class is \textit{aperiodic} if it has a period of $1$.

One useful property of irreducible graphs with period $D$ is that the graph can be partitioned uniquely into $D$ disjoint subsets $A_0, A_1,\dots,A_{D-1}$ on a cycle, i.e., all edges from $A_i$ lead to $A_{(i+1)\mod D}$.
\subsection{$Q$-contexts mapping}\label{subsec:main_contexts}
The upper bound in this paper is based on the inequality:
\begin{align*}
H(Y_i|Y^{i-1})&\leq H(Y_i|\Phi_{i-1}(Y^{i-1})), \ \ i\in\mathbb{N},
\end{align*}
which holds for any set of mappings $\Phi_{i-1}:\mathcal{Y}^{i-1}\rightarrow \mathcal{Q}$. The \textit{context} of the sequence $y^{i-1}$ is defined as $q_{i-1} \triangleq \Phi_{i-1}(y^{i-1})$.

Our interest is limited to the set of mappings which can be described by a time-invariant function $g:\mathcal{Q}\times \mathcal{Y}\rightarrow\mathcal{Q}$, where $\Phi_i (y^i) = g(\Phi_{i-1}(y^{i-1}),y_i)$ for all $i$. The \textit{$Q$-contexts} mapping is defined by a function $g(\cdot,\cdot)$ or, equivalently, by a $Q$-graph with $|\mathcal{Q}|$ nodes, each taking a realization $q\in Q$; an edge $q\rightarrow q'$ with label $y$ exists if $q'=g(q,y)$. It is assumed that the $Q$-graph is finite and irreducible. These definitions imply that each node in the $Q$-graph has $|\mathcal{Y}|$ outgoing edges. An example for a $Q$-graph is illustrated in Fig. \ref{fig:BEC_Q}.
\begin{figure}[t]
\centering
    \psfrag{Q}[][][1]{$y=1$}
    \psfrag{E}[][][1]{$y=0$}
    \psfrag{F}[][][1]{$y=?$}
    \psfrag{O}[][][1]{$y=0/?/1$}
    \psfrag{L}[][][1]{$q_2$}
    \psfrag{H}[][][1]{$q_1$}
    \includegraphics[scale = 0.5]{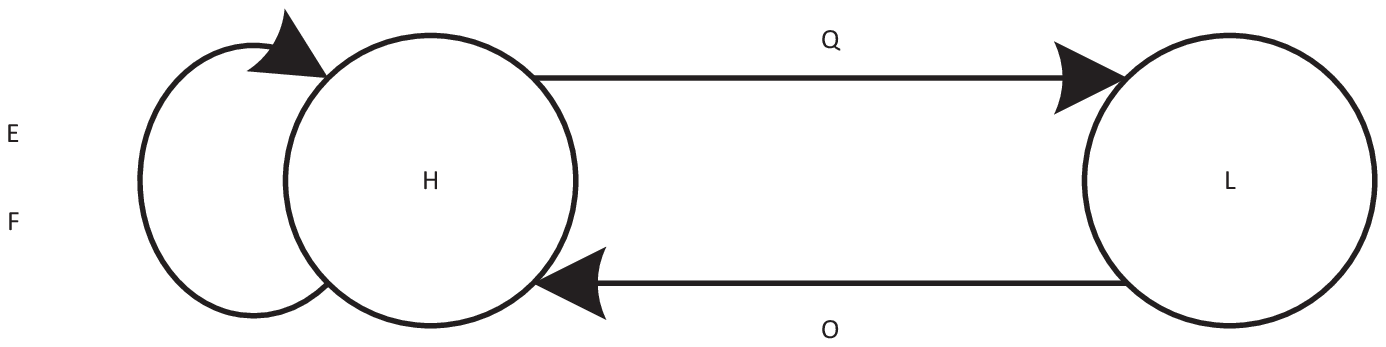}
    \caption{An example for a $Q$-graph with $|\mathcal{Q}|=2$, and $\mathcal{Y}=\{0,1,?\}$.}
    \label{fig:BEC_Q}
\end{figure}

The following step is to embed the FSC characterization into the $Q$-graph. This is done by constructing a new directed graph which includes the entire information on the $Q$-graph and the channel states evolution. A \textit{$(S,Q)$-coupled graph} is constructed as follows:
\begin{enumerate}
  \item Each node in the $Q$-graph is split into $|\mathcal{S}|$ and represented by a pair $(s,q)\in\mathcal{S}\times\mathcal{Q}$.
  \item An edge $(s,q)\rightarrow(s',q')$  with a label $(x,y)$ exists if and only if there exists a pair $(x,y)$ such that $s'=f(s,x,y)$, $q'=g(q,y)$, and $p(y|x,s)>0$.
\end{enumerate}

The coupled graph might have more than a single closed communicating class. It is then clear that if an initial pair $(s_0,q_0)$ lies in some closed communicating class, than all other classes will never be reached. Recall that $s_0$ is given by the problem, while the initial context $q_0$ is subject to any choice. The following lemma formalizes a few properties of the $(S,Q)$-graph that simplify our analysis.
\begin{lemma} \label{lemma:coupled}
There exists at least one closed communicating class in the coupled graph. For every $s\in\mathcal{S}$ (or $q\in\mathcal{Q}$) and for every closed communicating class, $\mathcal{C}$, there exists $q\in\mathcal{Q}$ (or $s\in\mathcal{S}$) such that $(s,q)\in\mathcal{C}$.
\end{lemma}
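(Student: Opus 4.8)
My plan is to prove the three assertions in turn: existence of a closed communicating class by a finiteness argument, and the two surjectivity statements---that every closed class projects onto all of $\mathcal{S}$ and onto all of $\mathcal{Q}$---by lifting realizable walks into the coupled graph, invoking the strong connectivity of the channel for the first and the irreducibility of the $Q$-graph for the second.

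For existence, since $\mathcal{S}\times\mathcal{Q}$ is finite the coupled graph is finite, so its partition into communicating classes yields a finite acyclic condensation. Every finite acyclic graph has a sink, and a sink of the condensation is exactly a communicating class with no outgoing edges, i.e.\ a closed communicating class; this requires no computation. Now fix a closed communicating class $\mathcal{C}$ and a node $(s_0,q_0)\in\mathcal{C}$. To reach an arbitrary state $s$, I would invoke strong connectivity to obtain inputs and positive-probability outputs $y_1,\dots,y_m$ that drive the state from $s_0$ to $s$ through realizable transitions. Reading these outputs through $g$ defines contexts $q_j=g(q_{j-1},y_j)$, so the state walk lifts to a walk $(s_0,q_0)\to(s_1,q_1)\to\cdots\to(s_m,q_m)$ with $s_m=s$ in the coupled graph, each edge existing because its underlying state transition is realizable. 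Since $\mathcal{C}$ is closed, the entire walk stays in $\mathcal{C}$, hence $(s_m,q_m)\in\mathcal{C}$; as $s=s_m$ was arbitrary, the $\mathcal{S}$-projection of $\mathcal{C}$ is all of $\mathcal{S}$.

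For the $\mathcal{Q}$-projection the natural route is to show that the projected set $\mathcal{Q}_{\mathcal{C}}=\{q:(s,q)\in\mathcal{C}\text{ for some }s\}$ is closed under the map $g$: because the $Q$-graph is irreducible, its only nonempty closed set is $\mathcal{Q}$, which then forces $\mathcal{Q}_{\mathcal{C}}=\mathcal{Q}$. Closure under $g$ reduces to showing that for each $q\in\mathcal{Q}_{\mathcal{C}}$ and each label $y\in\mathcal{Y}$ the context $g(q,y)$ again lies in $\mathcal{Q}_{\mathcal{C}}$; this is immediate whenever $y$ is realizable from some state $s$ with $(s,q)\in\mathcal{C}$, since then the edge $(s,q)\to(f(s,x,y),g(q,y))$ exists and closedness keeps its head in $\mathcal{C}$.

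I expect the main obstacle to be precisely the remaining case, where an outgoing $Q$-label $y$ at a context $q$ is not a channel output realizable from any state paired with $q$ inside $\mathcal{C}$. The purely combinatorial irreducibility of the $Q$-graph does not by itself close this gap; one must bring in the channel's strong connectivity to argue that no state is trapped with a deficient output set, so that along realizable walks within $\mathcal{C}$ every context is eventually visited together with a state that produces the needed label. Making this coupling precise---matching the outputs the channel can realize from a given state against the labels the $Q$-graph must traverse---is the delicate heart of the argument and where I would concentrate the effort; it is also the point at which a mild non-degeneracy hypothesis on the channel (ensuring the relevant labels are realizable from reachable states) would most plausibly be invoked.
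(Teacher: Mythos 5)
Your first two steps are correct and coincide with the paper's: existence of a closed class follows from finiteness (the paper phrases it as ``every node has an outgoing edge, hence pigeonhole,'' which is your sink-of-the-condensation argument), and the $\mathcal{S}$-projection is obtained exactly as you describe, by lifting a realizable state walk through $g$ and using closedness of $\mathcal{C}$.

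The gap you isolate in the $\mathcal{Q}$-projection step is genuine, and you have put your finger on precisely the weak point of the paper's own argument. The paper takes a label path $y_1,\dots,y_n$ from $q_0$ to the target context, observes that for each $y_i$ there exists \emph{some} pair $(x_i,s_i)$ with $p(y_i|x_i,s_i)>0$, and then follows the edge ``$(s,q_1)\to(f(y_1,x_1,s_1),q_2)$'' as though the walk inside $\mathcal{C}$ were sitting at the very state $s_1$ that realizes $y_1$. This is exactly the unjustified matching of realizable outputs to visited states that you refuse to assume, and your caution is warranted: with only strong connectivity and irreducibility of the $Q$-graph, the claim can fail. For instance, let state $0$ emit $a$ deterministically, state $1$ emit $b$ deterministically, and $f(x,y,s)=1-s$; take $\mathcal{Q}=\{q_1,q_2,q_3\}$ with the $a$-edges forming the cycle $q_1\to q_2\to q_3\to q_1$ and all $b$-edges pointing to $q_1$. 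The $Q$-graph is irreducible and the channel strongly connected, yet $\{(0,q_1),(1,q_2)\}$ is a closed communicating class of the coupled graph containing no node of the form $(s,q_3)$. So the $\mathcal{Q}$-half of the lemma requires an additional non-degeneracy hypothesis (as you suspected), e.g.\ that every output symbol is realizable from every state, or the aperiodicity that the main theorem later imposes; your proposal is incomplete, but for the honest reason that the missing step cannot be supplied from the stated hypotheses, and the paper's proof does not supply it either.
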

The proof of Lemma \ref{lemma:coupled} appears in Appendix \ref{app:proof_lemma_coupled}. The freedom of choosing $q_0$, together with Lemma \ref{lemma:coupled}, verifies that for a given $s_0$ there always exists $q_0$ such that $(s_0,q_0)$ lies within any of the closed classes. Therefore, we will assume throughout this paper that the $(S,Q)$-graph has a single closed communicating class only. There is no concrete example where the initial closed class effects the upper bound, but one should be aware that if a different upper bound is resulted for different closed classes, then each value is a upper bound.

In order to present the $(S,Q)$-graph as a Markov chain on $\mathcal{S}\times\mathcal{Q}$, probabilities should be assigned on the edges. For a given input matrix $P_{X|S,Q}$, an outgoing edge from $(s,q)$ that is labelled by $(x,y)$ will have a probability of $p(y|x,s)p(x|s,q)$. This assignment might effect the structure of the $(S,Q)$-graph; specifically, if an edge has $p(x|s,q)=0$ then it can be removed. Denote by $\mathcal{A}(P_{X|S,Q})$ the $(S,Q)$-graph after edge removal and define
\begin{equation}\label{eq:p_pi}
\mathcal{P}_{\pi}\triangleq\{P_{X|S,Q}: \mathcal{A}(P_{X|S,Q}) \text{ has a single closed class}\}.
\end{equation}
The subscript $\pi$ emphasizes that for all $P_{X|S,Q}\in\mathcal{P}_\pi$ there exists a unique stationary distribution on the $(S,Q)$-graph. This stationary distribution can be calculated directly on the induced single closed class, as all other nodes are inessential and have zero probability.
\section{Main result}\label{sec:main}
The following theorem is our main result:
\begin{theorem}\label{theorem:main}
The feedback capacity of a strongly connected unifilar state channel, where $s_0$ is available to both the encoder and the decoder, is bounded by
\begin{align}\label{eq:main}
C_{\text{fb}}\leq \sup_{P_{X|S,Q}\in\mathcal{P}_{\pi}}I(X,S;Y|Q),
\end{align}
for all irreducible $Q$-graphs with $q_0$ such that $(s_0,q_0)$ lies in an aperiodic closed communicating class. The joint distribution is $P_{Y,X,S,Q}=P_{Y|X,S}P_{X|S,Q}\pi_{S,Q}$, where $\pi_{S,Q}$ is the stationary distribution of the $(S,Q)$-graph.
\end{theorem}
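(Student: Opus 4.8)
The plan is to start from the multi-letter characterization in Theorem~\ref{theorem:capacity_unifilar} and to single-letterize $\frac1N\sum_{i=1}^N I(X_i,S_{i-1};Y_i|Y^{i-1})$ in two stages: first replacing each output history $Y^{i-1}$ by its context $Q_{i-1}=\Phi_{i-1}(Y^{i-1})$, and then removing the time index with a uniform time-sharing variable while tracking the induced $(S,Q)$ law. The first stage is the per-letter inequality
\[
I(X_i,S_{i-1};Y_i|Y^{i-1})\le I(X_i,S_{i-1};Y_i|Q_{i-1}).
\]
Writing the left side as $H(Y_i|Y^{i-1})-H(Y_i|X_i,S_{i-1},Y^{i-1})$, the memoryless property $p(y_i|x^i,s^{i-1},y^{i-1})=p(y_i|x_i,s_{i-1})$ identifies the subtracted term with $H(Y_i|X_i,S_{i-1})$; since $Q_{i-1}$ is a function of $Y^{i-1}$ and $Y_i\perp Y^{i-1}\mid(X_i,S_{i-1})$, this equals $H(Y_i|X_i,S_{i-1},Q_{i-1})$. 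The context bound $H(Y_i|Y^{i-1})\le H(Y_i|Q_{i-1})$ then yields the claim, so $C_{\text{fb}}\le\lim_N\sup\frac1N\sum_i I(X_i,S_{i-1};Y_i|Q_{i-1})$.

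For the second stage I would introduce $T$ uniform on $\{1,\dots,N\}$ and set $(\tilde S,\tilde Q,\tilde X,\tilde Y)=(S_{T-1},Q_{T-1},X_T,Y_T)$. Time-invariance of the channel gives $H(Y_T|X_T,S_{T-1})=H(Y_T|X_T,S_{T-1},Q_{T-1},T)$, while $H(Y_T|Q_{T-1},T)\le H(Y_T|Q_{T-1})$ by conditioning; since the normalized sum equals $H(Y_T|Q_{T-1},T)-H(Y_T|X_T,S_{T-1})$, these combine to give $\frac1N\sum_i I(X_i,S_{i-1};Y_i|Q_{i-1})\le I(\tilde X,\tilde S;\tilde Y|\tilde Q)=:R(\bar\mu_N,\bar P_N)$, where $R(\mu,P)$ denotes the functional $I(X,S;Y|Q)$ evaluated under $P_{Y|X,S}\,P\,\mu$, the law $\bar\mu_N(s,q)=\frac1N\sum_i\Pr(S_{i-1}=s,Q_{i-1}=q)$ is the time-averaged state--context distribution, and $\bar P_N$ is the averaged input fixed by $\bar\mu_N(s,q)\bar P_N(x|s,q)=\frac1N\sum_i\Pr(S_{i-1}=s,Q_{i-1}=q,X_i=x)$.

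The key observation is that $\bar\mu_N$ is almost stationary for $\bar P_N$. Writing $T(P)$ for the $(S,Q)$-transition kernel induced by an input $P$, a telescoping computation gives $\big(\bar\mu_N T(\bar P_N)\big)(s',q')=\frac1N\sum_i\Pr(S_i=s',Q_i=q')$, so that $\bar\mu_N T(\bar P_N)-\bar\mu_N$ equals $\frac1N$ times the difference between the laws of $(S_N,Q_N)$ and $(S_0,Q_0)$ and hence vanishes in total variation. By compactness of the simplex I would pass to a subsequence along which $(\bar\mu_N,\bar P_N)\to(\mu^\star,P^\star)$; continuity of $R$ and of $T(\cdot)$ in the joint law then forces $\mu^\star T(P^\star)=\mu^\star$, so $\mu^\star$ is an exact stationary law for $P^\star$ and $C_{\text{fb}}\le R(\mu^\star,P^\star)$.

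The step I expect to be the main obstacle is identifying $R(\mu^\star,P^\star)$ with the supremum over $\mathcal{P}_\pi$. When $P^\star\in\mathcal{P}_\pi$, the graph $\mathcal{A}(P^\star)$ has a single closed class, $\mu^\star=\pi_{S,Q}$ is its unique stationary law, and $R(\mu^\star,P^\star)\le\sup_{P\in\mathcal{P}_\pi}I(X,S;Y|Q)$ is immediate. The delicate case is $P^\star\notin\mathcal{P}_\pi$, where $\mu^\star$ can be a convex mixture of the stationary laws of several closed classes; here I would restrict $P^\star$ to its best class---each restriction lying in $\mathcal{P}_\pi$---and use disjointness of the classes on the $(S,Q)$-graph together with a convexity bound on the conditional mutual information to show the mixture cannot exceed the best single class. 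Irreducibility and aperiodicity of the closed class containing $(s_0,q_0)$ enter both in the limit and here, ensuring the per-letter laws concentrate on that class and that its stationary distribution is well defined, exactly as the theorem's hypothesis on $q_0$ demands.
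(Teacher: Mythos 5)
Your route through the first three stages is sound and genuinely different from the paper's: where the paper formulates the limiting expression as an average-reward dynamic program and invokes conditions from the DP literature to extract an optimal stationary policy (its Step 2, by far the heaviest part of the argument), you stationarize directly by time-sharing, observing that the Ces\`aro-averaged pair $(\bar\mu_N,\bar P_N)$ is asymptotically a fixed point of the induced $(S,Q)$-kernel. The per-letter inequality, the telescoping identity $\bar\mu_N T(\bar P_N)-\bar\mu_N=\frac1N\left(P_{S_N,Q_N}-P_{S_0,Q_0}\right)$, and the compactness--continuity passage to $(\mu^\star,P^\star)$ all check out; up to that point this is a more elementary reduction to stationary fixed points than the paper's.

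The gap is in the last step, and the fix you propose would fail. When $P^\star\notin\mathcal{P}_\pi$, the limit $\mu^\star$ is a mixture $\sum_j\lambda_j\pi_j$ of the stationary laws of the closed classes, and you propose to dominate $R(\mu^\star,P^\star)$ by the best single class using ``disjointness of the classes'' and ``a convexity bound on the conditional mutual information.'' Neither ingredient is available. First, by Lemma~\ref{lemma:coupled} every closed class of the $(S,Q)$-graph contains every $q\in\mathcal{Q}$, so the classes, though disjoint as subsets of $\mathcal{S}\times\mathcal{Q}$, have fully overlapping $Q$-marginals, and conditioning on $Q$ does not separate them. Second, the inequality runs the wrong way: writing $J$ for the class index, $I(X,S;Y|Q)=H(Y|Q)-H(Y|X,S)$ is concave in the $(S,Q)$-marginal for fixed $P_{X|S,Q}$ and fixed channel, so $R(\mu^\star,P^\star)\geq\sum_j\lambda_j R(\pi_j,P^\star)=I(X,S;Y|Q,J)$, and the mixture can strictly exceed $\max_j R(\pi_j,P^\star)$ (two classes whose conditional laws of $(X,S)$ given $q$ are distinct point masses each contribute zero per-class mutual information, yet their mixture has positive mutual information). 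The theorem survives because its right-hand side is a supremum over $\mathcal{P}_\pi$, not a maximum over closed classes: the correct repair is an approximation argument in which $P^\star$ is perturbed by $\epsilon$-weight edges joining the classes so that $P_\epsilon\in\mathcal{P}_\pi$, with the relative rates tuned so that the unique stationary law $\pi_\epsilon$ tends to the prescribed mixture $\mu^\star$, followed by continuity of $R$. This perturbation is the role played by the exchange argument in the paper's Lemma~\ref{lemma:pi}; your proposal replaces it with an inequality that does not hold.
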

\begin{remark}
One can view $Q$ as an auxiliary random variable (RV) representing the common knowledge that is shared by the encoder and decoder. Here, the implied sequence of auxiliary RVs has memory induced by the archive structure of the chosen $Q$-graph. In contrast, auxiliary RVs are typically chosen to be independent and identically distributed (i.i.d.), e.g., Wyner-Ziv and Gelfand-Pinsker models.

The upper bound holds for all irreducible $Q$-graphs (which satisfy the aperiodicity assumption), while in standard derivations of upper bounds and capacities it is shown that auxiliary RVs exist. Therefore, there is no optimization on this RV and cardinality bound is not relevant here since it simply holds for all $Q$-graphs. Indeed, if it can be shown that optimal $Q$-graphs always have a finite number of graph nodes, then \eqref{eq:main} will be a single-letter capacity formula for unifilar FSCs with feedback by adding a minimization over all possible $Q$-graphs.
\end{remark}
\begin{remark}
The restriction on the input distributions in $\mathcal{P}_\pi$ implies that a unique stationary distribution exists. Note that the stationary distribution $\pi_{S,Q}$ depends on the value of $P_{X|S,Q}$, and can be found as the solution of linear equations $\pi_{S,Q} T [P_{X|S,Q}]=\pi_{S,Q}$, where $T [P_{X|S,Q}]$ is the transfer matrix of the $(S,Q)$-graph as a function of $P_{X|S,Q}$.
\end{remark}
\begin{remark}
As discussed in Section \ref{sec:prelimi}, if the $(S,Q)$-graph contains more than a single closed class, then the upper bound holds for all closed communicating classes which are aperiodic. This fact follows from Lemma \ref{lemma:coupled}, where it is shown that each closed class contains all $s_0\in\mathcal{S}$ and all $q_0\in\mathcal{Q}$.
\end{remark}
\begin{remark}
Since the transmitter is free to ignore the feedback, the feedback capacity is greater than or equals the non-feedback capacity. Thus, Theorem \ref{theorem:main} also provides a computable and non-trivial upper bound on the non-feedback capacity of a unifilar FSC, which is still an open problem.
\end{remark}
\begin{remark}
An efficient method for finding the optimal $Q$-contexts is to study the corresponding DP. Standard simulations (see \cite{PermuterCuffVanRoyWeissman08,Ising_channel,Sabag_BEC}) produce a histogram of the DP states that are visited under an estimated optimal policy. The inaccuracy of such simulations follows from the required quantization of the DP parameters.

When the resulting histogram of the DP states is discrete, i.e., only a finite number of DP states are visited, then a $Q$-graph can be extracted from the DP simulation. Specifically, each visited DP state is taken as a node in the $Q$-graph and the labelled edges are the evolution of the DP states as a function of the outputs.
\end{remark}
In the following section, a sufficient condition for the optimality of the upper bound is provided.
\subsection{Lower bound on capacity}
Before presenting the lower bound, the BCJR recursive equation of the channel state estimation is derived: for an outputs tuple $y^t$ and a state $s_t\in\mathcal{S}$,
\begin{align}\label{eq:BCJR}
p(s_t|y^t)&= \frac{p(s_t,y_t|y^{t-1})}{p(y_t|y^{t-1})}  \nn\\
          &= \frac{\sum_{x_{t},s_{t-1}} p(s_t,y_t,x_{t},s_{t-1}|y^{t-1})}{\sum_{x_{t},s_{t-1}} p(y_t,x_{t},s_{t-1}|y^{t-1})}.
\end{align}
This is a forward-recursive equation in the sense that with a set of scalars $\{p(S_{t-1}=s|y^{t-1})\}_{s\in\mathcal{S}}$ and an output symbol, $y_t$, one can compute the set $\{p(S_{t}=s|y^{t})\}_{s\in\mathcal{S}}$. Note that the collection of scalars $p(s_t|y^t)$ is an element from the simplex of size $|\mathcal{S}|$, denoted here by $\mathcal{Z}$.

Given an irreducible $Q$-graph, an input distribution $P_{X|S,Q}\in\mathcal{P}_\pi$ is said to be an \textit{aperiodic input} if its corresponding $(S,Q)$-graph is aperiodic. Therefore, for each aperiodic input one can write the BCJR equation in \eqref{eq:BCJR} as a mapping $B:\mathcal{Z}\times\mathcal{Y}\to\mathcal{Z}$. Finally, each aperiodic input induces some stationary distribution $\pi_{S,Q}$, and we say that an aperiodic input is \textit{BCJR-invariant} if the set of conditional stationary distributions, $\{\pi_{S|Q=q}\}_{q\in Q}$, satisfies
\begin{align*}
  \pi_{S|Q=g(q,y)}&=B(\pi_{S|Q=q},y),
\end{align*}
for all $q\in Q$ and $y\in\mathcal{Y}$ where $g(q,y)$ is the context that is calculated from the node $q$ and the output $y$.

The following theorem provides a lower bound on feedback capacity.
\begin{theorem}\label{theorem:lower}
The feedback capacity of unifilar FSCs is bounded by
\begin{align}\label{eq:Theorem_Lower}
C_{\text{fb}}&\geq I(X,S;Y|Q),
\end{align}
for all aperiodic inputs, $P_{X|S,Q}\in\mathcal{P}_\pi$, that are BCJR-invariant.
\end{theorem}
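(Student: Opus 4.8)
The plan is to establish achievability by exhibiting a concrete feedback policy and evaluating the long-run average directed information it produces, then invoking Theorem \ref{theorem:capacity_unifilar}. Given a BCJR-invariant aperiodic input $P_{X|S,Q}\in\mathcal{P}_\pi$ and its $Q$-graph, I would define the time-invariant policy $p(x_t|s_{t-1},y^{t-1})\triangleq P_{X|S,Q}(x_t|s_{t-1},\Phi_{t-1}(y^{t-1}))$. Since $\Phi_{t-1}(y^{t-1})=q_{t-1}$ is a deterministic function of the feedback $y^{t-1}$, this is a legitimate policy in the optimization of Theorem \ref{theorem:capacity_unifilar}. Under it, the pair process $(S_t,Q_t)$ is exactly the Markov chain on the $(S,Q)$-graph with edge probabilities $p(y|x,s)p(x|s,q)$; because the input is aperiodic and lies in $\mathcal{P}_\pi$, this chain has a single aperiodic closed class and its law converges to the unique stationary distribution $\pi_{S,Q}$.

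The crux is to show that under this policy the filtered state law collapses onto the context. I would prove by induction on $t$ that $p(s_t|y^t)=\pi_{S|Q=q_t}$ for every realization $y^t$: substituting the policy into the forward recursion \eqref{eq:BCJR} shows that the update of $p(s_{t-1}|y^{t-1})$ into $p(s_t|y^t)$ is exactly the map $B(\cdot,y_t)$ evaluated with context $q_{t-1}$, so if $p(s_{t-1}|y^{t-1})=\pi_{S|Q=q_{t-1}}$ then BCJR-invariance gives $p(s_t|y^t)=B(\pi_{S|Q=q_{t-1}},y_t)=\pi_{S|Q=g(q_{t-1},y_t)}=\pi_{S|Q=q_t}$. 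Granting this, $p(y_t|y^{t-1})=\sum_{x,s}p(y_t|x,s)p(x|s,q_{t-1})\pi_{S|Q=q_{t-1}}(s)$ depends on $y^{t-1}$ only through $q_{t-1}$, so that $H(Y_t|Y^{t-1})=H(Y_t|Q_{t-1})$; and since $Y_t-(X_t,S_{t-1})-Y^{t-1}$ is a Markov chain by the channel law, $H(Y_t|X_t,S_{t-1},Y^{t-1})=H(Y_t|X_t,S_{t-1})$.

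With these two identities the per-letter reward reduces to stationary quantities: $I(X_t,S_{t-1};Y_t|Y^{t-1})=H(Y_t|Q_{t-1})-H(Y_t|X_t,S_{t-1})$, and as the law of $(S_t,Q_t)$, hence of $Q_{t-1}$ and of $(X_t,S_{t-1})$, converges to $\pi_{S,Q}$, the right-hand side tends to $H(Y|Q)-H(Y|X,S)=I(X,S;Y|Q)$, the mutual information computed under $P_{Y,X,S,Q}=P_{Y|X,S}P_{X|S,Q}\pi_{S,Q}$. Averaging over $i$ and letting $N\to\infty$, the Cesàro mean of a convergent sequence converges to the same limit, so $\frac1N\sum_{i=1}^N I(X_i,S_{i-1};Y_i|Y^{i-1})\to I(X,S;Y|Q)$. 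Since this policy is admissible in the supremum of Theorem \ref{theorem:capacity_unifilar}, its limiting rate lower-bounds $C_{\text{fb}}$, giving $C_{\text{fb}}\geq I(X,S;Y|Q)$.

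The step I expect to be the main obstacle is the initialization of the induction in the second paragraph: the decoder starts from the known $s_0$, so $p(s_0|y^0)=\delta_{s_0}$, which in general does not coincide with $\pi_{S|Q=q_0}$, and hence the belief is off the invariant manifold for small $t$. To close this gap I would either invoke filter stability, namely that under the strong-connectivity and aperiodicity assumptions the nonlinear filter forgets its initialization so that the beliefs converge to $\{\pi_{S|Q=q}\}_{q\in Q}$ and the per-letter rewards still converge to $I(X,S;Y|Q)$, or bypass the per-step belief entirely by working with the output entropy rate: using $\sum_{i=1}^N H(Y_i|Y^{i-1})=H(Y^N)$, it suffices to show $\frac1N H(Y^N)\to H(Y|Q)$, which is an asymptotic property of the hidden-Markov output process and is insensitive to the transient caused by the initial state. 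Either route reduces the theorem to the convergence of the $(S,Q)$-chain guaranteed by the aperiodicity hypothesis.
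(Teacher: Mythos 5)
Your proposal follows essentially the same route as the paper's proof: fix the policy $p(x_t|s_{t-1},y^{t-1})=P_{X|S,Q}(\cdot|s_{t-1},\Phi_{t-1}(y^{t-1}))$, show by induction via the BCJR-invariance that $p(s_t|y^t)=\pi_{S|Q=q_t}$, deduce the Markov chain $Y_t-Q_{t-1}-Y^{t-1}$ so that $I(Y_t;Y^{t-1}|Q_{t-1})=0$, and pass to the stationary limit of the $(S,Q)$-chain. The one point where you go beyond the paper is the initialization of the induction: the paper simply asserts that the initial distribution may be taken to be $\pi_{S,Q}$ ``since it is reached with high probability,'' whereas you explicitly flag that $p(s_0|y^0)=\delta_{s_0}$ is off the invariant manifold and propose filter-stability or entropy-rate arguments to close the gap --- a more careful treatment of the same step, not a different proof.
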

\begin{remark}
Theorem \ref{theorem:lower} acts as a complementary tool for the upper bound in Theorem \ref{theorem:main}. One application is to evaluate the upper bound from Theorem \ref{theorem:main} for some $Q$-graph and then to verify its optimality by the BCJR-invariant property. However, there are cases where the upper bound is tight and the corresponding BCJR-invariant property is not satisfied; therefore, this property is a sufficient but not a necessary condition for the optimality of the upper bound. Nevertheless, the above statement suggests a lower bound for all aperiodic inputs, so it can be exploited as a lower bound with an arbitrary aperiodic input, as we will see in Section \ref{subsec:BEC}.
\end{remark}
\section{Examples}\label{sec:examples}
This section covers several examples from the literature where the capacity of a unifilar FSC is known.
\subsection{Channel state is a function of the outputs}
In \cite{Chen05}, a unifilar FSC where the channel state is available to all parties and evolves according to $p(s_{t+1}|x_{t},s_{t})$ was studied. It was shown that this class of FSCs is, indeed, equivalent to a unifilar FSC where the channel state is the last output, i.e., $s_i=y_i$. The authors showed that for channels with strongly irreducible and aperiodic states\footnote{A channel is strongly irreducible if the graph with $|\mathcal{S}|$ nodes (each corresponds to an output) and the edge $s' \rightarrow s$ exists if $p(s|x,s')>0$ for all $x$, is irreducible. Strong aperiodicity is defined in a similar manner.}, the capacity is given by $C_{\text{fb}} = \max_{P_{X|S}} I(X;Y|S)$, where $P_{Y,X,S}=P_{Y|X,S}P_{X|S}\pi_{S}$.

To apply Theorem \ref{theorem:main} for this case, the $Q$-graph is taken as the states graph since states can be computed from outputs. If the channel states form an aperiodic graph, then Theorem \ref{theorem:main} gives
\begin{align}\label{eq:example_chen}
  C_{\text{fb}}&\leq \sup_{P_{X|S}\in\mathcal{P}_\pi}I(X;Y|S).
\end{align}

Note that the derived bound \eqref{eq:example_chen} holds for outputs that are strongly connected and form an aperiodic graph, while in \cite{Chen05}, the outputs are assumed to be strongly irreducible and aperiodic, which is a stronger property.
\begin{figure}[h]
\centering
    \psfrag{A}[][][1]{$(q_1,s=0)$}
    \psfrag{B}[][][1]{$(q_1,s=1)$}
    \psfrag{C}[][][1]{$(q_2,s=0)$}
    \psfrag{D}[][][1]{$(q_2,s=1)$}
    \psfrag{E}[][][1]{$(0,\phi)$}
    \psfrag{F}[][][1]{$(1,1)$}
    \psfrag{G}[][][1]{$(0,\phi)$}
    \psfrag{H}[][][1]{$(1,\phi)$}
    \psfrag{I}[][][1]{$(1,?)$}
    \includegraphics[scale = 0.4]{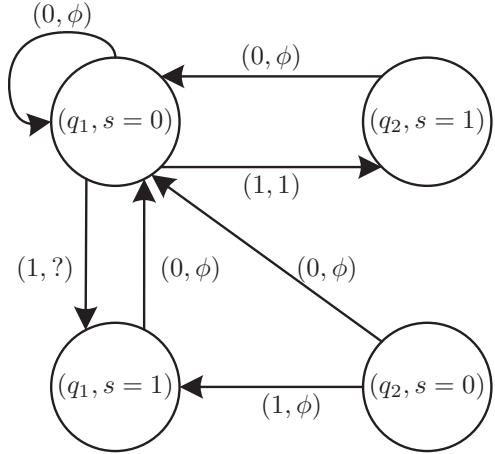}
    \caption{The $(S,Q)$-coupled graph for the input-constrained BEC for the $Q$-graph in Fig. \ref{fig:BEC_Q}. Each edge is labelled by a pair $(x,y)$, where $\phi$ stands for the ``don't care" symbol, i.e., all possible outputs.}
    \label{fig:BEC_coupled}
\end{figure}
\subsection{Input-constrained BEC}\label{subsec:BEC}
The setting consists of a BEC, where inputs must admit the $(1,\infty)$-RLL constraint, i.e., the input sequence does not contain consecutive ones. This setting does not fall into the classical definition of unifilar FSCs. However, it is possible to convert the input constraint into a channel state, $s_i=x_i$, and to derive the upper bound in Theorem \ref{theorem:main}, when the maximization is over constrained inputs. The feedback capacity of this channel was found in \cite{Sabag_BEC} using an explicit and tedious solution for the Bellman equation.

The following result is a re-statement of the known feedback capacity.
\begin{theorem}\label{theorem:BEC}[Theorem $1$, \cite{Sabag_BEC}]
The feedback capacity of the input-constrained BEC is
\begin{align}\label{eq:BEC_upper}
  C_{\text{BEC}}&=  \max_{0\leq p\leq 0.5} \frac{H_2(p)}{\frac{1}{1-\epsilon}+p}.
\end{align}
\end{theorem}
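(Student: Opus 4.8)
The plan is to specialize the general single-letter bound of Theorem \ref{theorem:main} to the two-node $Q$-graph of Fig. \ref{fig:BEC_Q} and show that the induced optimization reduces, after one change of variables, to \eqref{eq:BEC_upper}; tightness then follows from the achievability side. First I would cast the $(1,\infty)$-constrained BEC as a unifilar FSC by taking the state to be the previous input, $s_i=x_i$, so that the run-length constraint becomes the rule that $s=1$ forces $x=0$ while $s=0$ leaves $x$ free. With the $Q$-graph in which the observed symbol $1$ is the only output driving $q_1\to q_2$ and every edge out of $q_2$ returns to $q_1$, I would build the $(S,Q)$-coupled graph of Fig. \ref{fig:BEC_coupled} and identify its unique reachable closed class $\{(q_1,0),(q_1,1),(q_2,1)\}$; the node $(q_2,0)$ is never entered, since $q_2$ is reached only through a confirmed input $1$, which sets $s=1$.

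Because the only unconstrained action sits in state $s=0$, I would introduce the single scalar $p=P(X=1\mid S=0)$ and solve the balance equations for the stationary vector, obtaining $\pi(q_1,0)=\tfrac{1}{1+p}$, $\pi(q_1,1)=\tfrac{p\epsilon}{1+p}$ and $\pi(q_2,1)=\tfrac{p(1-\epsilon)}{1+p}$. I would then evaluate the objective as $I(X,S;Y\mid Q)=H(Y\mid Q)-H_2(\epsilon)$, using that the BEC makes $H(Y\mid X,S)=H_2(\epsilon)$ and that $H(Y\mid Q=q_2)=H_2(\epsilon)$ while $H(Y\mid Q=q_1)=H_2(\epsilon)+(1-\epsilon)H_2\!\big(\tfrac{p}{1+p\epsilon}\big)$.

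This yields $I=\tfrac{(1+p\epsilon)(1-\epsilon)}{1+p}\,H_2\!\big(\tfrac{p}{1+p\epsilon}\big)$, and the substitution $r=\tfrac{p}{1+p\epsilon}$ rewrites it cleanly as $\tfrac{H_2(r)}{\frac{1}{1-\epsilon}+r}$, which is exactly the form in \eqref{eq:BEC_upper} (with $r$ renamed $p$). Maximizing over $p\in[0,1]$ is equivalent to maximizing over $r$, and since the maximizer of $\tfrac{H_2(r)}{c+r}$ with $c=\tfrac{1}{1-\epsilon}\ge 1$ is interior and strictly below $0.5$, the range restriction $0\le p\le 0.5$ is without loss. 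Theorem \ref{theorem:main} then delivers the upper bound $C_{\text{BEC}}\le \max_{0\le p\le 0.5}\tfrac{H_2(p)}{\frac{1}{1-\epsilon}+p}$.

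Since \eqref{eq:BEC_upper} is already known to be achievable \cite{Sabag_BEC}, equality follows once this upper bound is in hand, so the crux is the single-letter evaluation above. The obstacle I expect to require the most care is the tightness argument: the natural maximizing input is \emph{not} BCJR-invariant on this two-state graph, because an observed $0$ collapses the state belief to $s=0$ whereas the conditional stationary law $\pi_{S\mid Q=q_1}$ is a nondegenerate mixture, so the fixed-point identity of Theorem \ref{theorem:lower} fails on the $q_1\xrightarrow{0}q_1$ edge. A self-contained matching lower bound would therefore need either the cited achievability or a refined $Q$-graph on which the induced input does satisfy the BCJR-invariant condition while yielding the same rate; verifying that such a refinement exists and reduces again to \eqref{eq:BEC_upper} is the delicate step. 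A secondary, purely algebraic subtlety is recognizing that the messy optimand collapses to the clean ratio — the renewal-type denominator $\tfrac{1}{1-\epsilon}+r$ only emerges after the substitution $r=\tfrac{p}{1+p\epsilon}$.
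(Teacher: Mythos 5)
Your proposal is correct, and your upper-bound derivation is essentially identical to the paper's: the same two-node $Q$-graph, the same closed class $\{(q_1,s=0),(q_1,s=1),(q_2,s=1)\}$ with $(q_2,s=0)$ unreachable, the same stationary vector, the same evaluation $I(X,S;Y|Q)=\tfrac{(1+\epsilon p)(1-\epsilon)}{1+p}H_2\bigl(\tfrac{p}{1+\epsilon p}\bigr)$, and the same substitution $p'=\tfrac{p}{1+\epsilon p}$ collapsing it to $\tfrac{H_2(p')}{\frac{1}{1-\epsilon}+p'}$ (the paper is slightly more careful in writing this last step as an inequality, since $p'$ only ranges over $[0,\tfrac{1}{1+\epsilon}]$, but as you note the unconstrained maximizer lies in that range so nothing is lost). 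Where you diverge is the lower bound. You fall back on the achievability result of the cited prior work, which does prove the stated theorem, whereas the paper's point is to give a self-contained alternative proof: it invokes Theorem \ref{theorem:lower} with a \emph{different}, three-node $Q$-graph (Fig.~\ref{fig:BEC_Q_FO}) and the inputs $p(x=1|s=0,q_2)=p$, $p(x=1|s=0,q_3)=\tfrac{p}{1-p}$, for which the conditional stationary beliefs $[\pi(s=0|q_1),\pi(s=0|q_2),\pi(s=0|q_3)]=[0,1,1-p]$ are verified to be BCJR-invariant, yielding exactly $\tfrac{H_2(p)}{\frac{1}{1-\epsilon}+p}$. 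Your diagnosis of why the two-node graph fails BCJR-invariance (the $q_1\xrightarrow{0}q_1$ edge forces a point mass on $s=0$ while $\pi_{S|Q=q_1}$ is a nondegenerate mixture) matches the paper's Remark on this example, and your conjecture that a refined graph restoring invariance at the same rate exists is precisely what the paper constructs; you just stop short of exhibiting it. What the paper's route buys is independence from the Bellman-equation proof of the original achievability; what your route buys is brevity, at the cost of leaning on the very result being re-proved.
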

Here, we will provide an alternative proof for Theorem \ref{theorem:BEC}: the upper bound is shown by applying Theorem \ref{theorem:main} with the $Q$-graph presented in Fig. \ref{fig:BEC_Q}, while the lower bound is achieved by applying Theorem \ref{theorem:lower} with a new $Q$-graph that is presented in Fig. \ref{fig:BEC_Q_FO}.
\begin{remark}
In this example, the BCJR-invariant property is not satisfied for the graph that is presented in Fig. \ref{fig:BEC_Q} and, therefore, this is a sufficient but not a necessary condition for the tightness of the upper bound.
On the other hand, calculation of the upper bound with the $Q$-graph in Fig. \ref{fig:BEC_Q_FO} results in a tight upper bound as well; however, it is preferable to calculate the upper bound using a $Q$-graph with the fewest nodes.
\end{remark}
\begin{figure}[h]
\centering
    \psfrag{A}[][][1]{$q_1$}
    \psfrag{B}[][][1]{$q_2$}
    \psfrag{C}[][][1]{$q_3$}
    \psfrag{D}[][][1]{$y=1$}
    \psfrag{E}[][][1]{$y=0$}
    \psfrag{F}[][][1]{$y=?$}
    \includegraphics[scale = 0.6]{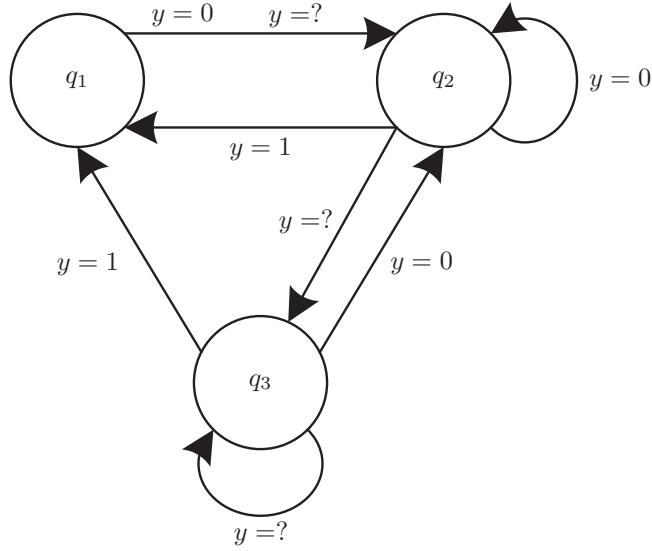}
    \caption{$Q$-graph for the input-constrained BEC with an output alphabet $\mathcal{Y}=\{0,1,?\}$.}
    \label{fig:BEC_Q_FO}
\end{figure}
\textbf{Upper bound:}
The $(S,Q)$-graph for the $Q$-graph from Fig. \ref{fig:BEC_Q} is presented in Fig. \ref{fig:BEC_coupled}. There is a single closed class in this graph consisting of all nodes except $(q_2,s=0)$, which is aperiodic since there is a loop of length $1$. Since inputs are constrained, we have $p(x=1|s=1,q)=0$ for all $q$ and, therefore, the matrix $P_{X|S,Q}$ can be parameterized with a single parameter $p(x=1|s=0,q_1)=p$.

Calculation of the stationary distribution for the $(S,Q)$-graph gives
$[\pi_{0,1}, \pi_{1,1},\pi_{1,2}] = \left[\frac{1}{1+p},\frac{\epsilon p}{1+p},\frac{(1-\epsilon) p}{1+p}\right]$, where $\pi_{i,j}=\pi(s=i,q_j)$. Then, one can calculate the conditional distribution,
\begin{align}\label{eq:BEC_V}
  p(y=1|q_1) &= p(y=1,x=1,s=0|q_1) \nn\\
  &= (1-\epsilon) p \frac{\pi_{0,1}}{\pi_{0,1} + \pi_{1,1}} \nn\\
  &= \frac{(1-\epsilon) p}{1+\epsilon p}.
\end{align}
By Theorem \ref{theorem:main}, we have:
\begin{align*}
  C_{\text{BEC}}&\leq \sup_{p(x|s,q)\in\mathcal{P}_\pi}  I(X,S;Y|Q)\\
  &= \sup_{p(x|s,q)\in\mathcal{P}_\pi} H(Y|Q) - H_2(\epsilon)  \\
  &\stackrel{(a)}= \max_{0\leq p \leq 1}  [\pi_{0,1}+ \pi_{1,1}] H_3\left(\frac{(1-\epsilon) p}{1+\epsilon p},\epsilon\right) + \pi_{1,2}H_2(\epsilon)  - H_2(\epsilon)\\
  &\stackrel{(b)}= \max_{0\leq p \leq 1}  \frac{1+\epsilon p}{1+p} (1-\epsilon)H_2\left(\frac{p}{1+\epsilon p}\right)\\
  &\stackrel{(c)}\leq \max_{0\leq p\leq 1}  \frac{H_2(p)}{\frac{1}{1-\epsilon}+p} ,
\end{align*}
where $(a)$ follows from $\pi_{0,2}=0$ and substituting \eqref{eq:BEC_V}, $(b)$ follows from the identity $H_3((1-\epsilon)x,\epsilon) = H_2(\epsilon) + (1-\epsilon)H_2(x)$, for all $x\in[0,1]$, and $\sum_{i,j}\pi_{i,j}=1$, finally, $(c)$ follows by exchanging the maximization variable to be $p'\triangleq \frac{p}{1+\epsilon p}$ and taking its maximization domain to be $[0,1]$. $\hfill\blacksquare$

\textbf{Lower bound:} consider the $Q$-graph that is presented in Fig. \ref{fig:BEC_Q_FO} with inputs that are given by $p(x=1|s=0,q_2)=p$ and $p(x=1|s=0,q_3)=\frac{p}{1-p}$ where $p\in[0,0.5]$.

Construction of the $(S,Q)$-graph reveals that the pairs $(s=0,q_1)$ and $(s=1,q_2)$ cannot be reached, while the stationary distribution of the other pairs is positive and equals:
\begin{align}\label{eq:BEC_stationary_lower}
               \pi_{1,1} &=  \frac{\bar{\epsilon} p}{1+ \bar{\epsilon} p}  \nn\\
               \pi_{0,2} &=  \frac{\bar{\epsilon}}{1+ \bar{\epsilon} p}  \nn\\
               \pi_{0,3} &=  \frac{\epsilon(1-p)}{1+ \bar{\epsilon} p}  \nn\\
               \pi_{1,3} &=  \frac{\epsilon p}{1+ \bar{\epsilon} p},
\end{align}
where $\pi_{i,j}=\pi(s=i,q_j)$. By \eqref{eq:BEC_stationary_lower}, it can be calculated that
\begin{align*}
        \pi(s=0|q_1) &= 0  \nn\\
        \pi(s=0|q_2) &= 1  \nn\\
        \pi(s=0|q_3) &= 1-p.
\end{align*}

Since $|\mathcal{S}|=2$, the value of $\pi(s=0|q_i)$ determines uniquely the value of $\pi(s=1|q_i)$ and it is sufficient to show the BCJR-invariant property for $\pi(s=0|q_i)$.
The BCJR equation can then be written as:
\begin{equation*}
p(s_i=0|q_j=g(q_i,y)) =\left\{\begin{array}{cc}
 1 & \text{if } y=0, \\
 1 - \pi(s=0|q_i) p(x=1|s=0,q_i) & \text{if } y=?, \\
0 & \text{if } y=1, \end{array}\right.
\end{equation*}

We show the BCJR-invariant property for each node: the node $q_1$ only has input edges that are labeled by $y=1$ and, therefore, $p(s_i=0|q_j=g(q_i,y=1))=\pi(s=0|q_1)=0$ for $i=2,3$, as required. For the node $q_2$, all incoming edges are labelled by $y=0$ except for the edge $q_1\to q_2$ that is labelled with $y=?$. For this edge, calculation gives that $1-\pi(s=0|q_1) p(x=1|s=0,q_1) = 1$ and it can be concluded that the node $q_2$ is BCJR-invariant as well. Finally, $q_3$ has two incoming edges that satisfy $1 - \pi(s=0|q_2) p(x=1|s=0,q_2)= 1 - \pi(s=0|q_3) p(x=1|s=0,q_3)=1-p$.

By Theorem \ref{theorem:lower},
\begin{align}\label{eq:BEC_lower_conclusion}
  C_{\text{BEC}}&\geq  I(X,S;Y|Q) \nn\\
               &= (1-\epsilon)\left[\frac{\bar{\epsilon}}{1+ \bar{\epsilon} p} H_2(p) + \frac{\epsilon}{1+ \bar{\epsilon} p} H_2(p)\right]\nn\\
               &= \frac{H_2(p)}{\frac{1}{1-\epsilon} + p}.
\end{align}
Since the lower bound \eqref{eq:BEC_lower_conclusion} holds for all $p\in[0,5]$, maximization on this parameter can be performed and concludes the proof of this theorem.$\hfill\blacksquare$

\begin{figure}[h]
\centering
    \psfrag{A}[][][1]{$0$}
    \psfrag{B}[][][1]{$1$}
    \psfrag{C}[][][1]{$0$}
    \psfrag{D}[][][1]{$?$}
    \psfrag{E}[][][1]{$1$}
    \psfrag{F}[][][1]{$\epsilon$}
    \psfrag{G}[][][1]{$1-\epsilon$}
    \psfrag{X}[][][1]{$X_i$}
    \psfrag{Y}[][][1]{$Y_i$}
    \psfrag{M}[][][1]{$-1$}
    \psfrag{P}[][][1]{$1$}
    \psfrag{Z}[][][1]{$0$}
    \psfrag{R}[][][1]{$X_{i-1}=1$}
    \psfrag{T}[][][1]{$X_{i-1}=0$}
    \includegraphics[scale = 0.45]{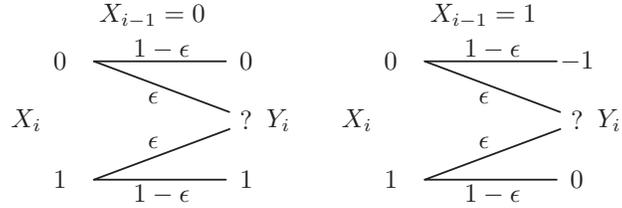}
    \caption{DEC with erasure probability $\epsilon$.}
    \label{fig:DEC}
\end{figure}
\subsection{Dicode erasure channel (DEC)}
The DEC \cite{PfitserLDPC_memory_erasure,henry_dissertation}, as described in Fig. \ref{fig:DEC}, is a simplified version of the well-known dicode channel with AWGN. Specifically, a binary input goes through a discrete-time linear filter described by $1-D$, i.e., the filter outputs $x_i-x_{i-1}$ on the real line, and this is then transmitted on an erasure channel.

Inputs are taken from $\mathcal{X}=\{0,1\}$, while outputs take values in $\mathcal{Y}=\{-1,0,1,?\}$. The channel output is $y_i=x_{i} - x_{i-1}$ with probability $1-\epsilon$, and equals $y_i=?$ with probability $\epsilon$, where $\epsilon$ is a parameter in $[0,1]$. It is evident that the DEC is a unifilar FSC if the channel state is taken as the previous input, i.e.,  $s_i = x_i$.
\begin{figure}[h]
\centering
    \psfrag{A}[][][1]{$q_1$}
    \psfrag{B}[][][1]{$q_2$}
    \psfrag{C}[][][1]{$q_3$}
    \psfrag{D}[][][1]{$y=-1$}
    \psfrag{E}[][][1]{$y=1$}
    \psfrag{F}[][][1]{$y=?$}
    \psfrag{Z}[][][1]{$y=0$}
    \includegraphics[scale = 0.6]{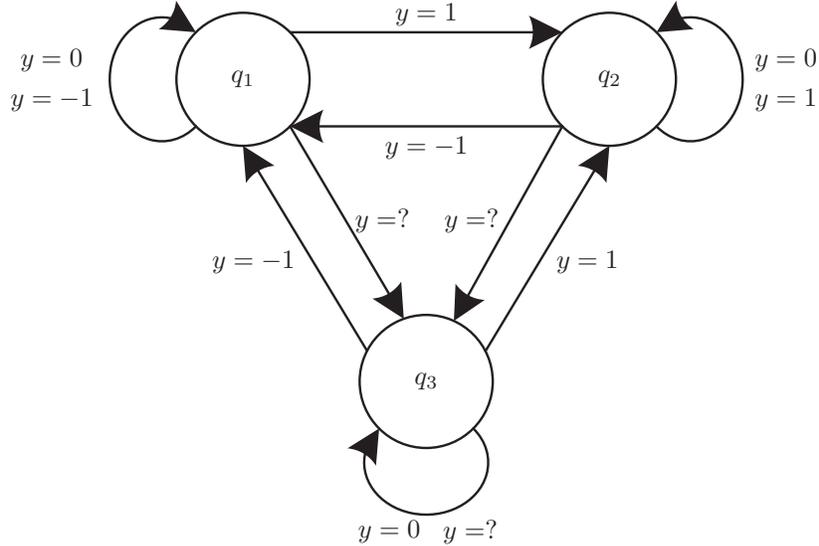}
    \caption{$Q$-graph for the DEC with an output alphabet $\mathcal{Y}=\{0,1,-1,?\}$.}
    \label{fig:DEC_Q}
\end{figure}

The following theorem encapsulates the feedback capacity for the DEC.
\begin{theorem}[DEC capacity]\label{theorem:DEC}
The feedback capacity of the DEC is:
\begin{align}\label{eq:main_DEC}
C_{\text{DEC}}&= \max_{0\leq p \leq 1} (1-\epsilon)\frac{p + \epsilon H_2(p)}{\epsilon + (1-\epsilon)p}.
\end{align}
\end{theorem}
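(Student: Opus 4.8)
The plan is to sandwich the DEC capacity between the single-letter upper bound of Theorem \ref{theorem:main} and the BCJR-invariant lower bound of Theorem \ref{theorem:lower}, both evaluated on the three-node $Q$-graph of Fig. \ref{fig:DEC_Q}, exactly as was done for the input-constrained BEC. First I would recast the DEC as a unifilar FSC by setting $s_i=x_i$, so that the previous state is the previous input and the non-erased output is the deterministic difference $y_i=x_i-s_{i-1}$, while with probability $\epsilon$ the output is the erasure symbol. The immediate payoff of this modeling is that $Y$ is a deterministic function of $(X,S)$ up to erasure, so $H(Y|X,S,Q)=H(Y|X,S)=H_2(\epsilon)$ independently of the law of $(X,S,Q)$, and the single-letter objective collapses to $I(X,S;Y|Q)=H(Y|Q)-H_2(\epsilon)$. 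All of the remaining work then sits in evaluating $H(Y|Q)$ against the stationary law $\pi_{S,Q}$.

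For the upper bound I would fix the $Q$-graph of Fig. \ref{fig:DEC_Q}, form the $(S,Q)$-coupled graph on at most six nodes via the construction of Section \ref{subsec:main_contexts}, and delete the edges killed by $p(y|x,s)=0$; the erasure transitions are always present, whereas an informative branch survives only when $y=x-s$. I expect the facts that $y=1$ forces the current state $s=1$ and $y=-1$ forces $s=0$ to collapse the reachable set to a single aperiodic closed class. I would then solve the linear system $\pi_{S,Q}\,T[P_{X|S,Q}]=\pi_{S,Q}$ of \eqref{eq:p_pi} for the stationary distribution, substitute the induced conditionals $p(y|q)$ into $H(Y|Q)$, and simplify, using the reachability constraints and the $x\leftrightarrow 1-x$ symmetry of the DEC to reduce the optimization over $\mathcal{P}_\pi$ to a single scalar $p$. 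The target is to show that $I(X,S;Y|Q)$ equals $(1-\epsilon)\frac{p+\epsilon H_2(p)}{\epsilon+(1-\epsilon)p}$ as a function of $p$, whence Theorem \ref{theorem:main} gives $C_{\text{DEC}}\leq\max_{0\leq p\leq 1}(1-\epsilon)\frac{p+\epsilon H_2(p)}{\epsilon+(1-\epsilon)p}$. As a sanity check, this expression tends to $1$ as $\epsilon\to 0$ and to $0$ as $\epsilon\to 1$, matching the noiseless-dicode and fully-erased limits.

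For the matching lower bound I would invoke Theorem \ref{theorem:lower} with the same $Q$-graph and the same input distribution and verify the BCJR-invariant property. Since $|\mathcal{S}|=2$, it suffices to track the scalar belief $\pi(s=0|q)$ at each node; I would write the forward recursion \eqref{eq:BCJR} as an explicit map $B(\cdot,y)$ for each $y\in\{-1,0,1,?\}$ and check, edge by edge, that $\pi_{S|Q=g(q,y)}=B(\pi_{S|Q=q},y)$ holds for every incoming edge of every node. Once this closes, Theorem \ref{theorem:lower} certifies $I(X,S;Y|Q)$ as achievable, matching the upper bound, and maximizing over $p\in[0,1]$ completes the proof.

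The main obstacle I foresee is twofold. The conceptual step is identifying the correct finite $Q$-graph at all: unlike the BEC, where the erasure merely freezes the belief, in the DEC the output $y=0$ is produced by both $(s=0,x=0)$ and $(s=1,x=1)$, so a $0$ does not resolve the state and the belief dynamics are genuinely nontrivial; the three nodes must be read off from the discreteness of the DP histogram. The technical step is verifying BCJR-invariance in the presence of this ambiguity, since both the $y=0$ and the $y=?$ branches move the belief to an interior value depending jointly on $p$ and on the current conditional $\pi(s=0|q)$, so I must confirm that the three conditional distributions form a set that is closed under the BCJR map for all four outputs simultaneously, rather than only for the state-resolving outputs as in the erasure example. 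A secondary subtlety is the reduction of the unconstrained multi-parameter family $\mathcal{P}_\pi$ to the single scalar $p$ in the supremum; if either reduction or the invariance fails, I would fall back on a possibly larger $Q$-graph for the achievability direction, as was done in Fig. \ref{fig:BEC_Q_FO} for the BEC.
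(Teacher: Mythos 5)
Your proposal is correct and follows essentially the same route as the paper: the upper bound from Theorem \ref{theorem:main} with the three-node $Q$-graph of Fig. \ref{fig:DEC_Q}, the observation $H(Y|X,S,Q)=H_2(\epsilon)$, a symmetry reduction of the input family (the paper keeps two parameters $a,p$ and then disposes of $a$ via $H_2(a)\leq 1$), and a matching lower bound by verifying BCJR-invariance of the maximizing input under Theorem \ref{theorem:lower}. The obstacles you flag (the interior belief updates for $y=0$ and $y=?$, and the reduction to a single scalar) are exactly the points the paper's proof handles explicitly, so no gap remains.
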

Theorem \ref{theorem:DEC} is obtained by calculating the upper bound with the $Q$-graph from Fig. \ref{fig:DEC_Q}, and the lower bound follows from the sufficient condition provided in Theorem \ref{theorem:lower}. Indeed, this $Q$-graph has a nice interpretation as $q_1$ and $q_2$ correspond to a perfect knowledge of the channel state by the decoder, while $q_3$ implies that the decoder does not know the channel state.

\begin{proof}[Proof of Theorem \ref{theorem:DEC}]
For the $Q$-graph in Fig. \ref{fig:DEC_Q}, its corresponding $(S,Q)$-graph can be described with a matrix. Each input to the matrix corresponds to a pair $(x,y)$ of input and output, and we use $\phi$ as a notation for all possible channel outputs.
\begin{align*}
                   \begin{array}{c|c|c|c|c|c|c}
                         -    & (0,q_1) & (1,q_1)  &  (0,q_2) & (1,q_2) &   (0,q_3)  & (1,q_3) \\ \hline
                    (s=0,q_1) & (0,0)     &   -        &    -       & (1,1)     &    (0,?)     & (1,?)    \\
                    (s=1,q_1) & (0,-1)    &  (1,0)     &      -     &      -    &    (0,?)     & (1,?)    \\
                    (s=0,q_2) &    -      &    -       &    (0,0)   & (1,1)     &    (0,?)     & (1,?)    \\
                    (s=1,q_2,) & (0,-1)    &      -     &     -      & (1,0)     &    (0,?)     & (1,?)     \\
                    (s=0,q_3) &  -        &     -      &       -    & (1,1)     &    (0,\phi)  & (1,?)    \\
                    (s=1,q_3) & (0,-1)    &     -      &      -     &    -      &    (0,?)     & (1,\phi)    \\
                   \end{array}
\end{align*}
From the matrix above, it can be noted that the nodes $(s=1,q_1)$ and $(s=0,q_2)$ are not in the single closed communicating class that is formed by all other states. This closed class is aperiodic since there is a loop with length $1$ for the node $(s=0,q_1)$.

By exploiting the symmetry of the channel and the $(S,Q)$-graph, the maximization on input distributions can be limited to:
\begin{align*}
  p(x=0|q_1,s=0)&=p(x=1|q_2,s=1)= a \\
  p(x=1|q_3,s=0)&=p(x=0|q_3,s=1)= p.
\end{align*}
It follows that the stationary distribution is:
\begin{align}\label{eq:DEC_stationary}
               \pi_{0,1} &=  \frac{(1- \epsilon)p}{2 \epsilon  + 2 (1-\epsilon)p}  \nn\\
               \pi_{1,2} &=  \frac{(1- \epsilon)p}{2 \epsilon  + 2 (1-\epsilon)p}  \nn\\
               \pi_{0,3} &=  \frac{\epsilon}{2 \epsilon  + 2 (1-\epsilon)p}  \nn\\
               \pi_{1,3} &=  \frac{\epsilon}{2 \epsilon  + 2 (1-\epsilon)p},
\end{align}
where $\pi_{i,j}=\pi(s=i,q_j)$.

Consider the following chain of equalities:
\begin{align}\label{eq:DEC_HYQ}
H(Y|Q)&= \sum_{q=1}^3 (\pi_{0,q} + \pi_{1,q})H(Y|Q=q) \nn\\
&\stackrel{(a)}=\frac{(1- \epsilon)p}{\epsilon  + (1-\epsilon)p}H_3((1-\epsilon)a,(1-\epsilon)(1-a)) + \frac{\epsilon}{\epsilon  + (1-\epsilon)p}H_4\left(\epsilon,(1-\epsilon)\frac{p}{2},(1-\epsilon)\frac{p}{2}\right) \nn\\
&\stackrel{(b)}=(1-\epsilon)\left[\frac{(1- \epsilon)p}{\epsilon  + (1-\epsilon)p}H_{2}(a) + \frac{\epsilon}{\epsilon  + (1-\epsilon)p}H_3\left(\frac{p}{2},\frac{p}{2}\right)\right] + H_2(\epsilon) \nn\\
&\stackrel{(c)}=(1-\epsilon)\left[\frac{(1- \epsilon)pH_{2}(a)}{\epsilon  + (1-\epsilon)p} + \frac{\epsilon(p+H_2(p))}{\epsilon  + (1-\epsilon)p}\right] + H_2(\epsilon),
\end{align}
where $(a)$ is obtained by substituting the stationary distribution from \eqref{eq:DEC_stationary}, $(b)$ follows from the identity $H_3((1-\delta)\gamma,(1-\delta)(1-\gamma)) = H_2(\delta) + (1-\delta)H_2(\gamma)$ by choosing $\delta=\epsilon,\gamma=a$. Finally, $(c)$ follows from the above identity by choosing $1-\delta=p$ and $\gamma=\frac{1}{2}$.

The upper bound on the capacity can then be established:
\begin{align*}
C_\text{DEC}&\leq\sup_{P_{X|S,Q}\in\mathcal{P}_\pi} I(X,S;Y|Q)\\
&\stackrel{(a)}= \max_{a,p} H(Y|Q) - H(Y|X,S,Q)\\
&\stackrel{(b)}=\max_{a,p} (1-\epsilon)\left[\frac{(1- \epsilon)pH_{2}(a)}{\epsilon  + (1-\epsilon)p} + \frac{\epsilon(p+H_2(p))}{\epsilon  + (1-\epsilon)p}\right]\\
&\stackrel{(c)}=\max_{p} (1-\epsilon)\frac{ p + \epsilon H_2(p)}{\epsilon  + (1-\epsilon)p},
\end{align*}
where $(a)$ follows from $H(Y|X,S,Q)=H_2(\epsilon)$, $(b)$ follows from \eqref{eq:DEC_HYQ} and $H(Y|X,S,Q)= H_2(\epsilon)$ and $(c)$ follows from $H_2(a)\leq 1$.

For the lower bound on feedback capacity, we simply take the maximizing distribution from the upper bound $p(x=0|s=0,q_1)=p(x=1|s=1,q_2)=0.5$ and $p(x=1|s=0,q_3)=p(x=0|s=1,q_3)=p$ for some $p\in[0,1]$ and show that the BCJR-invariant property is satisfied. This input distribution is an aperiodic input since the $(S,Q)$-graph has a loop with length $1$. The stationary distribution which is given in \eqref{eq:DEC_stationary} gives that $[\pi(s=0|q_1),\pi(s=0|q_2),\pi(s=0|q_3)] = [1,0,0.5]$.

The BCJR equation can be calculated:
\begin{equation*}
p(s=0|g(q,y) )=\left\{\begin{array}{cc}
 1 & \text{if } y=-1, \\
 0 & \text{if } y=1, \\
 \pi(s=0|q)p(x=0|s=0,q)+\pi(s=1|q)p(x=1|s=0,q) & \text{if } y=?, \\
  \frac{\pi(s=0|q)p(x=0|s=0,q)}{\pi(s=0|q)p(x=0|s=0,q)+\pi(s=1|q)p(x=1|s=1,q)} & \text{if } y=0.\end{array}\right.
\end{equation*}

To show the BCJR-invariant property, it is convenient to treat each output observation separately. First, it is easy to note that all edges with $y=-1$ or $y=1$ lead to $q_1$ and $q_2$, respectively, which approves the invariant property since $\pi(s=0|q_1)=1$ and $\pi(s=0|q_2)=0$. For the output $y=?$, one can show that
\begin{align*}
  \pi(s=0|q_i)p(x=0|s=0,q_i)+\pi(s=1|q_i)p(x=1|s=0,q_i)&=0.5,
\end{align*}
for $i=1,2,3$. For the output $y=0$, the BCJR-invariant property can be established in a similar manner and this concludes that the input is BCJR-invariant. Since we used the $Q$-graph and the maximizer of the upper bound, there is no need to calculate the expression $I(X,S;Y|Q)$ since, obviously, it equals the upper bound.
\end{proof}
\begin{figure}[h]
\centering
    \psfrag{E}[][][1]{$y=1$}
    \psfrag{D}[][][1]{$y=0$}
    \psfrag{C}[][][1]{$q_3$}
    \psfrag{B}[][][1]{$q_2$}
    \psfrag{A}[][][1]{$q_1$}
    \includegraphics[scale = 0.5]{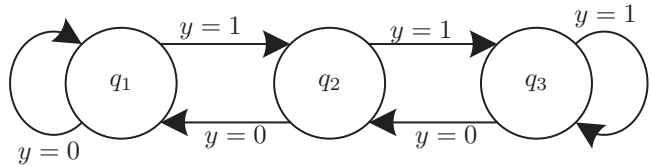}
    \caption{$Q$-contexts graph for the trapdoor channel.}
    \label{fig:Q1_trapdoor}
\end{figure}

\subsection{Trapdoor channel}\label{subsec:trapdoor}
The trapdoor channel was invented by Blackwell \cite{Blackwell_trapdoor} in 1961. The capacity of this channel has been investigated in several papers and still remained an open problem. The channel has $\mathcal{S}=\mathcal{X}=\mathcal{Y}=\{0,1\}$. The output of the channel $y_t$ is equal to $s_{t-1}$ with probability $p$ and equals $x_t$ with probability $1-p$. Here, $p$ is the channel parameter and can take any value in $[0,1]$. Finally, the channel state is $s_t=s_{t-1}\oplus x_t\oplus y_t$, where $\oplus$ is the XOR operation.

In \cite{PermuterCuffVanRoyWeissman08}, the feedback capacity for the trapdoor channel with parameter $p=0.5$ was shown to be $\log_2 \phi$, where $\phi$ is the known golden ratio. The solution relied on a DP formulation of the problem and, then, establishing a solution for the Bellman equation. As a first stage, we would like to provide an alternative converse for $p=0.5$ which simplifies their original proof.

Direct application of Theorem \ref{theorem:main} with the $Q$-graph in Fig \ref{fig:Q1_trapdoor} gives:
\begin{theorem}[Upper bound]\label{theorem:trapdoor_Q1}
The feedback capacity of the trapdoor channel is bounded by
\begin{align}\label{eq:Q1_trapdoor}
C_{\text{Trap}}(p)&\leq \max_{(\alpha_1,\alpha_2,\alpha_3)\in[0,1]^3}  2(\kappa_1+\kappa_2)H_2\left(\frac{\kappa_1(1-\alpha_1(1-p)) + \kappa_2(1-p)\alpha_2}{\kappa_1+\kappa_2}\right) \nn\\& - 2H_2(p)(\kappa_1\alpha_1 + \kappa_2\alpha_2-0.5\alpha_3) + 2\kappa_3,
\end{align}
where
\begin{align*}
  \delta &= 2(1-p)[ \alpha_1 -\alpha_2 + \alpha_1\alpha_3 - \alpha_1\alpha_2 + \alpha_2\alpha_3] + 4\alpha_1p - 2\alpha_3 + 2 \\
  \kappa_1 &= \frac{(1-\alpha_3)(1-\alpha_2 (1-p) )}{\delta} \\
  \kappa_2 &= \frac{\alpha_1(p + \alpha_3 (1-p))}{\delta} \\
  \kappa_3 &= \frac{\alpha_1(1- \alpha_2(1-p))}{\delta}.
\end{align*}
\end{theorem}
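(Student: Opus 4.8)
The plan is to invoke Theorem \ref{theorem:main} directly: for the fixed $Q$-graph of Fig. \ref{fig:Q1_trapdoor} the feedback capacity is bounded by $\sup_{P_{X|S,Q}\in\mathcal{P}_\pi} I(X,S;Y|Q)$, so the entire task reduces to evaluating this mutual information in closed form and then optimizing over the input matrix. First I would construct the $(S,Q)$-coupled graph on $\mathcal{S}\times\mathcal{Q}=\{0,1\}\times\{q_1,q_2,q_3\}$ following the construction in Section \ref{subsec:main_contexts}: an edge $(s,q)\to(s',q')$ labelled $(x,y)$ exists whenever $p(y|x,s)>0$, $s'=s\oplus x\oplus y$, and $q'=g(q,y)$ is read off the graph. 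The trapdoor kernel makes $y$ deterministic when $x=s$ (then $y=x=s$) and a $\{p,1-p\}$ coin when $x\neq s$, which keeps the edge enumeration short. I would then identify the unique closed communicating class and confirm it is aperiodic (a self-loop of length one suffices), so that a stationary distribution exists and the hypotheses of Theorem \ref{theorem:main} are met for every admissible input in $\mathcal{P}_\pi$.

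The second ingredient is a symmetry reduction. The trapdoor channel and the $Q$-graph are both invariant under the bit-flip $0\leftrightarrow1$, so the optimizing input can be assumed to respect this symmetry; this collapses the six a priori free entries of $P_{X|S,Q}$ down to the three parameters $\alpha_1,\alpha_2,\alpha_3$ appearing in the statement. With the input parameterized, I would solve the linear balance equations $\pi_{S,Q}\,T[P_{X|S,Q}]=\pi_{S,Q}$ for the stationary distribution; the denominator $\delta$ is exactly the normalizing constant of that solution, and $\kappa_1,\kappa_2,\kappa_3$ are the resulting masses on a set of representative $(s,q)$ pairs, the factor $2$ throughout \eqref{eq:Q1_trapdoor} reflecting their bit-flipped partners.

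Finally I would evaluate $I(X,S;Y|Q)=H(Y|Q)-H(Y|X,S,Q)$. The conditional term is immediate from the kernel, since $H(Y\mid X=x,S=s)=0$ when $x=s$ and $H_2(p)$ when $x\neq s$; hence $H(Y|X,S,Q)=H_2(p)\cdot\Pr[X\neq S]$, which under the stationary distribution produces the $-2H_2(p)(\kappa_1\alpha_1+\kappa_2\alpha_2-0.5\alpha_3)$ contribution. For $H(Y|Q)$ I would compute, for each context, the induced output law $p(y|q)=\sum_{s,x}p(y|x,s)p(x|s,q)\pi(s|q)$ and recognize the resulting binary entropies; the symmetry pairs $q_1$ with $q_2$ and leaves $q_3$ self-symmetric (its output being equiprobable, contributing one bit), which is what yields the clean form $2(\kappa_1+\kappa_2)H_2(\cdot)+2\kappa_3$. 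Assembling these pieces gives the bracketed expression, and maximizing over $(\alpha_1,\alpha_2,\alpha_3)\in[0,1]^3$ completes the bound.

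I expect the main obstacle to be computational rather than conceptual: correctly solving the stationary-distribution system and bookkeeping the per-context output distributions so that the algebra condenses into the stated $\delta,\kappa_i$. A secondary subtlety is ensuring that the edge removal induced by zeros of $P_{X|S,Q}$ keeps a single closed class for all interior $(\alpha_1,\alpha_2,\alpha_3)$, i.e. that the chosen parameterization stays inside $\mathcal{P}_\pi$; degenerate boundary choices can split the class, but these do not affect the supremum.
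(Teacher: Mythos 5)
Your proposal matches the paper's approach exactly: the paper omits the details and states only that the theorem ``follows by direct application of Theorem \ref{theorem:main} with the $Q$-graph from Fig.~\ref{fig:Q1_trapdoor},'' and the steps you outline (building the $(S,Q)$-coupled graph, identifying the aperiodic closed class, reducing to three parameters by bit-flip symmetry, solving for the stationary distribution, and evaluating $H(Y|Q)-H(Y|X,S,Q)$) are precisely the computation the paper carries out explicitly in its analogous BEC and DEC examples.
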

The proof of Theorem \ref{theorem:trapdoor_Q1} is omitted and follows by direct application of Theorem \ref{theorem:main} with the $Q$-graph from Fig. \ref{fig:Q1_trapdoor}. A special case of Theorem \ref{theorem:trapdoor_Q1} is when $p=0.5$ and careful calculation gives
\begin{corollary}[Upper bound, $p=0.5$]\label{coro:trapdoor_05}
The feedback capacity of the trapdoor channel with $p=0.5$ is bounded by
\begin{align}\label{eq:trapdoor_05}
C_{\text{Trap}}(0.5)&\leq \log_2 \phi.
\end{align}
\end{corollary}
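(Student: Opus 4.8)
The plan is to specialize the three-parameter bound of Theorem~\ref{theorem:trapdoor_Q1} to $p=0.5$ and to show that the resulting maximum equals $\log_2\phi$. Setting $p=0.5$ gives $1-p=\tfrac12$ and $H_2(p)=1$, so the penalty term $-2H_2(p)(\kappa_1\alpha_1+\kappa_2\alpha_2-0.5\alpha_3)$ collapses to a purely linear expression, while $\delta,\kappa_1,\kappa_2,\kappa_3$ become quadratic polynomials in $(\alpha_1,\alpha_2,\alpha_3)$. After this substitution the objective is a single entropy term $2(\kappa_1+\kappa_2)H_2(\cdot)$ plus rational corrections sharing the common denominator $\delta$, and I would first clear $\delta$ to put the objective into a compact form before any differentiation.

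Next I would locate the maximizer through the first-order optimality conditions $\partial/\partial\alpha_i=0$, $i=1,2,3$. Since the objective is smooth on $(0,1)^3$, an interior critical point is characterized by these three equations, and I expect the stationarity condition for the argument of $H_2(\cdot)$ to force a relation among the $\alpha_i$ that can be reparameterized by a single scalar $t$. The golden ratio should enter precisely here: collapsing the optimality system ought to yield the quadratic $t^2+t-1=0$, whose root $t=1/\phi=\phi-1$ satisfies $t^2=1-t$. Substituting this root back and applying the defining identity $\phi^2=\phi+1$ repeatedly should telescope the entropy and rational terms to the clean value $\log_2\phi$.

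The main obstacle is the nonconvex, three-dimensional optimization: the first-order system mixes logarithmic (entropy-derivative) terms with rational functions of the $\alpha_i$, so isolating the critical point and proving that the algebra simplifies \emph{exactly} to $\log_2\phi$---rather than to a nearby numerical value---is the delicate part. To manage it I would search for structural simplifications before differentiating: candidate boundary values such as $\alpha_3=0$ or $\alpha_i=1$, together with the channel and $(S,Q)$-graph symmetry already used to parameterize the input, may decouple the variables and reduce the effective problem to one dimension. Finally, to certify that the interior critical point is a global maximum I would check the Hessian (or concavity of the reduced one-variable objective) and compare it against the boundary faces of the cube $[0,1]^3$, confirming that no face attains a larger value. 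As a consistency check, the known feedback capacity $\log_2\phi$ of Permuter et al.\ for $p=0.5$ coincides with the claimed bound, so a matching achievability argument would render the inequality tight---though the corollary asserts only the upper bound.
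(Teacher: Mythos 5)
Your proposal is a roadmap rather than a proof, and the roadmap's central steps are both unexecuted and unlikely to go through as described. You propose to find an interior critical point of the full three-variable objective by solving $\partial/\partial\alpha_i=0$ and then to certify global optimality via the Hessian and a check of all boundary faces of $[0,1]^3$. But the first-order system mixes $\log\frac{1-u}{u}$ terms (from $H_2'$ applied to a ratio of quadratics in the $\alpha_i$) with rational functions whose common denominator $\delta$ itself depends on all three variables; you give no reason to believe this system ``collapses'' to $t^2+t-1=0$ beyond the expectation that the golden ratio must appear somewhere. Moreover, there is no guarantee the maximizer is interior: the paper's analysis shows the objective decomposes as $\lambda_1+\lambda_2$ with $\lambda_2\le 0$ and $\lambda_2=0$ only on part of the boundary of the cube (e.g.\ $\alpha_3=0$ with $\alpha_1\alpha_2=0$), so an interior-critical-point argument may simply miss the supremum. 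Every hedged phrase in your write-up (``I expect,'' ``ought to yield,'' ``should telescope'') marks a step that is precisely the hard part.

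The paper avoids the optimization entirely. It splits the bound of Theorem~\ref{theorem:trapdoor_Q1} at $p=0.5$ into $\lambda_1+\lambda_2$ and bounds each term separately: (i) $\lambda_2\le 0$ because its numerator, viewed as a quadratic in $\alpha_3$ with positive leading coefficient, is nonpositive at both $\alpha_3=0$ and $\alpha_3=1$, hence on all of $[0,1]$; (ii) $\lambda_1\le\log_2\phi$ via the exact algebraic identities $\kappa_1\cdot 0.5\alpha_1+\kappa_2(1-0.5\alpha_2)=\kappa_3$ and $\kappa_1+\kappa_2+\kappa_3=0.5$, which together with $H_2(u)=H_2(1-u)$ reduce the entropy term to $\frac{H_2(p)}{1+p}$ for a single reparameterized scalar $p$, whose maximum over $[0,1]$ is the classical $\log_2\phi$. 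The inequality $\max(\lambda_1+\lambda_2)\le\max\lambda_1+\max\lambda_2$ then finishes the proof with no calculus in three variables at all. If you want to salvage your approach, the essential missing ideas are exactly these two: the sign of the penalty term, and the identities among the $\kappa_i$ that turn the entropy term into the one-dimensional golden-ratio problem.
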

Therefore, it follows that the upper bound from Theorem \ref{theorem:trapdoor_Q1} is tight for $p=0.5$. Note that, at this point, the tightness of the upper bound follows from our previous knowledge of the feedback capacity in \cite{PermuterCuffVanRoyWeissman08}. Next, we use Theorem \ref{theorem:lower} to show that $\log_2 \phi$ is achievable not only for $p=0.5$ but for all $p\in[0,1]$.
\begin{theorem}[Lower bound]\label{theorem:trapdoor_lower}
The feedback capacity of the trapdoor channel is bounded by
\begin{align}\label{eq:trapdoor_lower}
  C_{\text{Trap}}(p)&\geq \log_2\phi,
\end{align}
for all $p$.
\end{theorem}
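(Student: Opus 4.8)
The plan is to invoke Theorem~\ref{theorem:lower} with the three-node $Q$-graph of Fig.~\ref{fig:Q1_trapdoor} and to exhibit a single input distribution $P_{X|S,Q}$ that is aperiodic, lies in $\mathcal{P}_\pi$, is BCJR-invariant, and satisfies $I(X,S;Y|Q)=\log_2\phi$ \emph{for every} $p\in[0,1]$. Since Theorem~\ref{theorem:lower} certifies that any such distribution yields an achievable rate, it is enough to produce one; in particular, there is no optimization to perform, only a verification.

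First I would construct the $(S,Q)$-coupled graph for Fig.~\ref{fig:Q1_trapdoor} using the trapdoor dynamics $s_t=s_{t-1}\oplus x_t\oplus y_t$ together with the crossover law ($y=s_{t-1}$ w.p.\ $p$, $y=x_t$ w.p.\ $1-p$), identify its unique closed communicating class, and confirm aperiodicity via a self-loop, exactly as in the DEC computation. Exploiting the $0\leftrightarrow1$ symmetry of both the channel and the graph, I would reduce $P_{X|S,Q}$ to a small number of free parameters, treating $q_1,q_2$ as symmetric high-confidence contexts and $q_3$ as the low-confidence context. Because $|\mathcal{S}|=2$, each context is summarized by a single belief $\beta_q\triangleq\pi(S=0|Q=q)$, so the BCJR-invariant property reduces to checking, for each edge $q\to g(q,y)$ of the graph, that the scalar BCJR map $B(\beta_q,y)$ of \eqref{eq:BCJR} returns $\beta_{g(q,y)}$.

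These invariance conditions form a closed system of fixed-point equations in the belief values and the input parameters. I expect their solution to pin the beliefs to golden-ratio values (e.g.\ $\{\phi-1,\,2-\phi,\,\tfrac12\}$) and to force the remaining input parameters to depend on $p$ in precisely the way needed to stay consistent across the noisy $y=s_{t-1}$ and clean $y=x_t$ transitions. With the invariant beliefs fixed, I would solve the balance equations $\pi_{S,Q}\,T[P_{X|S,Q}]=\pi_{S,Q}$ for the stationary distribution and then expand $I(X,S;Y|Q)=H(Y|Q)-H(Y|X,S,Q)$, using $H(Y|X,S,Q)=H_2(p)\,\Pr(X\neq S)$ together with a binary/ternary entropy identity of the type already deployed for the BEC and DEC.

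The main obstacle is the $p$-independence of the resulting rate: the BCJR fixed point and the stationary-weighted entropy difference must conspire so that every explicit $p$ and every $H_2(p)$ cancels, collapsing the expression to the constant $\log_2\phi$. This cancellation is exactly what elevates the bound from the single value $p=\tfrac12$, where Corollary~\ref{coro:trapdoor_05} already matches it against the upper bound, to a statement uniform in $p$; a naive attempt that simply reuses the $p=\tfrac12$ maximizer will not be BCJR-invariant for $p\neq\tfrac12$, which is why the $p$-dependent adjustment of the input parameters is essential. Once the algebra reduces to $\log_2\phi$, Theorem~\ref{theorem:lower} yields $C_{\text{Trap}}(p)\geq\log_2\phi$ for all $p$, completing the proof.
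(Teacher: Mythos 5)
Your high-level strategy is the same as the paper's: invoke Theorem~\ref{theorem:lower} with a fixed $Q$-graph, exhibit a BCJR-invariant aperiodic input, and show the resulting rate collapses to $\log_2\phi$ independently of $p$. However, there is a concrete gap in the choice of $Q$-graph. You propose to work on the three-node graph of Fig.~\ref{fig:Q1_trapdoor}, whereas the paper's proof (Appendix~\ref{app:trapdoor_lower}) is carried out on the \emph{expanded} four-node graph of Fig.~\ref{fig:Q2_trapdoor}, and the extra node is not cosmetic. The working construction requires the four conditional beliefs
\begin{align*}
[\pi(s=0|q_1),\pi(s=0|q_2),\pi(s=0|q_3),\pi(s=0|q_4)] = [(1-p)z,\ 1-z,\ z,\ 1-(1-p)z],
\end{align*}
with $z$ a free parameter later tuned so that $\alpha=\tfrac{zp}{1-(1-p)z}$ maximizes $\tfrac{H_2(\alpha)}{2-\alpha}$; these are four distinct, $p$-dependent values for generic $p$, so they cannot be supported by three contexts. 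Your expectation that the fixed-point system pins the beliefs to universal golden-ratio constants such as $\{\phi-1,\,2-\phi,\,\tfrac12\}$ is therefore not what happens: the beliefs themselves vary with $p$ (only paired by the $0\leftrightarrow 1$ symmetry, $\beta_{q_1}=1-\beta_{q_4}$ and $\beta_{q_2}=1-\beta_{q_3}$), and the golden ratio emerges only at the very end, from maximizing the $p$-free expression $I(X,S;Y|Q)=\tfrac{H_2(\alpha)}{2-\alpha}$ over $\alpha$. Indeed, if one tries to force the four beliefs onto three values (e.g.\ $(1-p)z=1-z$), the induced rate in this family becomes $\tfrac{H_2(p)}{2-p}$, which equals $\log_2\phi$ only at $p=\phi-1$; this is strong evidence that your three-node plan cannot deliver the bound uniformly in $p$.

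Beyond the graph choice, the proposal leaves the decisive steps as expectations rather than verifications: the existence of a consistent solution to the BCJR fixed-point system and the exact cancellation of all $p$- and $H_2(p)$-dependence are precisely the content of the proof, not corollaries of the setup. In the paper these are handled by an explicit input assignment (deterministic on $q_1,q_2$ for $s=0$ and on $q_3,q_4$ for $s=1$, and equal to $\alpha=\tfrac{zp}{1-(1-p)z}$ on the complementary pairs), after which the per-node rewards $zH_2(p)$ and $H_2(\alpha)-(1-\alpha)zH_2(p)$ combine under the stationary weights $\bigl[\tfrac{1-\alpha}{4-2\alpha},\tfrac{1}{4-2\alpha},\tfrac{1}{4-2\alpha},\tfrac{1-\alpha}{4-2\alpha}\bigr]$ so that the $zH_2(p)$ terms cancel exactly. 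To repair your argument, replace Fig.~\ref{fig:Q1_trapdoor} by Fig.~\ref{fig:Q2_trapdoor} and carry out this computation explicitly.
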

Corollary \ref{coro:trapdoor_05} and Theorem \ref{theorem:trapdoor_lower} provide an alternative proof for the feedback capacity presented in \cite{PermuterCuffVanRoyWeissman08}. The proofs of Corollary \ref{coro:trapdoor_05} and Theorem \ref{theorem:trapdoor_lower} appear in Appendix \ref{app:coro_trapdoor_05} and Appendix \ref{app:trapdoor_lower}, respectively.

\begin{figure}[h]
\centering
        \psfrag{A}[][][1.2]{$q_1$}
        \psfrag{B}[][][1.2]{$q_2$}
        \psfrag{C}[][][1.2]{$q_3$}
        \psfrag{D}[][][1.2]{$q_4$}
        \psfrag{E}[][][0.9]{$y=0$}
        \psfrag{G}[][][1]{$y=1$}
        \centerline{\includegraphics[scale = 0.5]{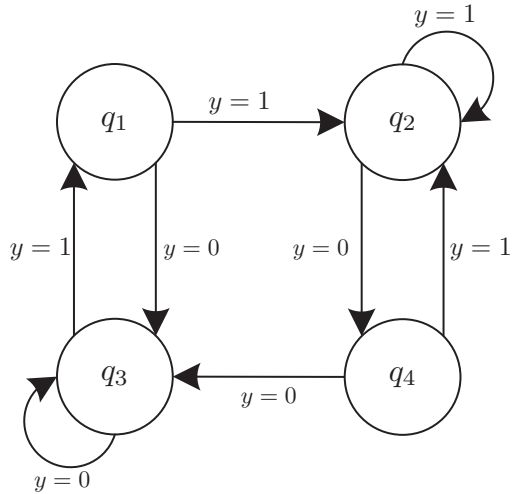}}
    \caption{Expanded $Q$-contexts graph for the trapdoor channel.}
\label{fig:Q2_trapdoor}
\end{figure}

Let us extend our realm of interest to a general parameter; numerical evaluation of Theorem \ref{theorem:trapdoor_Q1} and a lower bound that is obtained from DP simulations give the plotted results in Fig. \ref{fig:trapdoor_simulations}. Coarse inspection shows that the upper bound and the lower bound do not coincide in general, except for when $p=0.5$. Now, an \textit{expanded $Q$-graph} is introduced in Fig. \ref{fig:Q2_trapdoor} and is plotted in Fig. \ref{fig:trapdoor_simulations} with the same lower bound from DP simulations. It can be seen that the new upper bound shows a significant improvement in comparison with the upper bound in Fig. \ref{fig:trapdoor_simulations}.


\begin{figure}[!h]
\begin{center}
\subfiguretopcaptrue
\subfigure[]{
    \includegraphics[scale = 0.35]{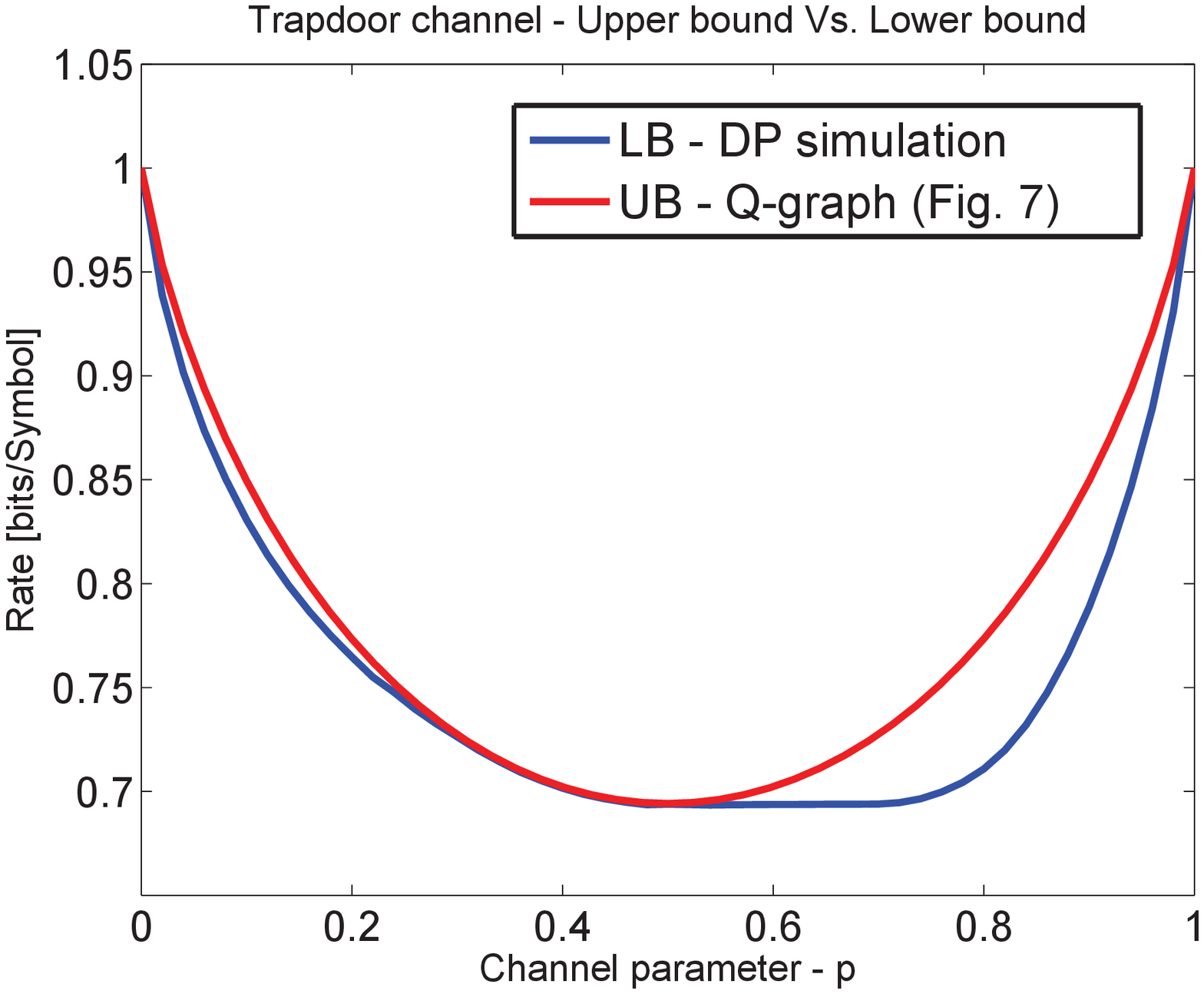}}
\subfigure[]{
    \includegraphics[scale = 0.35]{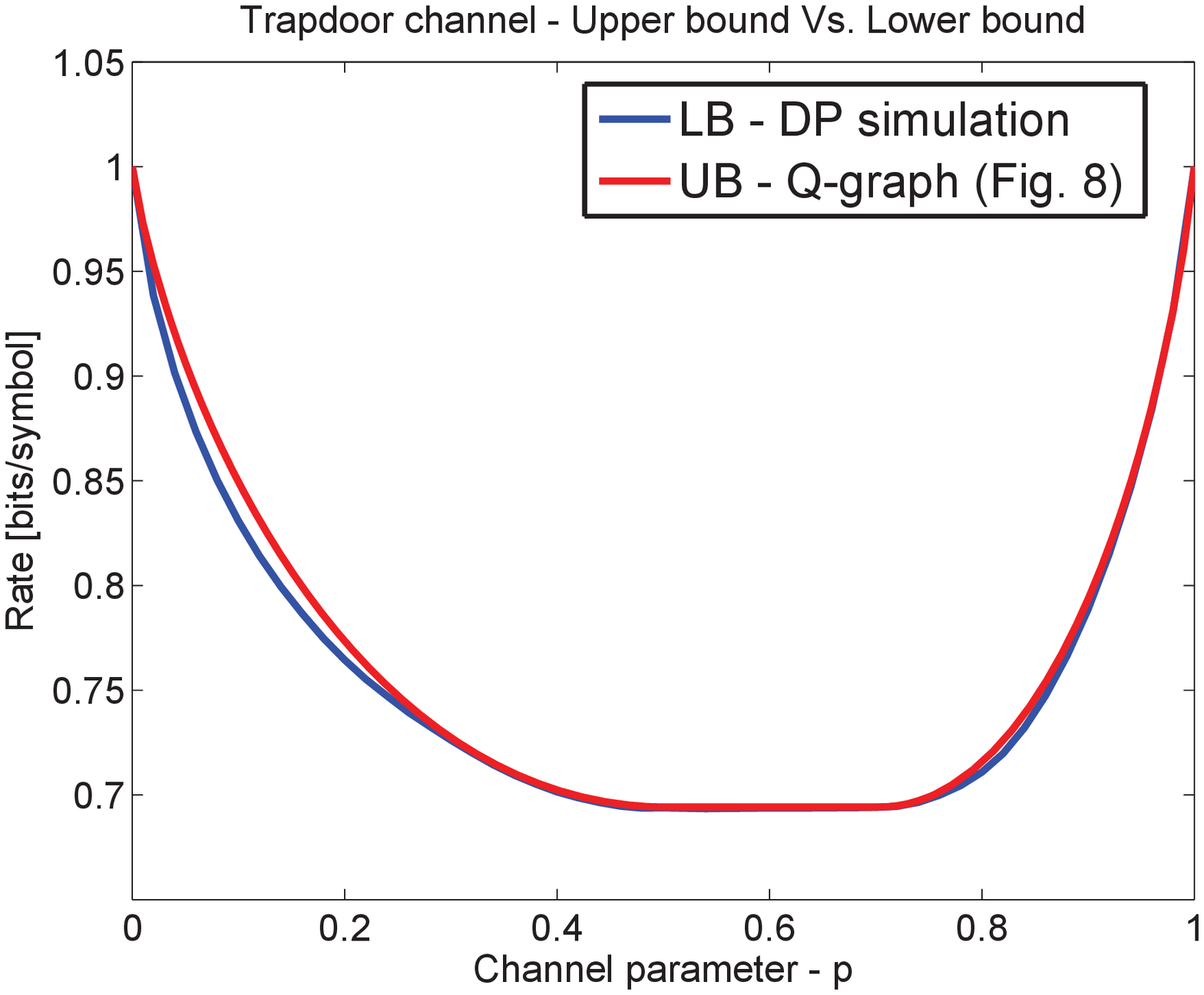}}
\caption{A comparison between a lower bound (LB) on the feedback capacity that is achieved from DP simulation, and two upper bounds that were obtained from Theorem \ref{theorem:main}. In $(a)$, the upper bound is calculated  with the $Q$-graph (Fig. \ref{fig:Q1_trapdoor}), while in $(b)$ the upper bound is calculated with the expanded $Q$-graph (Fig. \ref{fig:Q2_trapdoor}).}
\label{fig:trapdoor_simulations}
\end{center}
\end{figure}

\section{Proof of Theorem \ref{theorem:main}}\label{sec:proof}
An outline of the proof of Theorem \ref{theorem:main} is given here and comprises three building blocks appearing in Lemmas \ref{lemma:step1} - \ref{lemma:step3}. The first step expresses the essence of our bound and is encapsulated in the following lemma:
\begin{lemma}[Step $1$]\label{lemma:step1}
For a strongly connected unifilar state channel, where $s_0$ is available to both the encoder and the decoder,
\begin{align}\label{eq:step1}
  C_{\text{fb}}&\leq \sup_{\{p(x_t|s_{t-1},q_{t-1})\}_{t\geq 1}} \liminf_{N\rightarrow \infty}\frac{1}{N} \sum_{i=1}^N I(X_i,S_{i-1};Y_i|Q_{i-1}),
\end{align}
for all $Q$-contexts. The joint distribution is calculated with respect to
\begin{align*}
  p(s_0,q_0)\cdot\prod_{i=1}^N p(x_i|s_{i-1},q_{i-1}) p(y_i|x_i,s_{i-1})\mathbbm{1}_{s_i=f(s_{i-1},x_i,y_i)}\mathbbm{1}_{q_i=g(q_{i-1},y_i)}.
\end{align*}
\end{lemma}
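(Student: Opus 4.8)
The plan is to start from the capacity expression in Theorem~\ref{theorem:capacity_unifilar} and to rewrite each summand $I(X_i,S_{i-1};Y_i|Y^{i-1})$ so that the full output history $Y^{i-1}$ is replaced by its context $Q_{i-1}=\Phi_{i-1}(Y^{i-1})$. Decomposing $I(X_i,S_{i-1};Y_i|Y^{i-1}) = H(Y_i|Y^{i-1}) - H(Y_i|X_i,S_{i-1},Y^{i-1})$, I would treat the two terms separately. For the first, since $Q_{i-1}$ is a deterministic function of $Y^{i-1}$, one has $H(Y_i|Y^{i-1}) = H(Y_i|Y^{i-1},Q_{i-1}) \le H(Y_i|Q_{i-1})$, which is exactly the $Q$-context inequality. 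For the second, the memoryless property $p(y_i|x^i,s^{i-1},y^{i-1}) = p(y_i|x_i,s_{i-1})$ shows that $Y_i$ is conditionally independent of the entire past given $(X_i,S_{i-1})$; hence both $H(Y_i|X_i,S_{i-1},Y^{i-1})$ and $H(Y_i|X_i,S_{i-1},Q_{i-1})$ collapse to the same quantity $H(Y_i|X_i,S_{i-1})$. Combining the two observations yields the per-letter bound $I(X_i,S_{i-1};Y_i|Y^{i-1}) \le I(X_i,S_{i-1};Y_i|Q_{i-1})$, valid for every admissible policy $p(x_t|s_{t-1},y^{t-1})$.

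The next step is to reduce the optimization from history-dependent policies to Markov policies of the form $p(x_t|s_{t-1},q_{t-1})$, which is where the $(S,Q)$ structure enters. The observation is that, after the replacement above, the objective $\frac1N\sum_i I(X_i,S_{i-1};Y_i|Q_{i-1})$ depends on a policy only through the marginal laws of the pairs $(S_{i-1},Q_{i-1})$ and the induced conditionals $p(x_i|s_{i-1},q_{i-1})$ (again because $Y_i$ is conditionally independent of $Q_{i-1}$ given $(X_i,S_{i-1})$). Given any history-dependent policy, I would define the Markov kernel $\tilde p(x_t|s_{t-1},q_{t-1})$ equal to the conditional it induces and then prove by induction on $t$ that the two policies generate identical marginals for $(S_t,Q_t)$: the one-step transition of $(S_t,Q_t)$ from $(S_{t-1},Q_{t-1},X_t)$ is a fixed kernel determined by $p(y|x,s)$ together with $f$ and $g$, so matching the conditional law of $X_t$ given $(S_{t-1},Q_{t-1})$ propagates identical marginals forward from the common initial law $p(s_0,q_0)$. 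Consequently every summand, hence the whole average, is preserved, so the supremum over general policies equals the supremum over Markov policies at each finite horizon $N$.

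It remains to pass to the limit and recast the bound as a supremum of a $\liminf$ over a single infinite-horizon Markov policy, and I expect \textbf{this} to be the main obstacle rather than the two information-theoretic manipulations above. Steps~1 and~2 give $C_{\text{fb}} \le \limsup_N \sup \frac1N\sum_{i=1}^N I(X_i,S_{i-1};Y_i|Q_{i-1})$ over Markov policies, where for each $N$ the maximizing (time-varying) policy may depend on the horizon, whereas the target fixes one sequence $\{p(x_t|s_{t-1},q_{t-1})\}_{t\ge1}$ before taking $\liminf_N$. Bridging these requires an average-reward dynamic-programming argument on the finite state-action space $(\mathcal S\times\mathcal Q,\mathcal X)$: under strong connectivity the induced chain is communicating, so the normalized finite-horizon optima converge and are attained, in the $\liminf$ sense, by a stationary Markov policy. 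The delicate point is that the per-letter reward $I(X_i,S_{i-1};Y_i|Q_{i-1})$ is a \emph{concave} functional of the local occupation measure rather than a linear state-action reward, so the interchange must be justified through the convexity of the set of achievable occupation measures instead of a verbatim textbook result; I would expect the subsequent building blocks (Steps~2 and~3) to supply the ergodic and stationarity machinery that ultimately collapses this $\liminf$ to the single-letter quantity $\sup_{P_{X|S,Q}}I(X,S;Y|Q)$ of Theorem~\ref{theorem:main}.
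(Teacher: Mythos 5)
Your proposal matches the paper's proof essentially step for step: the conditioning-reduces-entropy argument combined with the Markov chain $Y_i-(X_i,S_{i-1})-Y^{i-1}-Q_{i-1}$ is exactly the paper's inequality replacing $Y^{i-1}$ by $Q_{i-1}$, and your induction showing that the objective depends on the policy only through the induced kernel $p(x_t|s_{t-1},q_{t-1})$ and the marginals of $(S_t,Q_t)$ is precisely the paper's lemma restricting the maximization domain. The only divergence is the final interchange of the limit with the supremum over infinite-horizon policies, which you correctly identify as the delicate point and sketch via an average-reward argument; the paper dispatches it at that step by invoking the corresponding argument of \cite{PermuterCuffVanRoyWeissman08} (Lemma 4) rather than deferring it to the later stationarity steps.
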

The proof of Lemma \ref{lemma:step1} appears in Section \ref{subsec:step1}.

The upper bound in Lemma \ref{lemma:step1} is still difficult to compute since it is given by a limiting expression. The second step of the proof is tedious but necessary for our derivation, as we show that it is sufficient to restrict our maximization domain to the stationary inputs distribution. This step relies heavily on the DP formulation of the upper bound in \eqref{eq:step1}; then, a known result from the literature is used to show the existence of an optimal stationary policy (equivalent to stationary inputs distribution). This second step is precisely outlined as follows:
\begin{lemma}[Step $2$]\label{lemma:step2}
It is sufficient to maximize the upper bound in \eqref{eq:step1} over stationary input distributions, i.e.,
\begin{align}\label{eq:step2}
\sup_{\{p(x_t|s_{t-1},q_{t-1})\}_{t\geq 1}} \liminf_{N\rightarrow \infty}\frac{1}{N} \sum_{i=1}^N I(X_i,S_{i-1};Y_i|Q_{i-1})&= \sup_{P_{X|S,Q}} \liminf_{N\rightarrow \infty}\frac{1}{N} \sum_{i=1}^N I(X_i,S_{i-1};Y_i|Q_{i-1}),
\end{align}
for all irreducible $Q$-graphs with $q_0$ that lies in an aperiodic closed class. The input distribution in the RHS of \eqref{eq:step2} is $P_{X|S,Q}$ at all times.
\end{lemma}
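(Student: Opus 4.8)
The plan is to read both sides of \eqref{eq:step2} as the optimal long-run average reward of an infinite-horizon controlled Markov chain and to show that this optimum is attained (in the $\liminf$ sense) within the stationary policies. The controlled chain lives on the \emph{finite} state space $\mathcal{S}\times\mathcal{Q}$: its state at stage $i$ is the pair $(S_{i-1},Q_{i-1})$, an action is a conditional input law $p(x\,|\,s,q)\in\Delta(\mathcal{X})$, and the stochastic transition $(s,q)\to(s',q')$ is induced by the channel $p(y|x,s)$ together with the deterministic updates $s'=f(s,x,y)$, $q'=g(q,y)$. A stationary policy is a fixed map $(s,q)\mapsto p(x|s,q)$, i.e.\ exactly a time-invariant $P_{X|S,Q}$, so ``optimal stationary policy'' and ``stationary input distribution'' coincide; this is what makes $\mathcal{S}\times\mathcal{Q}$ the correct state space, rather than the belief simplex. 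The one feature that blocks a direct appeal to textbook average-reward theory is that the per-stage reward $I(X_i,S_{i-1};Y_i|Q_{i-1})$ is \emph{not} a separable per-state-action reward: writing $I(X_i,S_{i-1};Y_i|Q_{i-1})=H(Y_i|Q_{i-1})-H(Y_i|X_i,S_{i-1})$, the second term is linear in the joint law $\nu_i\triangleq P_{X_i,S_{i-1},Q_{i-1}}$, while the first is only a \emph{concave} functional of $\nu_i$ (a sum over $q$ of perspectives of the entropy function).

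This concavity is the main tool rather than an obstruction. First I would record the trivial inequality $\text{LHS}\geq\text{RHS}$ in \eqref{eq:step2}, since stationary policies form a subset of all policies. For the reverse inequality, fix any (possibly non-stationary) policy, let $\nu_i$ be the induced joint law at stage $i$, and set $r(\nu_i)=I(X_i,S_{i-1};Y_i|Q_{i-1})$ with $r$ the concave functional above. By Jensen's inequality,
\begin{align*}
\frac{1}{N}\sum_{i=1}^N r(\nu_i)\;\leq\; r\!\left(\frac{1}{N}\sum_{i=1}^N \nu_i\right)\;=\;r(\bar\nu_N),
\end{align*}
so the entire finite-horizon average is controlled by a single time-averaged occupation measure $\bar\nu_N$. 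A telescoping flow-conservation argument shows that the $(S,Q)$-marginal of $\bar\nu_N$ satisfies the balance equations of the $(S,Q)$-graph up to an $O(1/N)$ boundary term $\tfrac1N(\mu_N-\mu_0)$; hence every limit point of $\bar\nu_N$ lies in the polytope $\mathcal{D}$ of occupation measures realizable by stationary policies in $\mathcal{P}_\pi$. Continuity of $r$ then gives $\liminf_N \tfrac{1}{N}\sum_i r(\nu_i)\leq \sup_{d\in\mathcal{D}} r(d)$, uniformly over all policies.

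It then remains to identify $\sup_{d\in\mathcal{D}}r(d)$ with the right-hand side of \eqref{eq:step2}. For a fixed $P_{X|S,Q}\in\mathcal{P}_\pi$, the definition in \eqref{eq:p_pi} together with irreducibility of the $Q$-graph and the aperiodicity hypothesis makes the induced chain unichain and aperiodic, so the per-stage law $\nu_i$ converges to the stationary occupation measure $\nu^\star=\pi_{S,Q}\,P_{X|S,Q}$; by Cesàro, $\liminf_N\tfrac1N\sum_i r(\nu_i)=r(\nu^\star)=I(X,S;Y|Q)$ evaluated under $\pi_{S,Q}$. Letting $P_{X|S,Q}$ range over $\mathcal{P}_\pi$ sweeps all of $\mathcal{D}$, so the right-hand side equals $\sup_{d\in\mathcal{D}}r(d)$; combined with the two displayed bounds this yields \eqref{eq:step2}. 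Equivalently, the last two steps can be phrased as an appeal to a standard existence-of-optimal-stationary-policy theorem for communicating finite-state average-reward problems (Borkar's convex-analytic method, or Puterman / Arapostathis et al.), the convex-analytic formulation being natural here precisely because the objective depends on the policy only through its occupation measure.

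The step I expect to be the main obstacle is exactly this non-separability of the reward: since $r$ is a concave functional of $\nu_i$ rather than an additive per-state cost, the classical Bellman-equation existence results do not apply verbatim, and one must route the argument through the occupation-measure description together with the Jensen bound. The secondary technical points are the $\liminf$-versus-$\limsup$ bookkeeping and the verification that the time-averaged measures $\bar\nu_N$ accumulate inside $\mathcal{D}$; both are handled by finiteness of $\mathcal{S}\times\mathcal{Q}$, boundedness and continuity of $r$, and the unichain/aperiodic structure guaranteed by the $\mathcal{P}_\pi$ and aperiodic-closed-class hypotheses.
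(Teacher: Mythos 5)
Your route is genuinely different from the paper's. The paper lifts the problem to a \emph{deterministic} DP whose state is the belief vector $P_{S_{t-1},Q_{t-1}}$ on the simplex, invokes a vanishing-discount existence theorem (Theorem 3.8 of \cite{DP_discount2avg}), and spends essentially all of its effort verifying the uniform-boundedness condition \textbf{(C5)} via convergence rates of periodic Markov chains, total-variation continuity of entropy, and an $\epsilon$-perturbed aperiodic policy. Your convex-analytic argument on the finite space $\mathcal{S}\times\mathcal{Q}$ is more economical and its first half is sound: the reward is indeed a concave functional of $\nu=P_{X,S,Q}$ (the term $H(Y|X,S)$ is linear and $H(Y|Q)$ is a sum over $q$ of perspectives of the entropy composed with linear maps of $\nu$), so Jensen gives $\tfrac1N\sum_i r(\nu_i)\le r(\bar\nu_N)$, and the telescoping argument correctly places every limit point of $\bar\nu_N$ in the balanced polytope.

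The gap is in the closing identification $\sup_{d\in\mathcal{D}}r(d)=\mathrm{RHS}$, and it is exactly where the non-separability you flagged bites. First, your polytope $\mathcal{D}$ of balanced measures is \emph{not} swept by the policies in $\mathcal{P}_\pi$: a balanced $d$ whose induced policy has several closed classes is a mixture $\sum_j\lambda_j d_j$ of per-class occupation measures and is not the stationary occupation measure of any unichain policy. Because $r$ is concave rather than linear, $r\bigl(\sum_j\lambda_j d_j\bigr)$ can strictly exceed $\max_j r(d_j)$ (by Lemma \ref{lemma:coupled} distinct closed classes share $Q$-values, so mixing inflates $H(Y|Q)$), so the usual multichain-to-unichain reduction for additive rewards, and likewise the per-class argument of Lemma \ref{lemma:pi}, do not dispose of these boundary points; you must show they are \emph{approached} by $\mathcal{P}_\pi$ policies via an explicit leakage-rate perturbation and continuity of the stationary distribution. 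Second, $P_{X|S,Q}\in\mathcal{P}_\pi$ together with the lemma's aperiodicity hypothesis does \emph{not} make the induced chain aperiodic: the hypothesis concerns the closed class of the full $(S,Q)$-graph under all actions, and a particular policy can still induce a periodic subchain, in which case $\nu_i$ only Ces\`{a}ro-converges and the achieved reward is $\tfrac1D\sum_k r(\pi_k(\xi))$, which by the very Jensen inequality you used can fall strictly below $r(\pi_{S,Q}\,P_{X|S,Q})$. Both defects are repairable by perturbing toward an aperiodic unichain policy (this is essentially what the $\epsilon$-policy construction in the proof of Lemma \ref{lemma:beta_upper_ast} accomplishes), but the fallback appeal to standard existence-of-optimal-stationary-policy theorems cannot substitute for this step, since, as you yourself note, those theorems are stated for per-state-action additive rewards.
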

The proof of Lemma \ref{lemma:step2} appears in Section \ref{subsec:step2}.

Finally, the calculation of the upper bound with stationary inputs can be made; a minor restriction on the maximization domain verifies the existence of a stationary distribution on the $(S,Q)$-graph and, then,
\begin{lemma}[Step $3$]\label{lemma:step3}
\begin{align}\label{eq:step3}
\sup_{P_{X|S,Q}} \liminf_{N\rightarrow \infty}\frac{1}{N} \sum_{i=1}^N I(X_i,S_{i-1};Y_i|Q_{i-1}) &\leq \sup_{P_{X|S,Q}\in\mathcal{P}_{\pi}}I(X,S;Y|Q),
\end{align}
where $\mathcal{P}_{\pi}$ is defined in \eqref{eq:p_pi}. If the supremum is attained with an aperiodic input then \eqref{eq:step3} holds with equality.
\end{lemma}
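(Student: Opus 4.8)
The plan is to collapse the time-averaged, multi-letter objective on the left of \eqref{eq:step3} into a single-letter quantity evaluated at a stationary distribution, using that a stationary input $P_{X|S,Q}$ makes $\{(S_i,Q_i)\}$ a time-homogeneous Markov chain. Fix such an input and let $\mu_{i-1}$ denote the marginal law of $(S_{i-1},Q_{i-1})$. Since the input kernel $P_{X|S,Q}$ and the channel kernel $P_{Y|X,S}$ are time-invariant, the per-letter term $I(X_i,S_{i-1};Y_i|Q_{i-1})$ is a fixed continuous function $F(\mu_{i-1})$ of the marginal $\mu_{i-1}$ alone, where $F(\mu)=I(X,S;Y|Q)$ is computed from the joint $\mu\,P_{X|S,Q}\,P_{Y|X,S}$.

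First I would convert the Ces\`aro sum into a single mutual information using a time-sharing variable. Introducing $T$ uniform on $\{1,\dots,N\}$ and independent of the process, and writing $(\tilde S,\tilde Q,\tilde X,\tilde Y)=(S_{T-1},Q_{T-1},X_T,Y_T)$, one gets
\begin{align*}
\frac1N\sum_{i=1}^N I(X_i,S_{i-1};Y_i|Q_{i-1}) &= I(\tilde X,\tilde S;\tilde Y|\tilde Q,T)\\
&= H(\tilde Y|\tilde Q,T)-H(\tilde Y|\tilde X,\tilde S)\\
&\leq H(\tilde Y|\tilde Q)-H(\tilde Y|\tilde X,\tilde S)= F(\bar\mu_N),
\end{align*}
where $\bar\mu_N=\frac1N\sum_{i=1}^N\mu_{i-1}$ is the time-averaged marginal, the second equality uses the Markov relation $\tilde Y-(\tilde X,\tilde S)-(\tilde Q,T)$, and the inequality is ``conditioning reduces entropy.'' The identifications $P_{\tilde X|\tilde S,\tilde Q}=P_{X|S,Q}$ and $P_{\tilde Y|\tilde X,\tilde S}=P_{Y|X,S}$, which make the last line exactly $F(\bar\mu_N)$, are precisely where stationarity of the input is invoked.

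Next I would pass to the limit. For a finite Markov chain the Ces\`aro averages $\bar\mu_N$ converge, irrespective of periodicity, to a stationary distribution $\pi$. Continuity of $F$ then gives $F(\bar\mu_N)\to F(\pi)$, whence $\liminf_N\frac1N\sum_{i=1}^N I(X_i,S_{i-1};Y_i|Q_{i-1})\le\lim_N F(\bar\mu_N)=F(\pi)$. When the induced graph $\mathcal A(P_{X|S,Q})$ has a single closed class, the limit $\pi$ equals the stationary law $\pi_{S,Q}$ and $F(\pi)=I(X,S;Y|Q)$ is exactly the single-letter value in \eqref{eq:step3}, so this input's contribution to the left-hand side is dominated by the corresponding term on the right. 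Taking the supremum over inputs then yields the claimed inequality.

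The main obstacle is the reduction from an arbitrary stationary input to one in $\mathcal P_\pi$: a general $P_{X|S,Q}$ may delete edges so that $\mathcal A(P_{X|S,Q})$ splits into several closed classes, and then $\pi$ is the stationary law on the recurrent support reached from $(s_0,q_0)$ rather than a globally unique stationary distribution. I would handle this by restricting the input to that recurrent support and, if necessary, redirecting the now-irrelevant edges so that the support becomes a single closed class, producing $P'\in\mathcal P_\pi$ with $\pi_{P'}=\pi$; since $F$ depends only on the distribution, $F(\pi)=F(\pi_{P'})\le\sup_{P\in\mathcal P_\pi}I(X,S;Y|Q)$. Finally, for the equality claim, if the optimal input is aperiodic then $\mu_i\to\pi$ pointwise (not merely in Ces\`aro mean), so $F(\mu_{i-1})\to F(\pi)$ and the time average converges to $F(\pi)$ exactly; selecting this input in the left-hand supremum supplies the reverse inequality and hence equality.
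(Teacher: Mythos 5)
Your core reduction is correct and in fact takes a cleaner route than the paper on the periodicity issue. You observe that under a stationary input the per-letter reward is exactly $F(\mu_{i-1})$ for a concave functional $F$ of the marginal of $(S_{i-1},Q_{i-1})$, apply Jensen via time-sharing to get $\frac1N\sum_i F(\mu_{i-1})\le F(\bar\mu_N)$, and use the fact that Ces\`aro averages of a finite Markov chain converge regardless of periodicity. The paper instead sub-samples at multiples of the period $|\mathcal{D}|$, decomposes the closed class into the cyclic classes $A_0,\dots,A_{|\mathcal D|-1}$, introduces a phase random variable $D$, and removes it by conditioning-reduces-entropy (steps (b)--(g) of its proof); your argument absorbs all of that into one Jensen step and also makes the equality claim for aperiodic inputs immediate, since then $\mu_i\to\pi$ pointwise and $F(\mu_{i-1})\to F(\pi)$.

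The genuine gap is in your reduction to $\mathcal{P}_\pi$. When $\mathcal{A}(P_{X|S,Q})$ has several closed classes $A_1,\dots,A_k$ reached from $(s_0,q_0)$ with absorption probabilities $w_j$, the Ces\`aro limit is $\pi=\sum_j w_j\pi_{A_j}$, and you claim to build $P'\in\mathcal{P}_\pi$ with $\pi_{P'}=\pi$ by ``redirecting the now-irrelevant edges.'' This does not work: redirecting edges out of transient states cannot merge the closed classes, and once you add edges \emph{between} the $A_j$ to create a single closed class, the new chain's unique stationary distribution is dictated by the new transition probabilities and will not in general equal the prescribed mixture $\sum_j w_j\pi_{A_j}$. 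The point is not cosmetic, because $F$ is concave, so $F\bigl(\sum_j w_j\pi_{A_j}\bigr)$ can strictly exceed every $F(\pi_{A_j})$; you therefore cannot retreat to bounding by the best single class either. A correct repair requires an additional idea: either the paper's route (its Lemma~\ref{lemma:pi}, which uses optimality of the maximizer to force all closed classes to have equal reward before any limit is taken), or a nearly-decomposable-chain perturbation in which cross-class edges of carefully balanced weights $\epsilon_j$ are added so that the resulting single-class stationary laws converge to the desired mixture, followed by continuity of $F$ and the fact that the right-hand side is a supremum. As written, the step ``producing $P'\in\mathcal P_\pi$ with $\pi_{P'}=\pi$'' is asserted but false in general, so the inequality is only established for inputs already in $\mathcal{P}_\pi$.
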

The proof of Lemma \ref{lemma:step3} appears in Section \ref{subsec:step3}.

\subsection{Proof of Lemma \ref{lemma:step1} (Step $1$):}\label{subsec:step1}

The proof comprises of the following steps:
\begin{align*}
C_{\text{fb}} &\stackrel{(a)}=     \lim_{N\rightarrow \infty} \sup_{\{p(x_t|s_{t-1},y^{t-1})\}_{t=1}^N} \frac{1}{N} \sum_{i=1}^N   I(X_i,S_{i-1};Y_i|Y^{i-1})                 \\
       &              =     \lim_{N\rightarrow \infty} \sup_{\{p(x_t|s_{t-1},y^{t-1})\}_{t=1}^N} \frac{1}{N} \sum_{i=1}^N   H(Y_i|Y^{i-1}) - H(Y_i|X_i,S_{i-1})     \\
       &\stackrel{(b)}\leq  \lim_{N\rightarrow \infty} \sup_{\{p(x_t|s_{t-1},y^{t-1})\}_{t=1}^N} \frac{1}{N} \sum_{i=1}^N   H(Y_i|Q_{i-1}) - H(Y_i|X_i,S_{i-1},Q_{i-1})\\
       &              =     \lim_{N\rightarrow \infty} \sup_{\{p(x_t|s_{t-1},y^{t-1})\}_{t=1}^N} \frac{1}{N} \sum_{i=1}^N   I(X_i,S_{i-1};Y_i|Q_{i-1})\\
       &\stackrel{(c)}=     \lim_{N\rightarrow \infty} \sup_{\{p(x_t|s_{t-1},q_{t-1})\}_{t=1}^N} \frac{1}{N} \sum_{i=1}^N   I(X_i,S_{i-1};Y_i|Q_{i-1})\\
       &\stackrel{(d)}= \sup_{\{p(x_t|s_{t-1},q_{t-1})\}_{t\geq 1}} \liminf_{N\rightarrow \infty}\frac{1}{N} \sum_{i=1}^N I(X_i,S_{i-1};Y_i|Q_{i-1}).
\end{align*}
where
 \begin{itemize}
   \item[(a)] follows from Eq. $(18)$ in Theorem $1$ \cite{PermuterCuffVanRoyWeissman08};
   \item[(b)] follows from the fact that conditioning reduces entropy and from the Markov chain of the channel $Y_i-(X_i,S_{i-1})-Y^{i-1} - \Phi_{i-1}(Y^{i-1})\triangleq Q_{i-1}$;
   \item[(c)] follows from Lemma \ref{lemma:domain};
   \item[(d)] follows from the arguments in \cite[Lemma 4]{PermuterCuffVanRoyWeissman08}.
 \end{itemize}
$\hfill\blacksquare$
\begin{lemma}\label{lemma:domain}
Given $(s_0,q_0)$, the maximization domain can be restricted as
\begin{align}\label{eq:main_symmetric}
  \sup_{\{p(x_t|s_{t-1},y^{t-1})\}_{t=1}^N} \sum_{i=1}^N   I(X_i,S_{i-1};Y_i|Q_{i-1})&=\sup_{\{p(x_t|s_{t-1},q_{t-1})\}_{t=1}^N} \sum_{i=1}^N   I(X_i,S_{i-1};Y_i|Q_{i-1}).
\end{align}
for all $N$.
\end{lemma}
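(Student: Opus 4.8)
The plan is to prove the two inequalities separately. The inequality $\mathrm{LHS}\geq\mathrm{RHS}$ is immediate: every context policy is also a history policy, obtained by setting $p(x_t\mid s_{t-1},y^{t-1})\triangleq p(x_t\mid s_{t-1},\Phi_{t-1}(y^{t-1}))$, so the context policies induce a subset of the achievable trajectory distributions and the supremum over the smaller set is no larger. The substance of the lemma is therefore the reverse inequality, for which I would show that \emph{every} history policy is matched, in objective value, by a suitable (in general time-varying) context policy; this is permitted since the right-hand supremum ranges over sequences $\{p(x_t\mid s_{t-1},q_{t-1})\}_{t=1}^N$.

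The first observation I would record is that the objective is a functional of only the one-stage joint laws $\{P_{X_i,S_{i-1},Q_{i-1}}\}_{i=1}^N$. Writing $I(X_i,S_{i-1};Y_i\mid Q_{i-1})=H(Y_i\mid Q_{i-1})-H(Y_i\mid X_i,S_{i-1})$ via the Markov chain $Y_i-(X_i,S_{i-1})-Q_{i-1}$ noted in Step~$1$, the subtracted term is determined by $P_{X_i,S_{i-1}}$ and the fixed channel $p(y\mid x,s)$, while $H(Y_i\mid Q_{i-1})$ is determined by $P_{Q_{i-1}}$ together with $p(y_i\mid q_{i-1})=\sum_{x,s}p(y_i\mid x,s)\,p(x,s\mid q_{i-1})$. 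Hence each summand, and thus the whole sum, depends on the policy only through $P_{X_i,S_{i-1},Q_{i-1}}$.

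The central step is an induction showing that, for any history policy, the context policy defined by the marginalized conditional $\tilde p(x_i\mid s_{i-1},q_{i-1})\triangleq p^{\mathrm{hist}}(x_i\mid s_{i-1},q_{i-1})$ reproduces the same sequence of one-stage laws $P_{X_i,S_{i-1},Q_{i-1}}$. The structural fact I would exploit is that both the reward and the one-step transition depend on the past only through $(S_{i-1},Q_{i-1},X_i)$: since $s_i=f(s_{i-1},x_i,y_i)$, $q_i=g(q_{i-1},y_i)$, and $Y_i\sim p(\cdot\mid x_i,s_{i-1})$, the kernel $p(s_i,q_i\mid s_{i-1},q_{i-1},x_i)$ is well defined and policy-independent. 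With induction hypothesis $P^{\mathrm{context}}_{S_{i-1},Q_{i-1}}=P^{\mathrm{hist}}_{S_{i-1},Q_{i-1}}$, the fact that $\tilde p$ equals the history-induced conditional forces $P_{X_i,S_{i-1},Q_{i-1}}$ to agree; appending the channel gives agreement of $P_{X_i,S_{i-1},Q_{i-1},Y_i}$, and pushing this through the deterministic maps $f,g$ propagates agreement to $P_{S_i,Q_i}$, closing the induction (the base case $P_{S_0,Q_0}$ being policy-independent). Combined with the first step, the objectives coincide, yielding $\mathrm{LHS}\leq\mathrm{RHS}$ and hence the claimed equality.

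The step I expect to demand the most care is this inductive matching: one must verify that replacing the history policy by its marginalized conditional $\tilde p$ does not perturb the downstream state--context law, which hinges precisely on the ``sufficient statistic'' property that the transition kernel of $(S,Q)$ and the per-stage reward factor through $(S_{i-1},Q_{i-1},X_i)$ rather than the full output history $Y^{i-1}$. No appeal to stationarity, irreducibility, or aperiodicity is needed here, as those assumptions enter only in the later lemmas.
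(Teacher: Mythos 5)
Your proposal is correct and follows essentially the same route as the paper's proof: both reduce the objective to the one-stage joint laws (the paper works with $P_{X_t,Y_t,S_{t-1},Q_{t-1}}$, you with $P_{X_i,S_{i-1},Q_{i-1}}$ plus the fixed channel) and then run the same induction, using the memorylessness of the channel and the deterministic maps $f,g$ to show that replacing a history policy by its marginalized conditional $\tilde p(x_i\mid s_{i-1},q_{i-1})$ leaves these laws, and hence every summand, unchanged. Your explicit split into the trivial inclusion direction and the substantive matching direction is a minor expository difference, not a different argument.
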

\begin{proof}[Proof of Lemma \ref{lemma:domain}]\label{proof_lemma_domain}
It is shown that the same objective is achieved when exchanging the domain $\{p(x_t|s_{t-1},y^{t-1})\}_{t=1}^N$ with the domain $\{p(x_t|s_{t-1},q_{t-1})\}_{t=1}^N$; the second domain is calculated as the marginal distribution of $\{p(x_t,s_{t-1},y^{t-1})\}_{t=1}^N$ that is induced by the first domain. To this end, we show that two distributions $\{p_1(x_t|s_{t-1},y^{t-1})\}_{t=1}^N$ and $\{p_2(x_t|s_{t-1},y^{t-1})\}_{t=1}^N$  with the same induced marginal distribution $\{\tilde{p}(x_t|s_{t-1},q_{t-1})\}_{t=1}^N$ have the same objective. The objective is determined by $\{p(y_t,x_t,s_{t-1},q_{t-1})\}_{t=1}^N$ since the mutual information at each time is a function of one instance from this set.

This is shown using induction: for $n=1$, write $p(x_1,y_1,s_0,q_0) = p(y_1|x_1,s_0)\tilde{p}(x_1|s_0,q_0)p(s_0,q_0)$, indicating that reward depends on the marginal $\tilde{p}$ only. Assume that $\{p(x_t,y_t,s_{t-1},q_{t-1})\}_{t=1}^N$ is induced by both input distributions and, thus, induce the same $N$th objective. Let us show that $p(x_{N+1},y_{N+1},s_{N},q_{N})$ depends on the marginal $\tilde{p}$ only. First, note that $p(s_{N},q_{N})$ is determined by the $N$th step since $q_{N}$ is a function of $(q_{N-1},y_N)$ and $s_N$ is a function of $(x_N,y_N,s_{N-1})$. Furthermore, $p(x_{N+1}|s_{N},q_{N})$ is identical under both input distributions and $p(y_{N+1}|s_{N},x_{N+1},q_{N})$ is given by the channel specification. Thus, $p(x_{N+1},y_{N+1},s_{N},q_{N})$ is equal under both input distributions.
\end{proof}

\subsection{Proof of Lemma \ref{lemma:step2} (Step $2$):}\label{subsec:step2}
In this section, the goal is to show that stationary inputs are optimal for the upper bound derived in Lemma \ref{lemma:step1}. The first stage is to formulate the upper bound as a DP problem. We then present a known result from the DP literature \cite{DP_discount2avg} that states sufficient conditions for the existence of an optimal stationary policy. Finally, it is proved that these conditions are satisfied in our DP problem and, thus, the existence of an optimal stationary policy is established.

\subsubsection{DP formulation}
The DP definitions presented here follow the formulation in \cite{Arapos93_average_cose_survey}; similar formulations can also be found in \cite{TatikondaMitter_IT09,PermuterCuffVanRoyWeissman08,Sabag_BEC}.

Define the DP state at time $t$ (prior to the $t$th action) as the probability vector $z_{t-1}=P_{S_{t-1},Q_{t-1}}$. As the initial state, $(s_0,q_0)$, lies in a closed communicating class, $A$, the state space is taken as the $|A|$-dimensional unit simplex. Actions are valid conditional distributions $P_{X|S,Q}$ and, specifically, the action at time $t$ is $u_t=P_{X_{t}|S_{t-1},Q_{t-1}}$. The reward gained at time $t$ is taken to be $I(X_t,S_{t-1};Y_t|Q_{t-1})$. Note that this is a deterministic DP as no disturbance is defined.

To show that the above definitions hold for the DP properties, we must verify that there exists a dynamics function and that the reward at time $t$ is a function of $(z_{t-1},u_t)$:

\textbf{Dynamics:} We show that there exists a dynamics function, $F:\mathcal{Z}\times\mathcal{U}\rightarrow \mathcal{Z}$, such that $z_{t}=F(z_{t-1},u_t)$. Each state, $z_{t}$, is a collection of the probabilities $p(s_t,q_t)$, and can be calculated as follows:
\begin{align}\label{eq:dynamics}
  p(s_t,q_t) &= \sum_{y_t,x_t,s_{t-1},q_{t-1}} p(s_t,q_t,y_t,x_t,s_{t-1},q_{t-1}) \nn\\
  &= \sum_{y_t,x_t,s_{t-1},q_{t-1}} p(s_t,q_t|x_t,y_t,s_{t-1},q_{t-1})p(y_t|x_t,s_{t-1})p(x_t|s_{t-1},q_{t-1})p(s_{t-1},q_{t-1})\nn\\
  &\stackrel{(a)}= \sum_{y_t,x_t,s_{t-1},q_{t-1}} \mathbbm{1}_{s_t=f(x_t,y_t,s_{t-1})}\mathbbm{1}_{q_t=g(y_t,q_{t-1})}p(y_t|x_t,s_{t-1})p(x_t|s_{t-1},q_{t-1})p(s_{t-1},q_{t-1}),
\end{align}
where step $(a)$ follows from the facts that the state in a unifilar channel is a function of the triplet $(x_t,y_t,s_{t-1})$, and the stationary context is defined by a function $g:\mathcal{Q}\times \mathcal{Y}\rightarrow\mathcal{Q}$. Recall that $z_{t-1}$ consists of all entries $p(s_{t-1},q_{t-1})$ and the action $u_t= P_{X_t|S_{t-1},Q_{t-1}}$ holds all values of the form $p(x_t|s_{t-1},q_{t-1})$; therefore, each entry in $z_t$ is a time-invariant function of the pair $(z_{t-1},u_t)$.

\textbf{Reward:} Let us show that each reward is a function of the current state and action, i.e., there exists a function $R:\mathcal{Z}\times\mathcal{U}\rightarrow\mathbb{R}$. The reward at time $t$ is $I(X_t,S_{t-1};Y_t|Q_{t-1})$ and is a function of $P_{Y_t,X_t,S_{t-1},Q_{t-1}}$, which can be written as $P_{Y_t|X_t,S_{t-1}}P_{X_t|S_{t-1},Q_{t-1}}P_{S_{t-1},Q_{t-1}}$. The latter factorization of the joint distribution is a function of the state $z_{t-1}=P_{S_{t-1},Q_{t-1}}$, the action $u_t = P_{X_t|S_{t-1},Q_{t-1}}$ and the channel $P_{Y_t|X_t,S_{t-1}}$. From now on, we use the notation $R(z,u)$ for the mutual information that is achieved with a state $z$ and action $u$.

The DP formulation above implies that the optimal average reward is
\begin{align*}
  \rho^{\ast} &=  \sup_{\pi}\liminf _{N\rightarrow\infty} \frac{1}{N}\sum_{t=1}^{N} I(X_t,S_{t-1};Y_{t}|Q_{t-1}),
\end{align*}
where $\pi$ corresponds to a policy, i.e., an infinite sequence of actions. Note that $\rho^\ast$ is equal to the upper bound in \eqref{eq:step1}, so this is an equivalent DP problem for the upper bound calculation.

In addition, we define for $\beta<1$ and initial state $\xi\in\mathcal{Z}$ their optimal discounted reward as
\begin{align*}
  \nu_\beta(\xi) &=  \sup_{\pi}\sum_{t=1}^{\infty}\beta^t I(X_t,S_{t-1};Y_{t}|Q_{t-1}).
\end{align*}

\subsubsection{Sufficient conditions for the existence of an optimal stationary policy}
\begin{itemize}
  \item[\textbf{(C1)}] The transition kernel is continuous with respect to weak convergence in $P(\mathcal{Z})$. In our case, the transition kernel is defined by the dynamics function, $F(\cdot,\cdot)$, in \eqref{eq:dynamics}.
  \item[\textbf{(C2)}] The state space, $\mathcal{Z}$, is a locally compact space with a countable base.
  \item[\textbf{(C3)}] The multifunction $\mathcal{U}(z)$ is upper semi-continuous. The notation $\mathcal{U}(z)$ stands for allowed actions at state $z$. In our case, $\mathcal{U}(z)$ is the set of all conditional distributions of the form $P_{X|S,Q}$, i.e., the set of allowed actions equals the full set of actions for all $z$.
  \item[\textbf{(C4)}] The reward function $R(z,u)$ is lower semi-continuous in $(z,u)$.
  \item[\textbf{(C5)}] Let $m_\beta=\sup_z \nu_\beta(z)$; then $\sup_{\beta<1} \{m_\beta-\nu_\beta(z)\}<\infty$ for all $z\in\mathcal{Z}$.
\end{itemize}
In \cite{DP_discount2avg}, the above conditions were presented for a model where the optimal average reward is defined as the minimization over all policies. Since our model is defined as a maximization problem, trivial modifications should be made in \textbf{(C3)}-\textbf{(C4)}; however, we will show that in our problem these conditions are satisfied in both the upper and lower cases.

\begin{theorem}[Theorem 3.8, \cite{DP_discount2avg}]\label{lemma:stationary}
If \textbf{(C1)}-\textbf{(C5)} are satisfied then there exists an optimal stationary policy for the average reward DP problem.
\end{theorem}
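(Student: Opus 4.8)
The plan is to prove this by the classical \emph{vanishing-discount} argument, which couples the family of discounted-reward problems indexed by $\beta<1$ to the average-reward problem by letting $\beta\uparrow 1$. First I would record that, for each fixed $\beta<1$, the discounted problem is well posed: conditions \textbf{(C1)}--\textbf{(C4)} are exactly what is needed for the discounted dynamic-programming operator to be a contraction whose unique fixed point is the value $\nu_\beta$, satisfying
\begin{align*}
  \nu_\beta(z) &= \sup_{u\in\mathcal{U}(z)} \Big\{ R(z,u) + \beta \int \nu_\beta(z')\, P(dz'\mid z,u) \Big\},
\end{align*}
and for the supremum to be attained by a measurable stationary selector (here \textbf{(C3)}--\textbf{(C4)} supply the upper/lower semi-continuity feeding a standard measurable-selection theorem, while \textbf{(C1)}--\textbf{(C2)} guarantee the integral is well defined and suitably continuous in $(z,u)$). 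The objective is then to transfer optimality across the limit $\beta\uparrow 1$.

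The second step introduces the relative (differential) value functions $h_\beta(z)\triangleq m_\beta-\nu_\beta(z)$, where $m_\beta=\sup_z \nu_\beta(z)$. Condition \textbf{(C5)} is precisely the statement that $\{h_\beta\}_{\beta<1}$ is pointwise bounded, uniformly in $\beta$. Invoking this bound together with the countable base from \textbf{(C2)}, I would extract a sequence $\beta_n\uparrow 1$ along which $(1-\beta_n)m_{\beta_n}\to\rho$ for some scalar $\rho$ and $h_{\beta_n}\to h$ (pointwise, or in the appropriate semi-continuous sense) for some limit function $h$. The scalar $\rho$ is the candidate optimal gain and $h$ the candidate bias function.

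The third step is to pass to the limit in the discounted equation after centring by $m_\beta$, namely
\begin{align*}
  (1-\beta)m_\beta - h_\beta(z) &= \sup_{u\in\mathcal{U}(z)} \Big\{ R(z,u) - \beta \int h_\beta(z')\, P(dz'\mid z,u) \Big\},
\end{align*}
so as to obtain the average-reward optimality equation
\begin{align*}
  \rho - h(z) &= \sup_{u\in\mathcal{U}(z)} \Big\{ R(z,u) - \int h(z')\, P(dz'\mid z,u) \Big\}.
\end{align*}
Justifying this interchange of limit, supremum and integral is where \textbf{(C1)}, \textbf{(C3)} and \textbf{(C4)} do the genuine work: weak continuity of the kernel carries the limit through the integral, and the semi-continuity hypotheses ensure the supremum is attained and is preserved under the limit. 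Having established the optimality equation, I would define a stationary policy $\mu^\ast$ by a measurable selection of a maximizing action at each $z$ and finish with a verification argument: iterating the equation along the trajectory generated by $\mu^\ast$, summing, and dividing by $N$ shows that $\mu^\ast$ achieves average reward $\rho=\rho^\ast$ and hence is optimal.

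The hard part is twofold, and is exactly what conditions \textbf{(C1)}--\textbf{(C5)} are engineered to overcome. First, one must guarantee that the extraction in the second step truly converges and that the limit $h$ is well defined and measurable; this is the sole purpose of the uniform bound \textbf{(C5)}, without which the vanishing-discount limit may fail to exist. Second, passing to the limit naively yields only one-sided inequalities, so the delicate point is upgrading these into the exact optimality equation with no loss of optimality; the semi-continuity assumptions are calibrated so that the $\liminf$ and $\limsup$ of the centred discounted equations sandwich a single fixed-point relation, which closes the gap. These two issues --- convergence of the relative values and tightness of the limiting optimality equation --- are the crux of the argument in \cite{DP_discount2avg}.
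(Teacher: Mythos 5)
This statement is not proved in the paper at all: it is imported verbatim as Theorem 3.8 of the cited reference \cite{DP_discount2avg}, and the paper's only work in its vicinity is the verification that conditions \textbf{(C1)}--\textbf{(C5)} hold for its particular $(S,Q)$ dynamic program. So there is no in-paper proof to compare against; what you have written is a sketch of the external result itself. That said, your vanishing-discount outline is exactly the standard route by which such theorems are established in the source literature (discounted optimality equation under \textbf{(C1)}--\textbf{(C4)}, relative value functions $h_\beta = m_\beta - \nu_\beta$ made uniformly bounded by \textbf{(C5)}, subsequential limit $\beta_n\uparrow 1$, and a measurable-selection/verification step), so it is consistent with how the cited theorem is actually obtained. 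Two caveats worth flagging: on a general (uncountable) state space, pointwise boundedness of $\{h_\beta\}$ does not by itself yield a pointwise-convergent subsequence, so the limit $h$ must be defined via a generalized $\liminf$; and under \textbf{(C5)} alone one typically obtains only the average-reward optimality \emph{inequality} rather than the exact equation --- but the inequality already suffices for the existence of an optimal stationary policy, which is all the theorem asserts. You acknowledge both points, so your overclaim of the exact optimality equation is cosmetic rather than fatal.
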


Returning to our problem, we will show that \textbf{(C1)}-\textbf{(C5)} are satisfied and this leads to the conclusion that there exists an optimal stationary policy.
\begin{proof}[Conditions \textbf{(C1)}-\textbf{(C5)} are satisfied in our problem]

 \textbf{(C1)} The transition kernel is continuous with respect to weak convergence if the following is satisfied: for all $v(\cdot)\in\mathcal{C}_b(\mathcal{Z})$(continuous and bounded functions on $\mathcal{Z}$),
 \begin{equation}\label{eq:weak_convergence}
\int_\mathcal{Z} v(y)F(dy|\cdot,\cdot)\in\mathcal{C}_b(\mathcal{Z}\times\mathcal{U}).
\end{equation}
The transition kernel, $F(dy|z,u)$, is a dirac measure and, therefore, integration over $y$ \eqref{eq:weak_convergence} returns $v(F(z,u))$.

The function $v(F(z,u))$ is bounded since $v(\cdot)$ is bounded. For the continuity, note that by \eqref{eq:dynamics} each element in $F(z,u)$ is continuous with respect to $(z,u)$ (in any norm) since it is a finite sum of elements in $(z,u)$. Since the function $v(\cdot)$ is continuous, the composition of $F(z,u)$ into $v(\cdot)$ is continuous in $(z,u)$. To conclude, the composition $v(F(z,u))$ is bounded and continuous with respect to $(z,u)$.

\noindent\textbf{(C2)} The state space is the $|A|$-dimensional unit simplex. As the n-dimensional simplex is a closed subset of the n-cubic with a unit edge, it is locally compact with a countable base.

\noindent\textbf{(C3)} The general scenario is where the action space can depend on $z$; however, in our problem $\mathcal{U}(z)$ is constant in $z$ and, thus, trivially continuous in $z$.

\noindent\textbf{(C4)} The mutual information can be written as a sum of entropies, where each entropy is continuous in the joint distribution of $(y,x,s,q)$ that is induced by $(z,u)$; therefore, it is both lower and upper semi-continuous.

\noindent\textbf{(C5)} By \cite[Proposition 2.1]{DP_discount2avg}, conditions $(C1)-(C4)$ imply that there exists an optimal stationary policy for the discounted problem, which is denoted here as $f_\beta  = P_\beta^\ast(x|s,q)$. The policy $f_\beta$ implies a structure on the $(S,Q)$-graph and might result in several closed communicating classes in the case where there are edges with probability zero. Denote by $\mathcal{A}$ the $(S,Q)$-graph after removing edges with $P_\beta^\ast(x|s,q)=0$. It is convenient to partition the analysis for two cases based on the structure of $\mathcal{A}$:
\begin{itemize}
  \item Case A: The graph induced by the policy $f_\beta$, $\mathcal{A}$, has a single closed communicating class.
  \item Case B: The graph induced by the policy $f_\beta$, $\mathcal{A}$, has more than one closed communicating class.
\end{itemize}

\textbf{Case A}: With some abuse of terminology, we will refer to $\mathcal{A}$ as the closed communicating class in the $(S,Q)$-graph, since all nodes outside this class are inessential in the infinite-horizon regime. Denote by $T$ the transfer matrix induced by $f_\beta$ on the single closed class, and by $D$ its period. Since $\mathcal{A}$ is irreducible, the graph can be partitioned into $A_0,A_1,\dots A_{D-1}$ disjoint classes on a cycle based on a period equivalence. The stationary distribution of the Markov chain on $\mathcal{A}$ is denoted by $\pi_{f_\beta}$.

Consider the $D$-blocks Markov chain and, specifically, a Markov chain with transition matrix $T^D$. Since $D$ is the period of the original graph, the new Markov chain implies $D$ separate aperiodic and irreducible Markov chains. Each Markov chain is on a class $A_d$ and we denote by $\pi(A_d)$ the stationary distribution of each class $A_d$ where $d\in[0:D-1]$.

For initial state $\xi\in\mathcal{P}(Z)$, denote its weights vector as $W(\xi)$ with $d$ inputs, where the $d$th input is $w_d(\xi) = \sum_{(s,q)\in A_d}\xi(s,q)$. Define for all $k$:
\begin{align}\label{eq:D_limit}
  \pi_k(\xi)&\triangleq [w_{[k]}(\xi)\pi(A_0),w_{[k+1]}(\xi)\pi(A_1),\dots,w_{[k+D-1]}(\xi)\pi(A_{D-1})],
\end{align}
where the indices with $[\cdot]$ are taken modulo $D$. Finally, the vectors in \eqref{eq:D_limit} are used to define
\begin{align*}
  \nu^\pi_\beta(\xi)&= \sum_{n=1}^\infty \beta^n R(\pi_{n}(\xi),f_\beta)\\
  \nu^\ast_\beta &= \sum_{n=1}^\infty \beta^n R(\pi_{f_\beta},f_\beta).
\end{align*}

The expression $\nu^\pi_\beta(\xi)$ corresponds to the discounted reward that is achieved with states that are moved periodically through all possibilities in \eqref{eq:D_limit}. The first step is to show that for a fixed initial state, the actual reward and its corresponding periodic reward, $\nu^\pi_\beta(\xi)$ are bounded as follows:
\begin{lemma}\label{lemma:beta_upper_bound}
For all initial states, $\xi$,
\begin{align*}
\sup_\beta |\nu_\beta(\xi) -  \nu^\pi_\beta(\xi)| <\infty.
\end{align*}
\end{lemma}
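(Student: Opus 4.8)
The plan is to reduce the lemma to a statement about how fast the controlled Markov chain equilibrates, and then to bound the resulting reward discrepancy uniformly in $\beta$. Since $f_\beta$ is an \emph{optimal stationary} policy for the $\beta$-discounted problem, $\nu_\beta(\xi)$ is exactly the discounted reward accrued while running $f_\beta$ from the initial state $\xi$; writing $z_n(\xi)=\xi T^n$ for the state reached after $n$ applications of the transfer matrix $T=T(f_\beta)$, this means $\nu_\beta(\xi)=\sum_{n\geq 1}\beta^n R(z_n(\xi),f_\beta)$. The only difference between this and $\nu^\pi_\beta(\xi)=\sum_{n\geq 1}\beta^n R(\pi_n(\xi),f_\beta)$ is that the genuine state $z_n(\xi)$ is replaced by the idealized profile $\pi_n(\xi)$ of \eqref{eq:D_limit}. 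Thus the whole statement reduces to controlling $\sum_{n\geq 1}\beta^n\bigl|R(z_n(\xi),f_\beta)-R(\pi_n(\xi),f_\beta)\bigr|$ by a bound that does not depend on $\beta$.

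The key estimate is geometric convergence $z_n(\xi)\to\pi_n(\xi)$. First, because all edges of the irreducible class $\mathcal{A}$ lead from $A_d$ to the next class on the cycle, the total mass in each periodic class is conserved and merely cyclically shifted by $T$; hence the class-weight vector of $z_n(\xi)$ agrees \emph{exactly} with that of $\pi_n(\xi)$, which is precisely how $\pi_n(\xi)$ was built. It remains to show that within each class the conditional law of $z_n(\xi)$ converges to the stationary law $\pi(A_d)$. Restricting attention to time indices spaced by $D$, the relevant kernel is $T^D$, which acts on each $A_d$ as an irreducible \emph{aperiodic} stochastic matrix; by Perron--Frobenius it contracts in total variation at a rate equal to its second-largest eigenvalue modulus, which is strictly below $1$. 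Consequently there exist $\lambda<1$ and a constant $C=C(\xi)<\infty$ with $\|z_n(\xi)-\pi_n(\xi)\|\leq C\lambda^n$ for all $n$.

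The main obstacle is that $T=T(f_\beta)$, its period $D$, and its spectral gap all vary with $\beta$, so I must secure a single rate $\lambda<1$ valid for every $\beta$. Here I use that the $(S,Q)$-graph has finitely many nodes: after deleting the zero-probability edges only finitely many support patterns can occur, and in Case A each such pattern is irreducible aperiodic. For a fixed pattern the admissible transfer matrices form a compact set of stochastic matrices with that common support, on which the second-eigenvalue modulus is a continuous function that is strictly less than $1$ at every point; a continuous function bounded below $1$ on a compact set attains a maximum that is still $<1$. Taking $\lambda$ to be the largest of these finitely many maxima yields one rate $\lambda<1$ good for all $\beta$ (the prefactor $C$ may depend on $\xi$, which is harmless since $\xi$ is fixed).

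Finally I combine this with the regularity of the reward. The reward $R(\cdot,f_\beta)$ is a mutual information, i.e.\ a finite sum of entropies, hence continuous on the compact simplex $\mathcal{Z}$ and therefore uniformly continuous with some modulus $\omega(\cdot)$ satisfying $\omega(\delta)\to 0$; using the $\delta\log(1/\delta)$-type modulus of the entropy (or any H\"older bound) one has $\sum_{n\geq 1}\omega(C\lambda^n)<\infty$. Then
\begin{align*}
\bigl|\nu_\beta(\xi)-\nu^\pi_\beta(\xi)\bigr|
&\leq \sum_{n\geq 1}\beta^n\,\bigl|R(z_n(\xi),f_\beta)-R(\pi_n(\xi),f_\beta)\bigr|\\
&\leq \sum_{n\geq 1}\omega\!\left(C\lambda^n\right)<\infty,
\end{align*}
where the last bound is a fixed finite number independent of $\beta$ (we used $\beta^n\leq 1$ and the uniform rate $\lambda$). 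Taking the supremum over $\beta<1$ preserves this finite bound, which proves Lemma~\ref{lemma:beta_upper_bound}.
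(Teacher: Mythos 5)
Your proposal follows essentially the same route as the paper's proof of this lemma: you reduce the claim to bounding $\sum_{n}\beta^{n}\bigl|R(\xi T^{n},f_\beta)-R(\pi_{n}(\xi),f_\beta)\bigr|$, you establish geometric total-variation convergence of $\xi T^{nD+k}$ to the periodic profile $\pi_k(\xi)$ by noting that the class weights are exactly conserved and cyclically shifted while $T^{D}$ acts as an irreducible aperiodic chain on each $A_d$ (this is precisely the content of Lemma \ref{lemma:convergence_periodic}), and you then convert total-variation closeness into reward closeness using the continuity of entropy before summing with $\beta^{n}\le 1$. Where the paper uses the explicit estimate $|H_P-H_Q|\le \|P-Q\|_{TV}\log(|\mathcal{X}|-1)+H_2(\|P-Q\|_{TV})$ of Lemma \ref{lemma:igal} together with Lemma \ref{lemma:Paul}, you invoke a generic modulus of continuity $\omega$ with the observation that a $\delta\log(1/\delta)$-type modulus makes $\sum_n\omega(C\lambda^n)$ summable; these are interchangeable.

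The one place you substantively depart from the paper is your attempt to obtain a contraction rate $\lambda<1$ that is uniform over $\beta$ (the paper's constants $C,\alpha,N',D$ all implicitly depend on $f_\beta$, and the paper does not address this). Your instinct to worry about it is sound, but the compactness argument you offer does not work as stated: the set of stochastic matrices whose support is \emph{exactly} a prescribed edge pattern is defined by strict inequalities and is therefore not closed, hence not compact, so the second-largest eigenvalue modulus need not attain a supremum strictly below $1$ on it; and if you pass to the closure, the boundary matrices have smaller support and may fail to be irreducible or aperiodic, so the pointwise bound ``$<1$'' is lost exactly where it is needed. As it stands this step is a gap in your write-up (one that would require, e.g., a separate argument controlling how the spectral gap of $T(f_\beta)$ can degenerate as $\beta\to 1$), although for the purpose of comparison with the paper you should note that the paper's own proof simply does not raise the issue and bounds each $\beta$ separately before taking $\sup_\beta$.
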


The second step of the proof is to show that the achieved periodic reward (which is a function of the initial state) is bounded with respect to some constant quantity, which does not depend on $\xi$:
\begin{lemma}\label{lemma:beta_upper_ast}
For all initial states, $\xi$,
\begin{align*}
\sup_\beta |\nu^\ast_\beta -  \nu^\pi_\beta(\xi)| <\infty.
\end{align*}
\end{lemma}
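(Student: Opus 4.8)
The plan is to reduce the claim to one arithmetic fact about the per-period average of the reward and then to finish with an elementary geometric-series estimate. Write $R_n\triangleq R(\pi_n(\xi),f_\beta)$ and $R^\ast\triangleq R(\pi_{f_\beta},f_\beta)$, so that $\nu^\ast_\beta-\nu^\pi_\beta(\xi)=\sum_{n\geq 1}\beta^n(R^\ast-R_n)$. By the construction of $\pi_n(\xi)$ in \eqref{eq:D_limit} the sequence $R_n$ is periodic in $n$ with period $D$, and the mutual-information reward is uniformly bounded, $0\leq R_n,R^\ast\leq\log_2|\mathcal{Y}|\triangleq R_{\max}$. Grouping the index as $n=mD+j$ with $m\geq 0$ and $1\leq j\leq D$ and using periodicity gives
\[
\nu^\ast_\beta-\nu^\pi_\beta(\xi)=\frac{1}{1-\beta^D}\sum_{j=1}^{D}\beta^j(R^\ast-R_j).
\]

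I expect the whole difference to be controlled once the per-period reward averages to $R^\ast$, i.e. $\sum_{j=1}^{D}(R^\ast-R_j)=0$. Granting this, I would write $\sum_{j=1}^{D}\beta^j(R^\ast-R_j)=\sum_{j=1}^{D}(\beta^j-1)(R^\ast-R_j)$ using the vanishing average, and then bound $|\beta^j-1|=(1-\beta)\sum_{i=0}^{j-1}\beta^i\leq D(1-\beta)$, $|R^\ast-R_j|\leq R_{\max}$, together with $1-\beta^D\geq 1-\beta$. This yields $|\nu^\ast_\beta-\nu^\pi_\beta(\xi)|\leq D^2R_{\max}$ uniformly over $\beta\in[0,1)$, which is exactly the asserted bound.

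It therefore remains to establish the averaging identity $\tfrac1D\sum_{j=1}^{D}R_j=R^\ast$. First I would record the distributional identity $\tfrac1D\sum_{j=1}^{D}\pi_j(\xi)=\pi_{f_\beta}$: the $d$-th block of $\pi_j(\xi)$ carries mass $w_{[j+d]}(\xi)$ distributed as $\pi(A_d)$, and as $j$ runs over one period the cyclic indices $[j+d]$ exhaust $\{0,\dots,D-1\}$, so the averaged block mass is $\tfrac1D\sum_{d'}w_{d'}(\xi)=\tfrac1D$, which is precisely the block mass of the periodic-chain stationary law $\pi_{f_\beta}$. Next I would split $I(X,S;Y|Q)=H(Y|Q)-H(Y|X,S)$. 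The term $H(Y|X,S)$ is linear in the state $z$, so its period average equals its value at $\tfrac1D\sum_j\pi_j(\xi)=\pi_{f_\beta}$. For $H(Y|Q)$ I would use that every limit-cycle law $\pi_j(\xi)$ induces the \emph{same} conditional $P_{S|Q}$, hence the same $P_{Y|Q=q}$ and the same $H(Y|Q=q)$ for each $q$, while the marginals $P_{\pi_j(\xi)}(q)$ average to $P_{\pi_{f_\beta}}(q)$; combining these two facts gives $\tfrac1D\sum_j H(Y|Q)\big|_{\pi_j(\xi)}=H(Y|Q)\big|_{\pi_{f_\beta}}$ and hence the identity.

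The main obstacle is precisely the invariance of the conditional $P_{S|Q}$ along the limit cycle, which is what forces the strictly concave term $H(Y|Q)$ to average correctly rather than to fall short of its stationary value. This invariance is equivalent to the structural statement that each context $q$ is confined to a single cyclic class $A_d$, so that the within-class weight $w_{[j+d]}(\xi)$ cancels in $P_{\pi_j(\xi)}(s|q)$ and leaves it independent of $j$. I would derive this from the autonomous evolution of the context, $q_t=g(q_{t-1},y_t)$, which does not see the state and so constrains the cyclic phase of a node $(s,q)$ to be a function of $q$; the delicate point is to verify that the state dynamics cannot refine this phase further, i.e. that the period of the $f_\beta$-induced $(S,Q)$-graph coincides with that of the context chain. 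Indeed, a short computation shows that if some $q$ straddled two classes with distinct states, strict concavity of entropy would make the period average of $H(Y|Q)$ strictly smaller than its stationary value and the lemma would fail, so pinning down this phase-consistency is where the real work lies.
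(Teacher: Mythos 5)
There is a genuine gap, and you have located it yourself without resolving it: the averaging identity $\tfrac1D\sum_{j=1}^{D}R_j=R^\ast$ on which your whole argument rests is false in general. Your reduction of that identity to the ``phase-consistency'' claim that each context $q$ is confined to a single cyclic class $A_d$ does not hold: the cyclic structure is that of the \emph{policy-induced} $(S,Q)$-graph, whose period can strictly exceed that of the $Q$-graph. For instance, take $|\mathcal{Q}|=1$ (the $Q$-graph is a single node with self-loops, hence aperiodic) with $s_t=x_t$ and a deterministic alternating policy $x_t=1-s_{t-1}$; then $(0,q)$ and $(1,q)$ lie in different cyclic classes, $P_{\pi_j(\xi)}(s\,|\,q)$ depends on $j$, and by strict concavity of $H(Y|Q)$ the period average of the reward is \emph{strictly below} $R^\ast$. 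What the averaging argument actually yields is only the one-sided inequality $\tfrac1D\sum_{j}\beta^jR_j\leq R^\ast$ (this is exactly the paper's step, via ``conditioning reduces entropy'' after introducing a uniform phase variable $K$), hence $\nu^\pi_\beta(\xi)\leq\nu^\ast_\beta+K_\beta$ with $\sup_\beta K_\beta<\infty$. Your closing remark that ``the lemma would fail'' if a context straddled two classes is the wrong conclusion: the lemma still holds, but only because $f_\beta$ is an \emph{optimal} discounted policy, a fact your proof never uses. A policy whose period average of the reward falls short of $R^\ast$ by a constant is necessarily suboptimal, and that is precisely what rules out your counter-scenario --- but one must prove it.

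The missing half is the reverse inequality $\nu^\ast_\beta\leq\nu^\pi_\beta(\xi)+O(1)$, and the paper obtains it by perturbing $f_\beta$ into an aperiodic stationary policy $f_\beta(\epsilon)$ (possible by the aperiodicity assumption on the $(S,Q)$-graph), choosing $\epsilon=\epsilon^\ast_\beta$ so small that the stationary reward changes by less than $1-\beta$, so that $\nu^{\epsilon^\ast_\beta}_\beta$ stays within $O(1)$ of $\nu^\ast_\beta$ uniformly in the initial state; optimality of $f_\beta$ then gives $\nu^{\epsilon^\ast_\beta}_\beta(\xi)\leq\nu_\beta(\xi)$, and Lemma \ref{lemma:beta_upper_bound} converts $\nu_\beta(\xi)$ back to $\nu^\pi_\beta(\xi)+O(1)$. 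Your geometric-series bookkeeping (writing $\sum_j\beta^j(R^\ast-R_j)=\sum_j(\beta^j-1)(R^\ast-R_j)$ and bounding by $D^2R_{\max}$) is clean and would indeed finish the proof \emph{if} the zero-average identity held, but since it does not, you need to replace the second half of your argument with something that exploits the optimality of $f_\beta$, e.g.\ the perturbation construction above.
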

A direct conclusion from the above two lemmas is the required condition \textbf{(C5)}:
\begin{align*}
  \sup_\beta |\nu_\beta(\xi) -  \nu_\beta(\xi')| &\stackrel{(a)}\leq 2 \sup_\beta \max_\xi|\nu_\beta(\xi) -  \nu^\ast_\beta| \\
  &\stackrel{(a)}\leq 2 \sup_\beta \max_\xi|\nu_\beta(\xi) - \nu^\pi_\beta(\xi)| + |\nu^\pi_\beta(\xi) - \nu^\ast_\beta|\\
  &\stackrel{(b)}< \infty,
\end{align*}
where $(a)$ follows from the triangle's inequality and $(b)$ follows from Lemma \ref{lemma:beta_upper_bound} and Lemma \ref{lemma:beta_upper_ast}.

The proofs of Lemma \ref{lemma:beta_upper_bound} and Lemma \ref{lemma:beta_upper_ast} appear in Appendix \ref{app:lemma_beta_1} and Appendix \ref{app:lemma_beta_2}, respectively.

Their proof requires the following preliminaries on total variation distance and Markov chains.
\begin{definition}
For finite alphabet, $\mathcal{X}$, and two probability mass functions, $P_X$ and $Q_X$, the total variation distance is
\begin{align*}
  ||P_X-Q_X||_{TV}&\triangleq \frac{1}{2}\sum_x |p(x)-q(x)|.
\end{align*}
\end{definition}
The following Lemma summarizes two properties of the total variation distance:
\begin{lemma}[Lemma V.I-V.II, \cite{Paul_synthesis}]\label{lemma:Paul}
For two joint PMFs, $P$ and $Q$ on $\mathcal{X}\times\mathcal{Y}$, their total variance satisfies
\begin{align*}
\| P_X - Q_X\|_{TV}&\leq \| P_{X,Y} - Q_{X,Y}\|_{TV},
\end{align*}
and the equality holds if $P_{Y|X}=Q_{Y|X}$.
\end{lemma}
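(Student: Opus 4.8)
The plan is to prove both parts directly from the definition of total variation distance, using the triangle inequality for the inequality and an explicit factorization through the shared conditional for the equality condition. The statement is elementary and reduces to manipulating the $\ell_1$ norm of the difference of two PMFs, so I do not expect any deep obstacle; the only point requiring care is the equality case, where the common conditional must be used consistently for both joint distributions so that the bound becomes tight.

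First I would write out the marginal distance by expanding each marginal as a sum over $\mathcal{Y}$:
\begin{align*}
\|P_X - Q_X\|_{TV} &= \frac{1}{2}\sum_{x}\left|\sum_{y}\left(P_{X,Y}(x,y) - Q_{X,Y}(x,y)\right)\right|.
\end{align*}
Applying the triangle inequality to pull the absolute value inside the sum over $y$ yields
\begin{align*}
\|P_X - Q_X\|_{TV} &\leq \frac{1}{2}\sum_{x}\sum_{y}\left|P_{X,Y}(x,y) - Q_{X,Y}(x,y)\right| = \|P_{X,Y} - Q_{X,Y}\|_{TV},
\end{align*}
which is exactly the claimed inequality. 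This is the key step; it formalizes the intuition that marginalization, being a deterministic processing of the pair $(X,Y)$ into $X$, cannot increase statistical distance.

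For the equality case I would assume $P_{Y|X}=Q_{Y|X}$ and factor both joints through this common conditional, writing $P_{X,Y}(x,y)=P_{Y|X}(y|x)P_X(x)$ and $Q_{X,Y}(x,y)=P_{Y|X}(y|x)Q_X(x)$. The difference then factors as $P_{Y|X}(y|x)\left(P_X(x)-Q_X(x)\right)$, so summing over $y$ and using $\sum_y P_{Y|X}(y|x)=1$ collapses the joint distance back to the marginal one:
\begin{align*}
\|P_{X,Y} - Q_{X,Y}\|_{TV} &= \frac{1}{2}\sum_{x}\left|P_X(x)-Q_X(x)\right|\sum_{y}P_{Y|X}(y|x) = \|P_X - Q_X\|_{TV}.
\end{align*}
The only subtlety worth flagging, and the place I would be most careful, is why the triangle inequality of the first step becomes tight here: for each fixed $x$ every nonzero summand $P_{X,Y}(x,y)-Q_{X,Y}(x,y)$ carries the sign of $P_X(x)-Q_X(x)$, since the common nonnegative factor $P_{Y|X}(y|x)$ does not change that sign across $y$. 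Consequently no cancellation occurs inside the inner sum, the inequality is met with equality, and both parts of the lemma follow without genuine difficulty.
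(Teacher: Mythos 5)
Your proof is correct. Note that the paper does not actually prove this lemma---it is imported verbatim from the cited reference \cite{Paul_synthesis} (Lemmas V.1--V.2 there)---so there is no in-paper argument to compare against; your direct computation (triangle inequality on the marginalized $\ell_1$ sum for the bound, and the factorization $P_{X,Y}-Q_{X,Y}=P_{Y|X}\cdot(P_X-Q_X)$ with $\sum_y P_{Y|X}(y|x)=1$ for the equality case) is the standard proof and is exactly what the reference does. The only cosmetic caveat is that $P_{Y|X}(\cdot|x)$ is undefined when $P_X(x)=0$; the factorization still holds there since both sides vanish, so nothing breaks.
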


The following is an upper bound on the convergence rate of aperiodic Markov chains.
\begin{lemma}[Theorem 4.8,\cite{Levin_Markovs}]\label{lemma:convergence}
Let $T$ be a transfer matrix of an irreducible and aperiodic Markov chain on a space $\mathcal{X}$ with a stationary distribution $\pi$. Then there exist
constants $\alpha \in (0,1)$ and $C > 0$ such that
\begin{equation*}
  \max_{\xi\in\mathcal{P}(X)} \| \xi T^n - \pi\|_{TV}\leq  C\alpha^n.
\end{equation*}
\end{lemma}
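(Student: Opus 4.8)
The plan is to reduce the convergence statement to a single contraction estimate: first exhibit a power of $T$ that is strictly positive, and then iterate a contraction on the total variation distance. Throughout I treat $\mathcal{X}$ as a finite state space, which is the relevant case here (the $(S,Q)$-graph).

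First I would establish \emph{primitivity}: there exists $r\in\mathbb{N}$ such that every entry of $T^r$ is strictly positive. Irreducibility guarantees that for each ordered pair $(i,j)$ there is at least one $n$ with $(T^n)_{ij}>0$, while aperiodicity means that the $\gcd$ of all return-loop lengths at any fixed node equals $1$. Combining these with an elementary number-theoretic fact (every sufficiently large integer is a nonnegative-integer combination of numbers whose $\gcd$ is $1$) shows that for all large enough $n$ and all pairs $(i,j)$ one has $(T^n)_{ij}>0$; fixing one such $n$ gives the desired $r$. Let $\theta\triangleq\min_{i,j}(T^r)_{ij}>0$.

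Next I would introduce the Dobrushin contraction coefficient of a stochastic matrix $P$,
\begin{equation*}
\delta(P)\triangleq\max_{i,j}\|P(i,\cdot)-P(j,\cdot)\|_{TV},
\end{equation*}
and use two of its standard properties. The first is the contraction inequality $\|\mu P-\nu P\|_{TV}\le \delta(P)\,\|\mu-\nu\|_{TV}$, valid for any two distributions $\mu,\nu$; this follows by normalizing the positive and negative parts of the signed measure $\mu-\nu$ and bounding row by row, using convexity of $\delta$. The second is that strict positivity forces $\delta(P)<1$: since $(T^r)(i,\cdot)$ and $(T^r)(j,\cdot)$ both dominate the vector $\theta\mathbf{1}$ entrywise, their overlap $\sum_k\min\{(T^r)_{ik},(T^r)_{jk}\}$ is at least $|\mathcal{X}|\theta$, whence $\delta(T^r)\le 1-|\mathcal{X}|\theta<1$. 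Setting $\lambda\triangleq\delta(T^r)$ and applying the contraction inequality iteratively with $\nu=\pi$ (stationary, so $\pi T^r=\pi$) gives
\begin{equation*}
\|\xi T^{rk}-\pi\|_{TV}\le \lambda^k\,\|\xi-\pi\|_{TV}\le\lambda^k,
\end{equation*}
uniformly over the initial distribution $\xi\in\mathcal{P}(\mathcal{X})$.

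Finally I would interpolate to all $n$. Writing $n=rk+s$ with $0\le s<r$ and using that left-multiplication by $T$ never increases the total variation distance to $\pi$ (i.e.\ $\|\eta T-\pi\|_{TV}=\|\eta T-\pi T\|_{TV}\le\|\eta-\pi\|_{TV}$, again from the contraction property with the stationary $\pi$), I obtain $\|\xi T^n-\pi\|_{TV}\le\lambda^{\lfloor n/r\rfloor}\le\lambda^{-1}\lambda^{n/r}$. Choosing $\alpha\triangleq\lambda^{1/r}\in(0,1)$ and $C\triangleq\lambda^{-1}$ then yields $\max_{\xi}\|\xi T^n-\pi\|_{TV}\le C\alpha^n$, as claimed. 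The only genuinely nontrivial step is the primitivity claim; everything after it is a clean contraction argument. An equivalent route replaces the contraction coefficient by a Doeblin-minorization coupling: the bound $(T^r)(i,\cdot)\ge|\mathcal{X}|\theta\,U(\cdot)$ with $U$ uniform lets two copies of the chain coalesce with probability at least $|\mathcal{X}|\theta$ every $r$ steps, and the coupling inequality delivers the same geometric rate.
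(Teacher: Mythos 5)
The paper does not prove this lemma at all: it is imported verbatim as Theorem 4.8 of the cited reference (Levin--Peres--Wilmer) and used as a black box in the proofs of Lemmas 7 and 8, so there is no in-paper argument to compare against. Your proof is correct and entirely standard, and it is essentially self-contained: primitivity of $T^r$ from irreducibility plus aperiodicity via the numerical-semigroup fact, the bound $\delta(T^r)\le 1-|\mathcal{X}|\theta<1$ on the Dobrushin coefficient, iteration of the contraction inequality against the stationary $\pi$, and interpolation over the residue $s=n-r\lfloor n/r\rfloor$ using $\delta(T)\le 1$. Your primary route (Dobrushin contraction) differs in flavor from the proof in the cited textbook, which runs through the Doeblin minorization $T^r(x,\cdot)\ge \theta'\pi(\cdot)$ and a coupling/decomposition $T^r=(1-\theta')\Pi+\theta' Q$; you correctly note this alternative at the end, and the two give the same geometric rate. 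The only blemish is cosmetic: in the degenerate case $\lambda=\delta(T^r)=0$ your constants $C=\lambda^{-1}$, $\alpha=\lambda^{1/r}$ are undefined, but then $\xi T^n=\pi$ for all $n\ge r$ and any $\alpha\in(0,1)$ with $C$ large enough works, so the statement is unaffected; replacing $\lambda$ by $\max\{\lambda,1/2\}$ removes the issue. No genuine gap.
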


The Markov chain in our problem is not necessarily aperiodic; therefore, a slight adaptation of Lemma \ref{lemma:convergence} for the periodic case is now given.
\begin{lemma}[Convergence of Periodic Markov Chains]\label{lemma:convergence_periodic}
Let $T$ be a transfer matrix of an irreducible Markov chain with period $D$ on a space $\mathcal{X}$. Then there exist
constants $\alpha \in (0,1)$ and $C > 0$ such that for all $\xi$
\begin{equation*}
  \| \xi T^{nD+k} - \pi_k(\xi)\|_{TV}\leq  C\alpha^n,
\end{equation*}
where $\pi_k(\xi)$ are defined in \eqref{eq:D_limit}.
\end{lemma}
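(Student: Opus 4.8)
The plan is to reduce the periodic case to the aperiodic estimate already established in Lemma~\ref{lemma:convergence} by passing to the chain $T^D$ restricted to each cyclic class. Recall that, since the chain is irreducible with period $D$, its state space partitions uniquely into cyclic classes $A_0,\dots,A_{D-1}$ with all edges from $A_i$ leading to $A_{(i+1)\bmod D}$. I would decompose the initial row vector as $\xi=\sum_{d=0}^{D-1}\xi_d$, where $\xi_d$ is the restriction of $\xi$ to $A_d$, a nonnegative sub-vector of total mass $w_d(\xi)=\sum_{(s,q)\in A_d}\xi(s,q)$. Because a single application of $T$ shifts every class forward by one, the vector $\xi_d T^{\ell}$ is supported on $A_{(d+\ell)\bmod D}$ and carries the same total mass $w_d(\xi)$ for every $\ell$; thus the mass is conserved and merely rotated among the classes. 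Finally, as noted in the Case~A analysis, $T^D$ acts on each class $A_j$ as the transfer matrix of an \emph{irreducible and aperiodic} Markov chain whose stationary distribution is $\pi(A_j)$.

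First I would write $\xi T^{nD+k}=\sum_{d}(\xi_d T^{k})(T^D)^{n}$ and treat each summand separately. For a fixed $d$ with $w_d(\xi)>0$, the normalized vector $\hat\xi_d\triangleq \xi_d/w_d(\xi)$ is a probability distribution on $A_d$, so $\hat\xi_d T^{k}$ is a probability distribution supported on $A_{(d+k)\bmod D}$. Applying Lemma~\ref{lemma:convergence} to the irreducible aperiodic chain $T^D$ on $A_{(d+k)\bmod D}$, with the initial law $\hat\xi_d T^{k}$, yields constants $\alpha_{(d+k)}\in(0,1)$ and $C_{(d+k)}>0$, \emph{independent of the initial law}, with $\|(\hat\xi_d T^{k})(T^D)^{n}-\pi(A_{(d+k)\bmod D})\|_{TV}\le C_{(d+k)}\alpha_{(d+k)}^{n}$. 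Multiplying by $w_d(\xi)\le 1$ and using homogeneity of the total variation gives $\|\xi_d T^{nD+k}-w_d(\xi)\pi(A_{(d+k)\bmod D})\|_{TV}\le C_{(d+k)}\alpha_{(d+k)}^{n}$; the case $w_d(\xi)=0$ is trivial since both sides vanish.

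Then I would reassemble the blocks. As $d$ ranges over $\{0,\dots,D-1\}$ the images $A_{(d+k)\bmod D}$ run over all cyclic classes exactly once, so the vectors $\{\xi_d T^{nD+k}\}_d$ have pairwise disjoint supports, and likewise the $D$ blocks of $\pi_k(\xi)$ are supported on distinct classes. Consequently the total-variation distance splits as a sum over classes, and identifying the mass originating in $A_d$ with the block of $\pi_k(\xi)$ on $A_{(d+k)\bmod D}$ prescribed in \eqref{eq:D_limit} matches each block against its counterpart. Summing the per-block bounds from the previous step gives $\|\xi T^{nD+k}-\pi_k(\xi)\|_{TV}\le \sum_{d}C_{(d+k)}\alpha_{(d+k)}^{n}\le C\alpha^{n}$ upon setting $\alpha\triangleq\max_j\alpha_j<1$ and $C\triangleq\sum_j C_j$. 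Since there are only finitely many phases $k\in\{0,\dots,D-1\}$, a further maximum over $k$ yields a single pair $(\alpha,C)$ valid for all residues, and all bounds are uniform in $\xi$ because the constants of Lemma~\ref{lemma:convergence} do not depend on the initial distribution and every weight satisfies $w_d(\xi)\le 1$.

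The routine ingredients, namely the cyclic decomposition and the aperiodicity of $T^D$ on each class, are both already available. I expect the main obstacle to be purely organizational: the phase bookkeeping that correctly identifies, after $nD+k$ steps, the image of the mass originating in $A_d$ with the appropriate cyclically shifted block of $\pi_k(\xi)$ in \eqref{eq:D_limit}, together with the check that the convergence constants can be chosen uniformly over the finitely many phases $k$. Once the disjoint-support additivity of the total-variation distance is invoked, the geometric decay follows immediately from the aperiodic estimate.
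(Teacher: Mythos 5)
Your proposal is correct and takes essentially the same approach as the paper: both decompose along the cyclic classes, invoke Lemma~\ref{lemma:convergence} for the aperiodic chain $T^D$ restricted to each class, exploit the disjoint supports to split the total-variation distance into per-class terms, and combine the resulting geometric bounds using $\sum_d w_d(\xi)=1$ and a maximum over the finitely many classes and phases. The only differences are organizational (you propagate the restrictions $\xi_d$ forward in time, whereas the paper splits the total-variation sum at time $nD+k$ and renormalizes each block by its weight) together with a harmless bookkeeping discrepancy in the phase index used to match blocks against \eqref{eq:D_limit}.
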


\begin{proof}[Proof of Lemma \ref{lemma:convergence_periodic}]
For some $k\in[0:D-1]$ and for all $\xi$, consider
\begin{align*}
  || \xi T^{nD+k} - \pi_k(\xi)||_{TV} &= \frac{1}{2}\sum_{(s,q)} |\xi P^{nD+k}(s,q) - \pi_k(\xi)(s,q) |\\
  &=\sum_d \frac{1}{2}\sum_{(s,q)\in A_d} |\xi T^{nD+k}(s,q) - \pi_k(\xi)(s,q) |\\
  &\stackrel{(a)}=\sum_d \frac{1}{2}\sum_{(s,q)\in A_d} |\xi T^{nD+k}(s,q) - w_{[k+d]}(\xi)\pi(A_d)(s,q) |\\
  &=\sum_d w_{[k+d]}(\xi) \frac{1}{2}\sum_{(s,q)\in A_d} \left|\frac{\xi T^{nD+k}(s,q)}{w_{[k+d]}(\xi)} - \pi(A_d)(s,q) \right|\\
  &\stackrel{(b)}=\sum_d w_{[k+d]}(\xi) \left|\left| \frac{\xi T^{nD+k}}{w_{[k+d]}(\xi)} - \pi(A_d)\right|\right|_{TV} \\
  &\stackrel{(c)}\leq \sum_d w_{[k+d]}(\xi) C_d \alpha_d^n\\
  &\leq \sum_d w_{[k+d]}(\xi) \max_d C_d \alpha_d^n\\
  &\stackrel{(d)}\triangleq  C \alpha^n,
\end{align*}
where $(a)$ follows by substituting Eq. \eqref{eq:D_limit}, $(b)$ follows by the total variation distance definition when conditioned on the class $A_d$, $(c)$ follows from Lemma \ref{lemma:convergence} and $(d)$ follows by $\sum_d w_{[k+d]}(\xi)=1$.
\end{proof}

\textbf{Case B:} We give an outline of the proof for case B, as it essentially follows the same steps used for Case A. In this scenario, there are several closed communicating classes, denoted by $A_1,\dots,A_k$, and their corresponding periods are $D_1,\dots,D_k$. The technique used in Case A is composed of two steps: the first is to show that the reward is bounded with a reward that has some periodic behavior, as argued in Lemma \ref{lemma:beta_upper_bound}, and the second step is to show that this periodic reward is bounded with respect to some constant quantity (with respect to the initial state).

The first step is addressed by studying the periodic behavior of each closed class, as was done in Case A. Clearly, the initial weight of each closed class is time-invariant since weight cannot move between closed classes. It follows that the common period of all classes is simply the multiplication of all periods, i.e., $D=\prod_i D_i$. This concludes the analysis that is required for the first part of the proof.
The second part of the proof follows the lines used for the proof of Lemma \ref{lemma:beta_upper_ast}; specifically, the upper bound derivation can be followed with the defined $D$, and the $\epsilon$-policy construction is identical.
\end{proof}
\subsection{Proof of Lemma \ref{lemma:step3} (Step $3$)}\label{subsec:step3}
Before presenting the proof of Lemma \ref{lemma:step3}, we impose a restriction on the maximization domain:
\begin{lemma}\label{lemma:pi}
It is sufficient to take the supremum in \eqref{eq:step3} over $P_{X|S,Q}$ which lies in $\mathcal{P}_\pi$.
\end{lemma}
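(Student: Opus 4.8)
The plan is to prove both directions, of which one is immediate: since $\mathcal{P}_\pi$ is a subset of all conditional distributions $P_{X|S,Q}$, the inequality $\sup_{P_{X|S,Q}\in\mathcal{P}_\pi}(\cdot)\le \sup_{P_{X|S,Q}}(\cdot)$ is trivial. The substance is the reverse: every stationary input (by Lemma \ref{lemma:step2} it suffices to consider stationary inputs) must be matched, in the $\liminf$-average objective, by inputs drawn from $\mathcal{P}_\pi$. First I would pin down the value of the objective for a fixed stationary $P_{X|S,Q}$. Because the induced belief $z_t=P_{S_t,Q_t}$ evolves deterministically as $z_t=z_{t-1}T[P_{X|S,Q}]$ with $z_0=\delta_{(s_0,q_0)}$, and because $\mathcal{A}(P_{X|S,Q})$ may split into several closed classes $\mathcal{C}_1,\dots,\mathcal{C}_m$, the Ces\`aro average of $z_t$ converges to $\bar z=\sum_j \lambda_j\pi^{(j)}$, where $\pi^{(j)}$ is the stationary law supported on $\mathcal{C}_j$ and $\lambda_j$ is the absorption probability into $\mathcal{C}_j$ from $(s_0,q_0)$, so that $\sum_j\lambda_j=1$. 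Since the per-step reward $R(z,P_{X|S,Q})=I(X,S;Y|Q)$ is continuous in $z$, the objective equals $R(\bar z,P_{X|S,Q})$ in the aperiodic case; when the classes are periodic the belief cycles through the sets $A_0,\dots,A_{D-1}$ of Lemma \ref{lemma:convergence_periodic}, and Jensen's inequality with the concavity of $R(\cdot,P_{X|S,Q})$ shows the time-average reward is at most $R(\bar z,P_{X|S,Q})$. Hence it suffices to dominate $R(\bar z,P_{X|S,Q})$ by the $\mathcal{P}_\pi$-supremum.

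The concavity just used is precisely the reason a naive ``restrict to the best closed class'' argument does not work: $H(Y|X,S,Q)$ is linear in $z$, while $H(Y|Q)$ is concave in $z$ (it is the composition of the concave conditional entropy with the linear map $z\mapsto P_{Y,Q}$), so $R(\cdot,P_{X|S,Q})$ is concave and the mixture value $R(\bar z,\cdot)$ can strictly exceed $\max_j R(\pi^{(j)},\cdot)$. To recover $R(\bar z,\cdot)$ inside $\mathcal{P}_\pi$ I would perturb, setting $P_\epsilon=(1-\epsilon)P_{X|S,Q}+\epsilon W$, where $W$ puts positive mass on input symbols that render $\mathcal{A}(P_\epsilon)$ irreducible with a single closed class; such $W$ exists because the full $(S,Q)$-graph is assumed to have a single closed class, so the connecting edges are already present in the graph. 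Then $P_\epsilon\in\mathcal{P}_\pi$, and as $\epsilon\to 0$ its unique stationary distribution $\pi_{P_\epsilon}$ converges to a convex combination $\sum_j\mu_j\pi^{(j)}$ whose weights $\mu_j$ are governed by the reduced chain that $W$ induces on $\{\mathcal{C}_j\}$. If $W$ is designed so that $\mu_j=\lambda_j$, then continuity of the reward gives $R(\pi_{P_\epsilon},P_\epsilon)\to R(\bar z,P_{X|S,Q})$, so that $\sup_{P_{X|S,Q}\in\mathcal{P}_\pi}(\cdot)\ge R(\bar z,P_{X|S,Q})$, which is what is needed.

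The main obstacle is this last step: controlling the limit of the stationary distribution of a singularly perturbed Markov chain so as to realize a prescribed weight vector $\lambda$ over the recurrent classes. This is exactly the nearly-completely-decomposable regime, where the limiting class weights are the stationary law of an aggregated chain whose inter-class rates are set by $W$; the technical content is to check that these rates can be tuned to produce any target $\lambda$ (equivalently, that the aggregated chain can be forced to have any prescribed stationary law), and to combine this with the period-$D$ bookkeeping from Lemma \ref{lemma:convergence_periodic}. I expect the continuity and the reduction to $R(\bar z,P_{X|S,Q})$ to be routine given the deterministic DP structure already established in Step~2, and the perturbation together with the weight-matching to carry the real weight of the argument.

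As a remark, one can avoid exact weight-matching by observing that by concavity $R(\bar z,P_{X|S,Q})\le \max_{\mu}R\big(\textstyle\sum_j\mu_j\pi^{(j)},P_{X|S,Q}\big)$ and that the perturbation can be steered to any interior weight vector $\mu$, so $\sup_{P_{X|S,Q}\in\mathcal{P}_\pi}(\cdot)$ already dominates the maximal mixture value; this slightly weakens the bookkeeping but still reduces the whole lemma to the single fact that the limiting class weights of $P_\epsilon$ are a controllable function of $W$.
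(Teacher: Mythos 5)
Your reduction of the objective to $R(\bar z,P_{X|S,Q})$ with $\bar z=\sum_j\lambda_j\pi^{(j)}$, and your observation that concavity of $z\mapsto R(z,u)$ makes the mixture value potentially exceed every single-class reward $R(\pi^{(j)},u)$, are correct and indeed more careful than a naive ``keep only the best closed class'' argument. But the step you yourself flag as carrying ``the real weight of the argument'' --- that the perturbation $P_\epsilon=(1-\epsilon)P_{X|S,Q}+\epsilon W$ can be steered so that $\lim_{\epsilon\to0}\pi_{P_\epsilon}$ equals a prescribed convex combination $\sum_j\mu_j\pi^{(j)}$ --- is asserted, not proved, and it is not routine. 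The limiting weights are the stationary law of an aggregated chain whose entries involve not only the direct inter-class edges that $W$ switches on but also absorption probabilities through the transient part of $\mathcal{A}(P_{X|S,Q})$; whether every interior $\mu$ (or even the specific absorption vector $\lambda$) is reachable depends on which edges the full $(S,Q)$-graph actually contains, and nothing in the hypotheses guarantees, for instance, direct edges between every pair of closed classes. You would also have to check simultaneously that the chosen $W$ keeps the single closed class of $\mathcal{A}(P_\epsilon)$ aperiodic, since otherwise the $\liminf$-average earned by $P_\epsilon$ is the periodic average $\frac{1}{D}\sum_k R(\pi_k,P_\epsilon)$, which by your own Jensen step can fall strictly below $R(\pi_{P_\epsilon},P_\epsilon)$. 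As it stands, the proposal reduces the lemma to an unproven controllability claim about singularly perturbed Markov chains, so there is a genuine gap.

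The paper takes a different and much shorter route that avoids this machinery: it applies an exchange argument to a maximizer $P^\ast_{X|S,Q}$. If $\mathcal{A}(P^\ast_{X|S,Q})$ has closed classes $A_1,A_2$ with rewards $R_1,R_2$, one reassigns positive probability to the (already existing) edges leading from $A_1$ into $A_2$; the modified policy lies in $\mathcal{P}_\pi$ and earns $R_2$, and optimality of $P^\ast_{X|S,Q}$ forces $R_1=R_2$, so the single-class policy matches the optimum. That argument never needs to match the weights $\lambda$, which is precisely what makes it cheap; its price is that it works with a maximizer and with the per-class rewards rather than with the mixture value from the actual initial state. If you wish to keep your route, you must either prove the weight-controllability claim (including the period bookkeeping) or replace it by an optimality argument of the paper's kind; simply deferring it leaves the central step of the lemma unestablished.
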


\begin{proof}[Proof of Lemma \ref{lemma:pi}]
In this proof, we will take the maximizer of the LHS in \eqref{eq:step3}, and show that there exists a distribution from $\mathcal{P}_\pi$ that induces the maximal reward.

Let $P_{X|S,Q}^\ast$ be a maximizer which implies two closed communicating classes, $A_1$ and $A_2$, with average rewards, $R_1$ and $R_2$, respectively. Construct $\tilde{P}_{X|S,Q}$ exactly as $P_{X|S,Q}^\ast$, but where positive probabilities are given for edges from $A_1$ to $A_2$. This modification is legitimate since the $(S,Q)$-graph is irreducible. This modification did not effect the rewards for initial states in $A_2$, while the rewards of initial states in $A_1$ might be changed.

By the optimality of $P_{X|S,Q}^\ast$, the reward $R_1$ cannot be increased and, therefore, $R_1=R_2$. Since $R_1=R_2$, the policy induced by $\tilde{P}_{X|S,Q}\in\mathcal{P}_\pi$ achieves the same optimal rewards as the maximizer. The above argument can be extended to any number of closed communicating classes since the graph is finite.
\end{proof}

By the definition of the set $\mathcal{P}_\pi$, there is a single closed communicating class for each input distribution $P_{X|S,Q}$. Let $A$ denote the graph which describes the closed class and let $T_A$ be its transition probability matrix. Since $A$ is irreducible, there exists a stationary distribution $\pi = [\pi_1,\pi_2,\dots,\pi_{|A|}]$ which is the unique solution for the equation $\pi T_A=\pi$. Here, the stationary distribution is in the Cesaro sum sense, i.e.,
\begin{equation*}
\frac{1}{n}\sum_{m=1}^n T_A^m\rightarrow \left(
                               \begin{array}{cccc}
                                 \pi_1 & \pi_2 & \dots & \pi_{|A|} \\
                                 \vdots &  &  & \vdots \\
                                 \pi_1 & \pi_2 & \dots & \pi_{|A|} \\
                               \end{array}
                             \right).
\end{equation*}

Let $|\mathcal{D}|$ be the period of the graph $A$, and let $A_1,\dots A_{|\mathcal{D}|}$ be the disjoint subsets of nodes based on a period equivalence. The dependence of $A$, $|\mathcal{D}|$ and $A_i$ on $P_{X|S,Q}$ is omitted.

\begin{proof}[Proof of Lemma \ref{lemma:step3}]
Consider the following chain of equalities:
\begin{align}\label{eq:single_letter}
&\sup_{P_{X|S,Q}} \liminf_{N\rightarrow \infty}\frac{1}{N} \sum_{i=1}^N I(X_i,S_{i-1};Y_i|Q_{i-1}) \nn\\
&\stackrel{(a)}=\sup_{P_{X|S,Q}\in\mathcal{P}_\pi} \liminf_{N\rightarrow \infty}\frac{1}{N} \sum_{i=1}^N I(X_i,S_{i-1};Y_i|Q_{i-1}) \nn\\
&\stackrel{(b)}\leq\sup_{P_{X|S,Q}\in\mathcal{P}_\pi} \liminf_{N\rightarrow \infty}\frac{1}{N|\mathcal{D}|} \sum_{i=1}^{N|\mathcal{D}|} I(X_i,S_{i-1};Y_i|Q_{i-1}) \nn\\
&\stackrel{(c)}= \sup_{P_{X|S,Q}\in\mathcal{P}_\pi} \liminf_{N\rightarrow \infty}\frac{1}{N|\mathcal{D}|} \sum_{d=1}^{|\mathcal{D}|} \sum_{i=0}^{N-1} I(X_{i|\mathcal{D}|+d},S_{i|\mathcal{D}|+d-1};Y_{i|\mathcal{D}|+d}|Q_{i|\mathcal{D}|+d-1})\nn\\
&\stackrel{(d)}= \sup_{P_{X|S,Q}\in\mathcal{P}_\pi} \frac{1}{|\mathcal{D}|}\sum_{d=1}^{|\mathcal{D}|} \liminf_{N\rightarrow \infty}\frac{1}{N}  \sum_{i=0}^{N-1} I(X_{i|\mathcal{D}|+d},S_{i|\mathcal{D}|+d-1};Y_{i|\mathcal{D}|+d}|Q_{i|\mathcal{D}|+d-1})\nn\\
&\stackrel{(e)}= \sup_{P_{X|S,Q}\in\mathcal{P}_\pi} \frac{1}{|\mathcal{D}|}\sum_{d\in\mathcal{D}}  I(X,S_{d};Y|Q_{d})\nn\\
&\stackrel{(f)}= \sup_{P_{X|S,Q}\in\mathcal{P}_\pi} I(X,S;Y|Q,D)\nn\\
&= \sup_{P_{X|S,Q}} H(Y|Q,D) - H(Y|X,S,Q,D)\nn\\
&\stackrel{(g)}\leq \sup_{P_{X|S,Q}\in\mathcal{P}_\pi} H(Y|Q) - H(Y|X,S,Q)\nn\\
&= \sup_{P_{X|S,Q}\in\mathcal{P}_\pi} I(Y;X,S|Q),
\end{align}
where
\begin{itemize}
  \item[(a)] follows from Lemma \ref{lemma:pi};
  \item[(b)] follows by taking the limit on a subsequence of $N$, i.e., the sequence $|\mathcal{D}|,2|\mathcal{D}|,\dots$;
  \item[(c)] follows by re-indexing the summation in blocks of $\mathcal{D}$ elements;
  \item[(d)] follows by exchanging the order of the limit and sum due to the limit existence of each term in the sum;
  \item[(e)] follows by calculating the limit for a fixed $d$. Specifically, the value of $d$ determines a class $A_d$. The distribution of $p(s_{i|\mathcal{D}|+d-1},q_{i|\mathcal{D}|+d-1})$ tends to the stationary distribution since the chain is aperiodic and irreducible. This, in turn, gives that the distribution for each $d$ is $p_d(y,x,s,q)=p(y,x|s,q)\pi_d(s,q)$, where
       \begin{align}\label{eq:old_distr}
  \pi_d(s,q)&= \begin{cases} \frac{\pi(s,q)}{\sum_{(s,q)\in A_d} \pi(s,q)} &\mbox{if } (s,q)\in A_d; \\
0 & \mbox{otherwise.} \end{cases}.
\end{align}
\item[(f)] follows by defining a uniform RV, $D$, on $[1:|\mathcal{D}|]$. The joint distribution is $p(y,x,s,q,d) = p(y,x|s,q)\pi_d(s,q)p(d)$;
\item[(g)] follows from the Markov $Y-(X,S)- D$ and the fact that conditioning reduces entropy. This expression is calculated with respect to $p(s,q,x,y)$, which is the marginal distribution of $p(d)\pi_d(s,q)p(x|s,q)p(y|x,s)$. Explicit calculation gives that
\begin{align*}
  p(s,q) &= \sum_d p(d)\pi_d(s,q)\\
         &\stackrel{(\ast)}= \frac{1}{|\mathcal{D}|} \frac{\pi(s,q)}{\sum_{(s,q)\in A_d} \pi(s,q)}\\
         &\stackrel{(\ast\ast)}= \frac{1}{|\mathcal{D}|} \frac{\pi(s,q)}{\frac{1}{\mathcal{|D|}}}\\
         &= \pi(s,q),
\end{align*}
where $(\ast)$ is obtained by substituting the expression from \eqref{eq:old_distr} and $(\ast\ast)$ follows from $\sum_{(s,q)\in A_d} \pi(s,q) = \frac{1}{|\mathcal{D}|}$, for all $d$, since each class is on a cycle.
\end{itemize}
To conclude the proof, we have shown in \eqref{eq:single_letter} that $I(X,S;Y|Q)$ with $\pi(s,q)p(x|s,q)p(y|x,s)$ is an upper bound.
\end{proof}
\section{Conclusions}\label{sec:conclusions}
An upper bound on the feedback capacity of unifilar state channels was derived. The upper bound is expressed by a computable single-letter expression and it was shown how the bound can be computed for known capacity results. Calculation of the upper bound for the DEC resulted a new capacity result together with the sufficient condition for the optimality of the upper bound. For all studied channels, the optimal $Q$-graph was obtained from DP simulations. A further direction that is under investigation is a structured method for finding such an optimal $Q$-graph without DP simulations.

The upper bound gives a useful insight into the structure of optimal output processes. Specifically, as the bound is tight for all known capacities, this provides a unifying structure for the optimal output processes. The technique used in this paper might also be applied to any entropy rate of a random process. Specifically, for the $n$th instance, $H(Y_n|Y^{n-1})$, the process history can be quantized using a $Q$-graph. However, even for FSCs without feedback, the obtained upper bound is not computable since the contexts are not revealed to the encoder.

\appendices
\section{Proof of Lemma \ref{lemma:coupled}}\label{app:proof_lemma_coupled}
Each node has an outgoing edge since for each $s$ there exists $(x,y)$ such that $p(y|x,s)>0$. Therefore, each node $(s,q)$ has an outgoing edge $(s,q)\rightarrow(g(q,y), f(s,x,y))$. It should be clear that each node has at least one outgoing edge to another node since, if a node $(s,q)$ has edges to itself only, this means that for all $(x,y)$, $s=f(x,y,s)$, implying $|\mathcal{S}|=1$. Therefore, by the pigeonhole principle, there exists at least one closed communicating class.

To show that each closed class has all $q\in Q$, recall that the $Q$-graph is irreducible and, therefore, for each pair $(q_0,q_n)$, there exists a path $q_0\rightarrow q_n$ labelled by $y_1\dots y_n$ such that $q_i=g(y_i,q_{i-1})$. For the first label, $y_1$, there exists $(x_1,s_1)$ such that $p(y_1|x_1,s_1)>0$. Then, for a node $(s,q_1)$ in the closed class there is an edge to $(f(y_1,x_1,s_1),q_2)$; this argument can be repeated until a node of the form $(\cdot,q_n)$ is reached. Since it is a closed communicating class, each path leads to a node in this class.

The proof that each closed class has all $s\in S$ is similar to the previous argument, but using the strongly connected property of states, i.e., that the states graph is irreducible. For each $s,s'$, there exists a path labelled by $(x_1,y_1),(x_2,y_2),\dots$ with probabilities $p(y_i|x_i,s_{i-1})>0$ such that $s$ reaches $s'$. Therefore, for each $(s,q)$ there is a path to $(s',\cdot)$ for all $s'$. $\hfill\blacksquare$

\section{Proof of Theorem \ref{theorem:lower}}
In this proof, we show that BCJR-invariant inputs induce the Markov chain $Y_t-Q_{t-1}-Y^{t-1}$ for all $t$. This Markov chain gives, in turn, that the feedback capacity expression with the chosen input is $I(X,S;Y|Q)$.

Since inputs $P_{X|S,Q}$ are assumed to be aperiodic inputs, it may be assumed that the initial distribution is $\pi(s,q)$ since it is reached with high probability. We will show by induction that the value of the BCJR estimator is determined by a context of sequence, i.e., $p(S_{t}|y^{t}) = \pi_{S|Q=q}$, where $q$ is the context of $y^t$. At time $t-1$, assume that $p(S_{t-1}|y^{t-1}) = \pi_{S|Q=q}$ where $q$ is the context of $y^{t-1}$. Then, one can calculate at time $t$,
\begin{align*}
  p(S_{t}|y^{t}) &\stackrel{(a)}= B(\pi_{S|Q=q},y_t)\\
  &\stackrel{(b)}= \pi_{S|Q=g(q,y_t)},
\end{align*}
where $(a)$ follows from the induction hypothesis and the forward-recursive relation, \eqref{eq:BCJR}, and $(b)$ follows from the  BCJR-invariant property. Thus, we have shown that the probability vector $p(S_{t}|y^{t})$ is determined by the context of the sequence, which concludes the proof.

Given the fact that $p(S_{t-1}=s|y^{t-1}) = \pi_{S=s|Q=q}$, we can show that the Markov chain $Y_t-Q_{t-1}-Y^{t-1}$ holds:
\begin{align}\label{eq:Lower_Markov}
  p(y_t|y^{t-1},q_{t-1}) &= \sum_{s_{t-1},x_t}   p(y_t,x_t,s_{t-1}|y^{t-1},q_{t-1})\nn\\
         &= \sum_{s_{t-1},x_t}   p(y_t|x_t,s_{t-1})p(x_t|s_{t-1},q_{t-1},y^{t-1})p(s_{t-1}|y^{t-1},q_{t-1})\nn\\
         &\stackrel{(a)}= \sum_{s_{t-1},x_t}   p(y_t|x_t,s_{t-1})p(x_t|s_{t-1},q_{t-1})\pi(s_{t-1}|q_{t-1})\nn\\
         &= p(y_t|q_{t-1}),
\end{align}
where $(a)$ follows from the fact that the input $x_t$ depends on the pair $(s_{t-1},q_{t-1})$ only, and the from the above inductive argument which shows that $p(s_{t-1}|y^{t-1},q_{t-1})=\pi(s_{t-1}|q_{t-1})$.

Finally, the theorem can be proved by the following chain of inequalities
\begin{align*}
C_{\text{fb}} &\stackrel{(a)}= \sup_{\{p(x_t|s_{t-1},y^{t-1})\}_{t\geq 1}} \liminf_{N\rightarrow \infty} \frac{1}{N} \sum_{i=1}^N   I(X_i,S_{i-1};Y_i|Y^{i-1})\\
              &\stackrel{(b)}= \sup_{\{p(x_t|s_{t-1},y^{t-1})\}_{t\geq 1}} \liminf_{N\rightarrow \infty} \frac{1}{N} \sum_{i=1}^N    I(X_i,S_{i-1};Y_i|Q_{i-1}) - I(Y_i;Y^{i-1}|Q_{i-1})\\
              &\stackrel{(c)}\geq \liminf_{N\rightarrow \infty} \frac{1}{N} \sum_{i=1}^N   I(X_i,S_{i-1};Y_i|Q_{i-1}) - I(Y_i;Y^{i-1}|Q_{i-1})\\
              &\stackrel{(d)}= \liminf_{N\rightarrow \infty} \frac{1}{N} \sum_{i=1}^N   I(X_i,S_{i-1};Y_i|Q_{i-1}) \\
              &\stackrel{(e)}= I(X,S;Y|Q),
\end{align*}
where
 \begin{itemize}
   \item[(a)] follows from the capacity formula from Theorem \ref{theorem:capacity_unifilar};
   \item[(b)] follows from re-writing $I(X_i,S_{i-1};Y_i|Y^{i-1})=H(Y_i|Y^{i-1})-H(Y_i|X_i,S_{i-1})$ and adding $H(Y_i|Q_{i-1})-H(Y_i|Q_{i-1})$;
   \item[(c)] follows by taking the input distribution to be $p(x_t|s_{t-1},y^{t-1})=P_{X|S,Q}$ for all $t$;
   \item[(d)] follows from the Markov chain $Y_i-Q_{i-1}-Y^{i-1}$ in \eqref{eq:Lower_Markov};
   \item[(e)] follows from the aperiodic Markov chain on the state space $(S,Q)$ which induces its corresponding stationary distribution.
 \end{itemize}
$\hfill\blacksquare$
\section{Proof of Corollary \ref{coro:trapdoor_05}}\label{app:coro_trapdoor_05}
In this section we show that $C_{Trap}(0.5)\leq\log_2 \phi$.
The upper bound on the capacity of the trapdoor channel with $p=0.5$ from Theorem \ref{theorem:trapdoor_Q1} is:
\begin{align*}
C_{\text{Trap}}(0.5)&\leq \max_{(\alpha_1,\alpha_2,\alpha_3)\in[0,1]^3}  \lambda_1(\alpha_1,\alpha_2,\alpha_3) + \lambda_2(\alpha_1,\alpha_2,\alpha_3),
\end{align*}
where
\begin{align*}
  \lambda_1(\alpha_1,\alpha_2,\alpha_3)&= 2(\kappa_1+\kappa_2)H_2\left(\frac{\kappa_1(1-0.5\alpha_1) + 0.5\kappa_2\alpha_2}{\kappa_1+\kappa_2}\right)\nn\\
  \lambda_2(\alpha_1,\alpha_2,\alpha_3)&= 2(\kappa_3 - \kappa_1\alpha_1 - \kappa_2\alpha_2 - 0.5\alpha_3)
\end{align*}
and
\begin{align*}
  \delta &= 3\alpha_1 -\alpha_2 + \alpha_1\alpha_3 - \alpha_1\alpha_2 + \alpha_2\alpha_3 - 2\alpha_3 + 2 \\
  \kappa_1 &= \frac{(1-\alpha_3)(1 - 0.5 \alpha_2)}{\delta} \\
  \kappa_2 &= \frac{0.5\alpha_1(1 + \alpha_3)}{\delta} \\
  \kappa_3 &= \frac{\alpha_1(1 -  0.5\alpha_2)}{\delta}.
\end{align*}

The proof will follow from the facts that $\lambda_2(\cdot)\leq0$ and $\lambda_1(\cdot)\leq\log\phi$. Let us begin with $\lambda_2(\cdot)\leq0$ that is equal to
\begin{align}\label{eq:trapdoor_g_explicit}
\lambda_2(\cdot)&= \frac{\alpha_2\alpha_3 - \alpha_1\alpha_2 - \alpha_1\alpha_3 - 2\alpha_3 - \alpha_1\alpha_3^2 - \alpha_2\alpha_3^2 + 2\alpha_3^2 - \alpha_1\alpha_2\alpha_3}{\delta}.
\end{align}
Since $\delta>0$, it is sufficient to verify that the numerator is always negative; to this end, we write the numerator of \eqref{eq:trapdoor_g_explicit} as a polynomial of $\alpha_3$ when $\alpha_1,\alpha_2$ are some parameters:
\begin{align*}
&\alpha_3^2 (2-\alpha_1-\alpha_2)+ \alpha_3 (-2-\alpha_1+\alpha_2 - \alpha_1\alpha_2) - \alpha_1\alpha_2.
\end{align*}
The coefficient of $\alpha_3^2$ is positive and, therefore, is a convex function. It can also be noted that the function is negative both at $\alpha_3=0$ and $\alpha_3=1$, thus, for $\alpha_3\in[0,1]$ it is upper bounded with zero.

Let us provide an upper bound for $\lambda_1(\alpha_1,\alpha_2,\alpha_3)$:
 \begin{align*}
   \lambda_1(\alpha_1,\alpha_2,\alpha_3)&= 2(\kappa_1+\kappa_2)H_2\left(\frac{\kappa_1(1-0.5 \alpha_1) + \kappa_2 0.5\alpha_2}{\kappa_1+\kappa_2}\right)\\
   &\stackrel{(a)}= 2(\kappa_1+\kappa_2)H_2\left(\frac{\kappa_10.5\alpha_1 + \kappa_2(1-0.5\alpha_2)}{\kappa_1+\kappa_2}\right)\\
   &\stackrel{(b)}= 2(\kappa_1+\kappa_2) H_2\left(\frac{\kappa_3}{\kappa_1+\kappa_2}\right)\\
   &\stackrel{(c)}= 2 (\kappa_1+\kappa_2)H_2\left(\frac{0.5-(\kappa_1+\kappa_2)}{\kappa_1+\kappa_2}\right)\\
   &\stackrel{(d)}= 2 \frac{1}{2(p+1)}H_2\left(p\right)\\
  &\stackrel{(e)}\leq \max_{0\leq p\leq 1}\frac{H_2(p)}{1+p}\\
  &= \log_2\phi,
 \end{align*}
 where
 \begin{itemize}
   \item[(a)] follows from the symmetry of the entropy function, i.e., $H_2(p)=H_2(1-p)$;
   \item[(b)] follows from $\kappa_10.5\alpha_1 + \kappa_2(1-0.5\alpha_2) = \kappa_3$;
   \item[(c)] follows from $\kappa_1+\kappa_2+\kappa_3=0.5$;
   \item[(d)] follows by defining a new variable $p(\alpha_1,\alpha_2,\alpha_3)=\frac{0.5-(\kappa_1+\kappa_2)}{\kappa_1+\kappa_2}$;
   \item[(e)] follows by taking the maximum over $p$, which is obviously restricted to $[0,1]$.
 \end{itemize}

Finally, we can show that
 \begin{align*}
C_{Trap}(p)&\leq \max_{(\alpha_1,\alpha_2,\alpha_3)\in[0,1]^3}  \lambda_1(\alpha_1,\alpha_2,\alpha_3) + \lambda_1(\alpha_1,\alpha_2,\alpha_3)\\
&\leq \max_{(\alpha_1,\alpha_2,\alpha_3)\in[0,1]^3}  \lambda_1(\alpha_1,\alpha_2,\alpha_3) + \max_{(\alpha_1,\alpha_2,\alpha_3)\in[0,1]^3}  \lambda_2(\alpha_1,\alpha_2,\alpha_3)\\
&\stackrel{(a)}\leq \log_2\phi,
\end{align*}
where $(a)$ follows from the derived upper bounds on each function separately. $\hfill\blacksquare$
\section{Proof of Theorem \ref{theorem:trapdoor_lower}}\label{app:trapdoor_lower}
The proof is based on Theorem \ref{theorem:lower} with the $Q$-graph from Fig. \ref{fig:Q2_trapdoor} and the following input distribution:
\begin{align*}
  p(x=0|s=0,q_1) &= 1 \\
  p(x=0|s=0,q_2) & =1 \\
  p(x=0|s=0,q_3) &=  \frac{z p}{1-(1-p)z}\\
  p(x=0|s=0,q_4) &=  \frac{z p}{1-(1-p)z}\\
  p(x=1|s=1,q_1) &=  \frac{z p}{1-(1-p)z}\\
  p(x=1|s=1,q_2) &=  \frac{z p}{1-(1-p)z}\\
  p(x=1|s=1,q_3) &= 1 \\
  p(x=1|s=1,q_4) & =1,
\end{align*}
where $z$ is a parameter in $[0,1]$ and $p$ is the channel parameter. Straightforward calculation gives that $[\pi(s=0|q_1),\pi(s=0|q_2),\pi(s=0|q_3),\pi(s=0|q_4)] = [(1-p)z, 1-z, z, 1-(1-p)z]$.

The BCJR equation can be written as:
\begin{equation*}
p(s_i=0|q_j=g(q_i,y))=\left\{\begin{array}{cc}
 \frac{\delta_i}{(1-p)(\delta_i-\gamma_i) + p\pi(s=0|q_i) + (1-p)\pi(s=1|q_i)} & \text{if } y_t=0, \\
 \frac{(1-p)(\pi(s=0|q_i)-\delta_i) + p(\pi(s=1|q_i) -\gamma_i)}{(1-p)(\pi(s=0|q_i)-\delta_i) + \pi(s=1|q_i) + (1-p)(\gamma_i - \pi(s=1|q_i))} & \text{if } y_t=1, \end{array}\right.
\end{equation*}
where $\delta_i = \pi(s=0|q_i)p(x=0|s=0,q_i)$ and $\gamma_i= \pi(s=1|q_i)p(x=1|s=1,q_i)$. The explicit calculation of the BCJR-invariant property is omitted here as it is identical to the calculations for the DEC and the input-constrained BEC.

For simplicity, we denote $\alpha\triangleq\frac{z p}{1-(1-p)z}$ which can take any value on $[0,1]$, and then we have the stationary vector of the $Q$-graph:
\begin{align*}
  [\pi(q_1),\pi(q_2),\pi(q_3),\pi(q_4)]&=\left[\frac{1-\alpha}{4-2\alpha},\frac{1}{4-2\alpha},\frac{1}{4-2\alpha},\frac{1-\alpha}{4-2\alpha}\right],
\end{align*}
and the per node rewards:
\begin{align*}
  I(X,S;Y|Q=q_1)&=I(X,S;Y|Q=q_4)= z H_2(p) \\
  I(X,S;Y|Q=q_2)&=I(X,S;Y|Q=q_3)= H_2(\alpha) - (1-\alpha)z H_2(p).
\end{align*}

Then, the lower bound can be computed:
\begin{align*}
  C_{\text{fb}}&\geq  I(X,S;Y|Q) \nn\\
               &= 2 \cdot \frac{(1-\alpha)z H_2(p)}{4-2\alpha} + 2 \cdot \frac{H_2(\alpha) - (1-\alpha)z H_2(p)}{4-2\alpha}\nn\\
               &= \frac{H_2(\alpha)}{2-\alpha}.
\end{align*}
By taking a maximum over $\alpha$, we obtain that the capacity is lower bounded with $\log\phi$.

$\hfill\blacksquare$

\section{Proof of Lemma \ref{lemma:beta_upper_bound}}\label{app:lemma_beta_1}
Before presenting the proof, we recall a known upper bound on the difference between two entropies for different PMFs is presented.
\begin{lemma}[Theorem $3$, \cite{igal}]\label{lemma:igal}
For two joint PMFs, $P$ and $Q$, on a finite set $\mathcal{X}$,
\begin{align*}
| H_P(X) - H_Q(X)|& \leq \| P_{X} - Q_{X}\|_{TV}\log(|\mathcal{X}|-1) + H_2(\| P_{X} - Q_{X}\|_{TV}).
\end{align*}
\end{lemma}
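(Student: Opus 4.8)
The plan is to prove this sharp continuity estimate by a maximal coupling argument combined with the chain rule for entropy. Write $n=|\mathcal{X}|$ and $d=\|P_X-Q_X\|_{TV}$. First I would invoke the existence of a maximal coupling: there is a joint law for a pair $(X,Y)$ with marginals $P_X$ and $Q_X$ such that $\Pr(X\neq Y)=d$. Concretely, setting $m(x)=\min\{p(x),q(x)\}$ and noting that $\sum_x m(x)=1-d$ under the convention $\|\cdot\|_{TV}=\tfrac12\sum_x|p(x)-q(x)|$ used in the paper, one samples $X=Y$ from the normalized overlap $m/(1-d)$ with probability $1-d$, and otherwise samples $X$ and $Y$ independently from the normalized residuals $(p-m)/d$ and $(q-m)/d$, whose supports are disjoint, so that the two variables disagree exactly on the residual branch.

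The core step is to introduce the disagreement indicator $E=\mathbbm{1}\{X\neq Y\}$, so that $\Pr(E=1)=d$ and $H(E)=H_2(d)$. Applying the chain rule conditionally on $Y$ yields $H(X|Y)\leq H(X,E|Y)=H(E|Y)+H(X|Y,E)\leq H_2(d)+H(X|Y,E)$. The conditional term then splits on the value of $E$: when $E=0$ we have $X=Y$ so that $H(X|Y,E=0)=0$, whereas when $E=1$ and $Y=y$ is fixed, $X$ is supported on $\mathcal{X}\setminus\{y\}$, a set of $n-1$ symbols, so that $H(X|Y,E=1)\leq\log(n-1)$. Averaging gives $H(X|Y)\leq H_2(d)+d\log(n-1)$.

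To finish, I would use $H(X)\leq H(X,Y)=H(Y)+H(X|Y)$ to conclude $H_P(X)-H_Q(X)=H(X)-H(Y)\leq H(X|Y)\leq d\log(n-1)+H_2(d)$; the reverse bound $H(Y)-H(X)\leq H(Y|X)$ follows by interchanging the roles of $X$ and $Y$ in the identical argument, and the two together yield the claimed bound on $|H_P(X)-H_Q(X)|$. The only delicate point, and the main obstacle, is the existence and correctness of the maximal coupling, i.e.\ verifying that the overlap-plus-residual construction is a legitimate joint law with the prescribed marginals that achieves $\Pr(X\neq Y)=d$ exactly; once that is in hand, the remainder is a routine application of the chain rule and the observation that conditioning on disagreement restricts the effective support to $n-1$ symbols.
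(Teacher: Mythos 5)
Your proof is correct: the maximal coupling has the stated marginals and disagreement probability $d=\|P_X-Q_X\|_{TV}$, and the chain-rule decomposition $H(X|Y)\leq H(E)+H(X|Y,E)\leq H_2(d)+d\log(|\mathcal{X}|-1)$ together with $H(X)-H(Y)\leq H(X|Y)$ and symmetry gives exactly the claimed bound. Note that the paper does not prove this lemma at all — it imports it as Theorem 3 of the cited reference — and your maximal-coupling argument is precisely the proof given there, so there is nothing to compare beyond observing that you have supplied the omitted proof correctly.
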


\begin{proof}[Proof of Lemma \ref{lemma:beta_upper_bound}]
For initial state $\xi$, the real distribution on $\mathcal{S}\times\mathcal{Q}$ at time $nD+k$ is $\xi T^{nD+k}$, while $\pi_k(\xi)$ is the distribution that was defined in \eqref{eq:D_limit}. Recall that the distribution $P_{Y,X|S,Q} = P_{Y|X,S}P_{X|S,Q}$ is determined by the policy and the channel. With some loss of accuracy, the dependence on the initial state $\xi$ might be omitted and the derivations hold for all $\xi$.

Consider the rewards difference at time $nD+k$:
\begin{align}\label{eq:trian}
|R(\xi T^{nD+k},f_\beta) - R(\pi_k(\xi),f_\beta)|& = |I_{T^{nD+k}}(X,S;Y|Q) - I_{\pi^k}(X,S;Y|Q)| \nn\\
& \stackrel{(a)}\leq  |H_{T^{nD+k}}(Y|Q) - H_{\pi^k}(Y|Q)| + |H_{T^{nD+k}}(Y|X,S,Q) - H_{\pi^k}(Y|X,S,Q)|,
\end{align}
where $(a)$ follows by the triangle's inequality. The first term in \eqref{eq:trian} can be bounded by
\begin{align}\label{eq:TV_first}
    |H_{T^{nD+k}}(Y|Q) - H_{\pi^k}(Y|Q)| &\stackrel{(a)}\leq |\mathcal{Q}| \max_q |H_{T^{nD+k}}(Y|Q=q) - H_{\pi^k}(Y|Q=q)| \nn\\
  & \stackrel{(b)}\leq |\mathcal{Q}| \max_q \left\{||T^{nD+k}_{Y|Q=q} - \pi^k_{Y|Q=q}||_{TV}\log(|\mathcal{Y}|-1)  + H_2(||T^{nD+k}_{Y|Q=q} - \pi^k_{Y|Q=q}||_{TV})\right\},
\end{align}
where $(a)$ follows by the triangle's inequality and $(b)$ follows from Lemma \ref{lemma:igal}.

Consider for all $q\in\mathcal{Q}$
\begin{align*}
   ||T^{nD+k}_{Y|Q=q} - \pi^k_{Y|Q=q}||_{TV} &\stackrel{(a)}\leq ||\xi T^{nD+k}_{Y,Q} - \pi^k_{Y,Q}(\xi)||_{TV}\nn\\
   &\stackrel{(b)}\leq  ||T^{nD+k}_{S,Q,X,Y} - \pi^k_{S,Q,X,Y}||_{TV} \nn\\
  &\stackrel{(b)}= ||T^{nD+k}_{S,Q} - \pi^k_{S,Q}||_{TV} \nn\\
  &\stackrel{(c)}\leq C\alpha^n,
\end{align*}
where $(a)$ follows by adding terms to the sum of total variation, $(b)$ follows from Lemma \ref{lemma:Paul} and $(c)$ follows from Lemma \ref{lemma:convergence_periodic}. Since $C\alpha^n\to 0$, there exists some $N'$ for which $||T^{nD+k}_{S,Q} - \pi^k_{S,Q}||_{TV} \leq 0.5$ for all $n>N'$.

Therefore, \eqref{eq:TV_first} can be bounded for all $n>N'$ as follows:
\begin{align*}
     |H_{T^{nD+k}}(Y|Q) - H_{\pi^k}(Y|Q)|&\stackrel{(a)}\leq |\mathcal{Q}| \left\{||T^{nD+k}_{S,Q} - \pi^k_{S,Q}||_{TV}\log(|\mathcal{Y}|-1)  + H_2(||T^{nD+k}_{S,Q} - \pi^k_{S,Q}||_{TV})\right\}\nn\\
     &\leq |\mathcal{Q}| \left\{C\alpha^n \log(|\mathcal{Y}|-1)  + H_2(C\alpha^n)\right\},
 \end{align*}
where $(a)$ follows from \eqref{eq:TV_first} and Lemma \ref{lemma:Paul}. The same derivation can be repeated for the second term in \eqref{eq:trian} resulting in the same convergence rate. To summarize, there exist some constants $C'>0$ and $\alpha\in(0,1)$ such that
\begin{align}\label{eq:trian_notation}
  |R(\xi T^{nD+k},f_\beta) - R(\pi_k(\xi),f_\beta)|&\leq C'\alpha^n + 2 H_2(C\alpha^n),
\end{align}
for all $n>N'$.

For all $\xi$, consider
\begin{align}\label{eq:lemma1_final}
 |\nu_\beta(\xi) -  \nu^\pi_\beta(\xi)| &= \left| \sum_{n=1}^\infty \beta^{nD} \sum_{k=0}^{D-1} \beta^k \left[ R(\xi T^{nD+k},f_\beta)  -  R(\pi_k(\xi),f_\beta)\right] \right|\nn\\
 &\stackrel{(a)}\leq \sum_{n=1}^\infty \beta^{nD} \sum_{k=0}^{D-1}\beta^k |R(\xi T^{nD+k},f_\beta)  - R(\pi_k(\xi),f_\beta)| \nn\\
 &\stackrel{(a)}\leq N'D\log|\mathcal{Y}| + \sum_{n=N'+1}^\infty \beta^{nD} \sum_{k=0}^{D-1}\beta^k |R(\xi T^{nD+k},f_\beta)  - R(\pi_k(\xi),f_\beta)| \nn\\
 &\stackrel{(b)}\leq N'D\log|\mathcal{Y}| + \sum_{n=N'+1}^\infty \beta^{nD} \sum_{k=0}^{D-1}\beta^k C'\alpha^n + 2 H_2(C\alpha^n) \nn\\
 &\stackrel{(c)}\leq N'D\log|\mathcal{Y}| + \sum_{n=1}^\infty C'\alpha^n + 2 H_2(C\alpha^n),
\end{align}
where $(a)$ follows by the triangle's inequality, $(b)$ follows from \eqref{eq:trian_notation} and $(c)$ follows from $\beta\leq 1$. Finally, by verifying that $\sum_{n=1}^\infty H_2(C\alpha^n) < \infty$ and by taking the supremum on both sides of \eqref{eq:lemma1_final} we have that $\sup_\beta |\nu_\beta(\xi) -  \nu^\pi_\beta(\xi)| < \infty$.
\end{proof}

\section{Proof of Lemma \ref{lemma:beta_upper_ast}}\label{app:lemma_beta_2}
The proof of Lemma \ref{lemma:beta_upper_ast} comprises two main steps. First, we derive an upper bound on $\nu^\pi_\beta(\xi)$ which does not depend on the initial state $\xi$. Secondly, we construct a new policy that can achieve a reward that is arbitrarily close to the provided upper bound. From the optimality of $f_\beta$, the two steps taken imply that all $\nu^\pi_\beta(\xi)$ are, indeed, close ``enough'' to the upper bound.
\begin{proof}[Proof of Lemma \ref{lemma:beta_upper_ast}]
Let us derive an upper bound on the average of $D$ consecutive rewards for some initial state $\xi$:
\begin{align}\label{eq:CaseA_upper}
  \frac{1}{D}\sum_{k=0}^{D-1} \beta^k R(\pi_k(\xi),f_\beta) &\stackrel{(a)}\leq   \frac{1}{D}\sum_{k=0}^{D-1} R(\pi_k(\xi),f_\beta)\nn\\
  &\stackrel{(b)}= \frac{1}{D}\sum_{k=0}^{D-1} I(X,S_k(\xi);Y|Q_k(\xi))\nn \\
               &\stackrel{(c)}= I(X,S;Y|Q,K(\xi)) \nn\\
               &\stackrel{(d)}\leq I_\xi(X,S;Y|Q)\nn\\
               &\stackrel{(e)}= R(\pi_{f_\beta},f_\beta),
\end{align}
where $(a)$ follows from $\beta\leq 1$, $(b)$ follows from the notation $p(s_k(\xi),q_k(\xi)) = \pi_k(\xi)$, $(c)$ follows by defining a uniform RV, $K$, on $[0:D-1]$ and $(d)$ follows from the fact that conditioning reduces entropy and from the Markov chain $Y-(X,S)-K(\xi)$, where the subscript $\xi$ is added to emphasize the dependence on the initial state. Finally, step $(e)$ shows that the marginal distribution does not depend on $\xi$; the marginal distribution of $p(s,q)$ for some $(s,q)\in A_i$ is
\begin{align*}
   p(s,q)&= \sum_k p(k,s,q)\\
         &= \frac{1}{D}\sum_k p(s_k(\xi),q_k(\xi))\\
         &\stackrel{(a)}= \frac{1}{D}\sum_k w_k(\xi) \pi(A_i)(s,q)\\
         &\stackrel{(b)} = \frac{1}{D}\pi(A_i)(s,q)\\
         &\stackrel{(c)}= \pi_{f_\beta}(s,q),
\end{align*}
where $(a)$ follows from \eqref{eq:D_limit}, $(b)$ follows from $\sum_k w_k(\xi)=1$ and the notation $\pi(A_i)(s,q)$ as the stationary distribution of the state $(s,q)$. Finally, $(c)$ follows from the property that in a periodic Markov chain each class has a uniform distribution.

The derivation above is used to provide an upper bound on $\nu^\pi_\beta(\xi)$:
\begin{align}\label{eq:lemma2_upper_ast}
    \nu^\pi_\beta(\xi) &=  \sum_{n=1}^\infty \beta^{nD} \sum_{k=0}^{D-1} \beta^k R(\pi_k(\xi),f_\beta) \nn\\
                       &=  \sum_{n=1}^\infty \beta^{nD} D \frac{1}{D} \sum_{k=0}^{D-1} \beta^k R(\pi_k(\xi),f_\beta)\nn\\
                       &\stackrel{(a)}\leq \sum_{n=1}^\infty \beta^{nD} D R(\pi_{f_\beta},f_\beta) \nn\\
                       &= \frac{D R(\pi_{f_\beta},f_\beta)}{1-\beta^D}\nn\\
                       &\stackrel{(b)}= \nu^\ast_\beta + R(\pi_{f_\beta},f_\beta) \frac{D - (1+\beta+\dots+\beta^{D-1})}{1-\beta^D} \nn\\
                       &\stackrel{(c)}= \nu^\ast_\beta + K_\beta,
\end{align}
where $(a)$ follows from \eqref{eq:CaseA_upper}, $(b)$ follows from the fact that $\nu^\ast_\beta = \frac{R(\pi_{f_\beta},f_\beta)}{1-\beta}$ and $(c)$ is just a notation $K_\beta$; by using L'Hopital's rule it can be noted that $\sup_\beta K_\beta<\infty$.

A new stationary policy, $f_\beta(\epsilon)$, is constructed by taking the policy $f_\beta$ and letting a path be with $\epsilon>0$ weights, so that the resultant graph is aperiodic. This modification is possible due to the aperiodicity assumption in Theorem \ref{theorem:main}. Moreover, $\epsilon$ is chosen to be small enough such that a node with modified outgoing edges still has positive probabilities for all other outgoing edges. The stationary distribution of this modified policy is denoted by $\pi(\epsilon)$, satisfying $\pi(0)=\pi_{f_\beta}$. The reward gained by the policy $f_\beta(\epsilon)$ is denoted by $\nu^\epsilon_\beta(\xi)$.

For $\epsilon\geq 0$, the stationary distribution exists and is unique, since it is a solution of linear equations. The stationary distribution is continuous with respect to $\epsilon$ since each entry in this vector is a rational function of $\epsilon$ and, clearly, $\epsilon=0$ is not a pole. We also know that mutual information is continuous w.r. to $\pi(\epsilon)$ and $f_\beta(\epsilon)$ and, therefore, the composition $I_{\pi(\epsilon)}\triangleq R(\pi(\epsilon),f_\beta(\epsilon))$ is continuous with respect to the parameter $\epsilon$.

By repeating the arguments in Lemma \ref{lemma:beta_upper_bound} with Lemma \ref{lemma:convergence} on the convergence rate of aperiodic Markov chains, it can be deduced that
\begin{align}\label{eq:new_policy_two_states}
 \sup_\beta |\nu^{\epsilon}_\beta(\xi) - \nu^{\epsilon}_\beta(\xi')| &< \infty,
\end{align}
for all $\xi,\xi'$. Note that \eqref{eq:new_policy_two_states} holds for all states and, specifically, for $\xi'=\pi(\epsilon)$.

For a fixed $\beta$, the continuity of each instantaneous reward in $\epsilon_\beta$ assures that there exists $\epsilon^\ast_\beta$ such that the difference between $|I_{\pi(0)}-I_{\pi(\epsilon_\beta)}|< 1-\beta$ for all $\epsilon_\beta<\epsilon^\ast_\beta$. By combining this continuity and \eqref{eq:new_policy_two_states}, we have
\begin{align}\label{eq:epsilon_continuity}
 \sup_\beta |\nu^{\epsilon^\ast_\beta}_\beta(\xi) - \nu^\ast| &< \infty,
\end{align}
for all $\xi$.

The reward $\nu^{\epsilon^\ast_\beta}_\beta(\xi)$ is achievable for initial state $\xi$ by the following trivial policy: for some initial state $\xi$ use policy $f_\beta(\epsilon)$ and, otherwise, use $f_\beta$. Clearly, the constructed policy does not change rewards for initial states other than $\xi$, and the optimality of $f_\beta$ gives that $\nu^{\epsilon^\ast_\beta}_\beta(\xi)\leq \nu_\beta(\xi)$.

For all initial states,
\begin{align}\label{eq:almost_final_lemma_2}
  \nu_\beta^\pi(\xi) &\stackrel{(a)}\leq  \nu_\beta^\ast + K_\beta \nn\\
  &\stackrel{(b)}= \nu^{\epsilon^\ast_\beta}_\beta(\xi) + K_{\beta,\epsilon^\ast_\beta}(\xi) + K_\beta \nn\\
  &\stackrel{(c)}\leq \nu_\beta(\xi) + K_{\beta,\epsilon^\ast_\beta}(\xi) + K_\beta \nn\\
  &\stackrel{(d)}= \nu^\pi_\beta(\xi) + K'_{\beta}(\xi) + K_{\beta,\epsilon^\ast_\beta}(\xi) + K_\beta,
\end{align}
where $(a)$ follows from inequality \eqref{eq:lemma2_upper_ast}, $(b)$ follows from \eqref{eq:epsilon_continuity} and the notation $K_{\beta,\epsilon^\ast_\beta}(\xi)$ for the difference between the rewards, $(c)$ follows by the optimaility of the policy $f_\beta$ and $(d)$ follows from Lemma \ref{lemma:beta_upper_bound} and the notation $\nu_\beta(\xi) = \nu^\pi_\beta(\xi) + K'_{\beta}(\xi)$, where $\sup_\beta |K'_{\beta,\epsilon}(\xi)| <\infty$.

By subtracting $\nu_\beta^\pi(\xi)$ from \eqref{eq:almost_final_lemma_2}, one can conclude that for all $\beta$
\begin{align}\label{eq:final_lemma_2}
  |\nu_\beta^\ast - \nu_\beta^\pi(\xi) |&\leq \max\{ |K'_{\beta}(\xi) + K_{\beta,\epsilon^\ast_\beta}(\xi)| , K_\beta \}.
\end{align}
By taking a supremum on both sides of \eqref{eq:final_lemma_2}, the proof of Lemma \ref{lemma:beta_upper_ast} is concluded.
\end{proof}

\bibliography{ref}
\bibliographystyle{IEEEtran}

\end{document}